 \newcommand\vldbdoi{10.14778/3510397.3510401}
 \newcommand\vldbpagestyle{plain} 
 \newcommand\vldbpages{XXX-XXX}
 \newcommand\vldbvolume{15}
 \newcommand\vldbissue{5}
 \newcommand\vldbyear{2022}
 \newcommand\vldbauthors{\authors}
 \newcommand\vldbtitle{\shorttitle} 
 \newcommand\vldbavailabilityurl{https://github.com/shaleen/rankedenumprojections}
\definecolor{codegreen}{rgb}{0,0.6,0}
\definecolor{codegray}{rgb}{0.5,0.5,0.5}
\definecolor{codepurple}{HTML}{C42043}
\definecolor{backcolour}{HTML}{F2F2F2}
\definecolor{bookColor}{cmyk}{0,0,0,0.90}  
\newsavebox{\bullmagenta}
\newsavebox{\bullolive}
\newsavebox{\bullblue}
\newsavebox{\bullteal}
\newsavebox{\bullbrown}
\newsavebox{\bullorange}
\sbox\bullmagenta{\tikz{\draw[magenta,fill=magenta] circle (.3ex);}}
\sbox\bullolive{\tikz{\draw[olive,fill=olive] circle (.3ex);}}
\sbox\bullblue{\tikz{\draw[blue,fill=blue] circle (.3ex);}}
\sbox\bullteal{\tikz{\draw[teal,fill=teal] circle (.3ex);}}
\sbox\bullbrown{\tikz{\draw[brown,fill=brown] circle (.3ex);}}
\sbox\bullorange{\tikz{\draw[orange,fill=orange] circle (.3ex);}}
\definecolor{purple}{rgb}{0.65,0.05,0.3}
\definecolor{red}{rgb}{1,0,0}
\definecolor{RED}{rgb}{1,0,0}
\newcommand{\out}[0]{\mathsf{output}}
\newcommand{\cut}[1]{}   
\newcommand{\mytag}[2]{%
	\text{#1}%
	\@bsphack
	\protected@write\@auxout{}%
	{\string\newlabel{#2}{{#1}{\thepage}}}%
	\@esphack
}
\newenvironment{packed_item}{
	\begin{itemize}
		\setlength{\itemsep}{1pt}
		\setlength{\parskip}{0pt}
		\setlength{\parsep}{0pt}
	}
	{\end{itemize}}
\newcommand{\anst}[1]{\mathsf{output(#1)}}
\newcommand{\SUM}{\sqlhighlight{SUM}}
\newcommand{\lexi}{\sqlhighlight{LEXICOGRAPHIC}}
\newcommand{\introparagraph}[1]{\medskip \noindent {\bf  #1.}}  
\newcommand{\setof}[2]{\{{#1}\mid{#2}\}}        
\newtheorem{definition}{Definition}
\newtheorem{lemma}{Lemma}
\newtheorem{example}{Example}
\newtheorem{theorem}{Theorem}
\providecommand{\bA}[0]{\mathbf{A}}
\providecommand{\bu}[0]{\mathbf{u}}
\providecommand{\mA}[0]{\mathcal{A}}
\providecommand{\pq}[0]{\textsf{PQ}}
\providecommand{\mL}[0]{\mathcal{L}}
\providecommand{\prnt}[1]{\texttt{parent}({#1})}
\providecommand{\child}[1]{\texttt{child}({#1})}
\providecommand{\key}[1]{\texttt{anchor}({#1})}
\providecommand{\rf}[0]{\texttt{rank}}
\providecommand{\fhw}[0]{\texttt{fhw}}
\providecommand{\subw}[0]{\texttt{subw}}
\providecommand{\panda}[0]{\texttt{PANDA}}
\providecommand{\edges}[0]{\mathcal{E}}
\providecommand{\domain}[0]{\mathbf{dom}}
\providecommand{\htree}[0]{\mathscr{T}}
\providecommand{\bag}[0]{\mathscr{B}}
\providecommand{\fhw}[1]{\mathsf{fhw}(#1)}
\providecommand{\mO}[0]{\mathcal{O}}
\providecommand{\mT}[0]{\mathcal{T}}
\providecommand{\rank}[0]{\mathsf{rank}}
\providecommand{\eat}[1]{}
\providecommand{\sqlhighlight}[1]{{\color{codepurple} \ttfamily #1}}
\begin{document}
	
	\title{Ranked Enumeration of Join Queries with Projections}
	\author{Shaleen Deep}
	\affiliation{	
	\institution{University of Wisconsin - Madison}}
	\email{shaleen@cs.wisc.edu}
		
	\author{Xiao Hu}
	\affiliation{	
	\institution{Duke University}}
	\email{xh102@cs.duke.edu}
	
	\author{Paraschos Koutris}
		\affiliation{	\institution{University of Wisconsin - Madison}}
	\email{paris@cs.wisc.edu}

	\begin{abstract}
Join query evaluation with ordering is a fundamental data processing task in relational database management systems. \textsf{SQL} and custom graph query languages such as \textsf{Cypher} offer this functionality by allowing users to specify the order via the  \sqlhighlight{ORDER BY} clause. In many scenarios, the users also want to see the first $k$ results quickly (expressed by the \sqlhighlight{LIMIT} clause), but the value of $k$ is not predetermined {as user queries are arriving in an online fashion}. Recent work has made considerable progress in identifying optimal algorithms for ranked enumeration of join queries that do \emph{not} contain any projections. In this paper, we initiate the study of the problem of enumerating results in ranked order for queries {\em with projections}. Our main result shows that for any acyclic query, it is possible to obtain a near-linear (in the size of the database) delay algorithm after only a linear time preprocessing step for two important ranking functions: sum and lexicographic ordering. For a practical subset of acyclic queries known as star queries, we show an even stronger result that allows a user to obtain a smooth tradeoff between faster answering time guarantees using more preprocessing time. Our results are also extensible to queries containing cycles and unions. We also perform a comprehensive experimental evaluation to demonstrate that our algorithms, which are simple to implement, improve up to three orders of magnitude in the running time over state-of-the-art algorithms implemented within open-source RDBMS and specialized graph databases.
\end{abstract}

	\maketitle
	
	\pagestyle{\vldbpagestyle}
	\begingroup\small\noindent\raggedright\textbf{PVLDB Reference Format:}\\
	\vldbauthors. \vldbtitle. PVLDB, \vldbvolume(\vldbissue): \vldbpages, \vldbyear.\\
	\href{https://doi.org/\vldbdoi}{doi:\vldbdoi}
	\endgroup
	\begingroup
	\renewcommand\thefootnote{}\footnote{\noindent
		This work is licensed under the Creative Commons BY-NC-ND 4.0 International License. Visit \url{https://creativecommons.org/licenses/by-nc-nd/4.0/} to view a copy of this license. For any use beyond those covered by this license, obtain permission by emailing \href{mailto:info@vldb.org}{info@vldb.org}. Copyright is held by the owner/author(s). Publication rights licensed to the VLDB Endowment. \\
		\raggedright Proceedings of the VLDB Endowment, Vol. \vldbvolume, No. \vldbissue\ %
		ISSN 2150-8097. \\
		\href{https://doi.org/\vldbdoi}{doi:\vldbdoi} \\
	}\addtocounter{footnote}{-1}\endgroup
	
	\ifdefempty{\vldbavailabilityurl}{}{
		\vspace{.3cm}
		\begingroup\small\noindent\raggedright\textbf{PVLDB Artifact Availability:}\\
		The source code, data, and/or other artifacts have been made available at \url{\vldbavailabilityurl}.
		\endgroup
	}


\section{Introduction} \label{intro}

Join processing is one of the most fundamental problems in database research with applications in many areas such as anomaly and community detection in social media, fraud detection in finance, and health monitoring. In many data analytics tasks, it is also required to rank the query results in a specific order. This functionality is supported by the \sqlhighlight{ORDER BY} clause in \textsf{SQL, Cypher} and \textsf{SPARQL}. We demonstrate a practical example use-case.

\begin{example} \label{ex:1}
Consider the DBLP dataset as a single relation $R(A,B)$, indicating that $A$ is an author of paper $B$. Given an author $a$, the function \textsf{h-index}$(a)$ returns the h-index of $a$. A popular analytical task asks to find all co-authors who authored at least one paper together. Additionally, the pairs of authors should be returned in decreasing order of the sum of their h-indexes, since users are only interested in the top-100 results. The following SQL query captures this task.
	\lstset{upquote=true}
	
	\lstdefinestyle{mystyle}{
		commentstyle=\color{codegreen},
		keywordstyle=\color{codepurple},
		stringstyle=\color{codepurple},
		basicstyle=\small\ttfamily,
		breaklines=true,
		columns=fullflexible,
		frame=single,
	}
	\lstset{style=mystyle}
	
	\newcommand\numberstyle[1]{%
		
	}
	\begin{lstlisting}[ language=SQL,
	deletekeywords={IDENTITY},
	deletekeywords={[2]INT},
	morekeywords={clustered},
	mathescape=true,
	xleftmargin=0pt,
	framexleftmargin=0pt,
	frame=tb,
	framerule=0pt ]
	SELECT DISTINCT $R_1.A, R_2.A$ FROM $R$ AS $R_1$, $R$ AS $R_2$
	WHERE $R_1.B = R_2.B$ 
	ORDER BY h_index($R_1.A$) + h_index($R_2.A$) LIMIT 100;
	\end{lstlisting}
\end{example}

The above task is an example of a join query with projections (join-project queries) because attribute $B$ has been projected out (i.e. it is not present in the selection clause). The \sqlhighlight{DISTINCT} clause ensures that there are no duplicate results. 

\smallskip
{\introparagraph{Importance of joins with projections} Join queries containing projections appear in several practical applications such as recommendation systems~\cite{feng2019attention, li2019supervised}, similarity search~\cite{yu2012user}, and network reachability analysis~\cite{elmacioglu2005six, biryukov2008co}. In fact, as Manegold et al.~\cite{manegold2009database} remarked, joins in real-life queries almost always come with projections over certain attributes. Matrix multiplication~\cite{amossen2009faster}, path queries (equivalent to sparse matrix multiplication), and reachability queries~\cite{green2013datalog} are all examples of join-project queries that have widespread applications in linear and relational algebra. Other data models such as \textsf{SPARQL}~\cite{perez2009semantics} also support the projection operator and evaluation of join-project queries has been a subject of research, both theoretically~\cite{angles2011subqueries} and practically~\cite{corby2007implementation}. In fact, as \textsf{SPARQL} supports \sqlhighlight{ORDER BY}/\sqlhighlight{LIMIT} operator, ranked enumeration for  queries (that include projections) and top-k over knowledge bases in the \textsf{SPARQL} model has also been explicitly studied recently~\cite{leeka2016quark, christmann2021efficient}. As many practical \textsf{SPARQL} evaluation systems~\cite{harris2005sparql, sequeda2013ultrawrap} evaluate queries using RDBMS, it is important to develop efficient algorithms for such queries in the relational model. Similarly,~\cite{graphgen2017} argued that since a large fraction of the data of interest resides in RDBMS, efficient execution of graph queries (such as path and reachability queries that contain projections and ranking) using RDBMS as the backend is valuable. In the relational setting, join-project queries also appear in the context of probabilistic databases (see Section 2.3 in~\cite{dalvi2007efficient}). This motivates us to develop efficient algorithms, both in theory and practice, that address the challenge of incorporating the ranked enumeration paradigm for join-project queries.}

\smallskip
\introparagraph{Prior Work} Efficient evaluation of join queries in the presence of ranking functions has been a subject of intense research in the database community. {Recent work~\cite{tziavelis2020optimal, deep2019ranked, yang2018any, chang2015optimal, tziavelis2021beyond} has made significant progress in identifying optimal algorithms for enumerating query results in ranked order. In each of these works, the key idea is to perform on-the-fly sorting of the output via the use of priority queues by taking into account the query structure.~\cite{chang2015optimal} considered the problem of top-k tree matching in graphs and proposed optimal algorithms by combining Lawler's procedure~\cite{lawler1972procedure} with the ranking function.~\cite{tziavelis2020optimal} introduced multiple dynamic programming algorithms that lazily populate the priority queues.~\cite{yang2018any} took a different approach  where all possible candidates were eagerly inserted into the priority queues and~\cite{deep2019ranked} generalized these ideas to present a unified theory of ranked enumeration for full join queries. Very recently,~\cite{tziavelis2021beyond} was able to extend some of these results to non equi-joins as well.} The performance metric for enumerating query results is the \emph{delay}~\cite{bagan2007acyclic}, defined as the time difference between any two consecutive answers. Prior work was able to obtain logarithmic delay guarantees, which were shown to be optimal. {However, all prior work in this space suffer from one fundamental limitation: it assumes that the join query is full, i.e. there are no non-trivial projections involved. In fact,~\cite{tziavelis2021beyond} explicitly remarks that in presence of projections, the strong guarantees obtained for full queries do not hold anymore. Their suggestion to handle this limitation is to convert the query with projections into a projection-free result, i.e. materialize the join query result, apply the projection filter, and then rank the resulting output. However, this conversion requires an expensive materialization step. An alternate approach is to modify the weights of the tuples/attribute values to allow re-use of existing algorithms (we describe this approach in~\autoref{sec:framework}). As we show later, this approach also does not fare any better and requires enumerating the full output of the join query, which can be polynomially slower than the optimal solution.}

On the practical side, all RDBMS and graph processing engines evaluate join-project queries in the presence of ranking functions by performing three operations in serial order: $(i)$ materializing the result of the full join query, $(ii)$ de-duplicating the query result (since the query has \sqlhighlight{DISTINCT} clause), and $(iii)$ sorting the de-duplicated result according to the ranking function. The first step in this process is a show-stopper. Indeed, the size of the full join query result can be orders of magnitude larger than the size of the final output after applying projections and de-duplicating it. Thus, the materialization and the de-duplication step introduces significant overhead since they are blocking operators. Further, if the user is interested in only a small fraction of the ordered output, the user still has to wait until the entire query completes even to see the top-ranked result.

 \subsection{Our Contribution and Key Ideas}   

In this paper, we initiate the study of ranked enumeration over join-project queries. We focus on two important ranking functions: \SUM\ ($f (x,z) = x+z$) and \lexi\ ($f (x,z) = x,z$) for two reasons. First, both of these functions are very useful in practice~\cite{ilyas2008survey}. Second, extending the algorithmic ideas to other functions, such as \sqlhighlight{MIN}, \sqlhighlight{MAX}, \sqlhighlight{AVG} and circuits that use sum and products, is quite straightforward. 
More specifically, we make three contributions. 

\smallskip
\introparagraph{1. Enumeration with Formal Delay Guarantees}
Our first main result shows that for any {\em acyclic} query (the most common fragment of queries in practice~\cite{bonifati2020analytical}) with arbitrary projection attributes, it is possible to develop efficient enumeration algorithms (\autoref{sec:acyclic}).

\begin{theorem} \label{thm:general}
	For an acyclic join-project query $Q$, an instance $D$, and a ranking function $\rank \in \{\text{\SUM}, \text{\lexi}\}$, the query result $Q(D)$ can be enumerated according to $\rank$ with worst-case delay $O(|D| \log |D|)$, after $O(|D|)$ preprocessing time.
\end{theorem}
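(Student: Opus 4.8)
The plan is to reduce ranked enumeration of the projected output to a lazy, priority-queue-driven search over the free variables of $Q$, in which the projected-out (bound) variables are never materialized but are instead handled by a near-linear-time witness oracle. I would exploit the single most useful property of both ranking functions: since $\rank \in \{\text{\SUM}, \text{\lexi}\}$ depends only on the values assigned to the free variables $x_1,\dots,x_k$, every tuple of the full join $\Qfull(D)$ that projects onto the same output tuple $t$ carries the same rank $\rank(t)$. Consequently the ranking is well defined directly on $Q(D)$, and the only role of the bound variables is to certify the existence of a witnessing extension. This is what lets us avoid the materialization trap flagged in the introduction: we rank over the free-variable space and treat projection purely as an existence test.

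For preprocessing within the $O(|D|)$ budget I would run a full reducer (Yannakakis semi-join reduction) on a join tree of $Q$ to delete all dangling tuples, so that every surviving tuple participates in at least one answer of the full join, and build hash indexes on the join keys of every atom. Crucially I would \emph{not} sort any relation globally, since sorting is super-linear and would break the linear preprocessing bound; all ordering is deferred to enumeration, where the per-delay budget of $O(|D|\log|D|)$ can absorb the heap operations. For \SUM\ I would, in linear total time, build min-heaps over the candidate values of each free variable keyed by weight; for \lexi\ the variable order is the lexicographic order itself.

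The enumeration fixes an order $x_1,\dots,x_k$ of the free variables and walks a prefix tree whose node at depth $i$ is a partial assignment $(x_1=a_1,\dots,x_i=a_i)$, maintaining a global priority queue of \emph{live} prefixes keyed by the best achievable rank of any completion. Two oracles, both realizable in $O(|D|)$ by a Yannakakis-style traversal of the reduced join tree, drive the search: a witness oracle that decides whether the residual Boolean acyclic query obtained by fixing the current prefix is satisfiable, and an extension oracle that, given a live prefix, returns the next value of $x_{i+1}$ (in weight order for \SUM, in domain order for \lexi) that keeps the prefix live, together with the minimum additional rank attainable; the latter is a monotone dynamic program over the join tree. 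Popping the best prefix, extending it to its best live completion, and pushing the Lawler-style successors (the next sibling at the current level and the refined child) guarantees that each distinct output is generated exactly once, so deduplication is automatic and the polynomial blow-up of the full join is never incurred, and that only live prefixes ever enter the queue.

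For the delay I would argue that between two consecutive outputs the procedure performs only a bounded (in the query size) number of queue operations, each costing $O(\log|D|)$ over a queue of size polynomial in $|D|$, plus a bounded number of oracle evaluations of cost $O(|D|)$; since producing the ranked successors also requires heap maintenance over candidate sets of size $O(|D|)$, the per-output cost, and hence the worst-case delay, is $O(|D|\log|D|)$. The main obstacle is exactly the interaction of deduplication, ranked order, and non-free-connex structure: for a query such as $\exists y\,(R(x,y)\wedge S(y,z))$ the free variables do not form a connex subtree, so testing whether a free-variable prefix has a witness genuinely requires a linear scan rather than constant time, which is what forces the $O(|D|)$ factor instead of $O(\log|D|)$. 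The delicate part of the proof is therefore establishing the two invariants that make the Lawler analysis go through: that the oracles never enqueue a dead prefix (so no pop is ever wasted on a non-witnessed or duplicate candidate), and that the completion-rank dynamic program is monotone, so that the queue order is consistent with $\rank$ and a bounded number of pops suffices per output. Once these invariants hold, correctness and ordering follow from the monotonicity of the completion bound together with the rank being a function of the free variables alone, and the delay bound follows from the standard Lawler accounting.
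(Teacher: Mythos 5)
Your reduction to a Lawler-style prefix-tree search over the free variables is a genuinely different route from the paper's proof (the paper embeds one priority queue per anchor value at each join-tree node, chains materialized ``cells'', and deduplicates explicitly at every node), and several of your ingredients are sound: the rank depends only on the free variables, liveness of a prefix is checkable by semi-joins in $O(|D|)$, and the minimum completion cost of a live prefix is computable by a min-sum message pass over the join tree in $O(|D|)$. For \lexi\ your scheme is essentially the paper's own lexicographic algorithm. But for \SUM\ there is a genuine gap: you advance siblings \emph{in weight order}, i.e., the next candidate value of $x_{i+1}$ is the one with the next-smallest weight, each pushed with its own completion-inclusive key. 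Under \SUM\ the weight of a value says nothing about the rank of the solutions below it, and this breaks the covering invariant that the Lawler accounting needs. Concretely, take $Q(x_1,x_2)=\pi_{x_1,x_2}(R(x_1,y)\Join S(y,x_2))$ with $R=\{(c,y_1),(b,y_2),(a,y_3)\}$, $S=\{(y_1,u),(y_2,v),(y_3,z)\}$ and weights $w(c)=1$, $w(b)=2$, $w(a)=3$, $w(u)=10$, $w(v)=100$, $w(z)=1$, so the outputs are $(c,u)$ with rank $11$, $(b,v)$ with rank $102$, and $(a,z)$ with rank $4$. Your search pushes $(c)$ first (smallest weight, key $11$), pops it, pushes its child $(c,u)$ with key $11$ and its weight-order sibling $(b)$ with key $102$; the next pop emits $(c,u)$ of rank $11$ while the true minimum $(a,z)$ of rank $4$ is still invisible, blocked behind the unpopped sibling $(b)$. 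The same mechanism also destroys the delay bound: with values $v_1,\dots,v_N$ of $x_1$ whose keys $w(v_i)+\mathrm{mincompletion}(v_i)$ are decreasing in $i$ while their weights are increasing, the queue must pop all $\Theta(|D|)$ siblings one by one, each pop triggering an $O(|D|)$ oracle call, before the first answer can be emitted, giving first-result delay $\Omega(|D|^2)$.

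The repair is to order siblings by their completion-inclusive key rather than by weight --- equivalently, to make the ``next sibling'' element a Murty-style residual subspace (prefix together with all remaining values of $x_{i+1}$) keyed by the minimum completion over that whole set; a single $O(|D|)$ DP pass can produce the keys of all candidate values of $x_{i+1}$ simultaneously, which you then hold in a per-prefix heap. You additionally need a tie-breaking rule in the global queue (e.g., among equal keys always pop the deepest prefix): your claim that only ``a bounded number of queue operations'' separate consecutive outputs is an amortized statement, and with arbitrary tie-breaking many equal-key internal prefixes can be popped between two emissions. With those two repairs your two invariants become sufficient and the argument goes through, still by a different decomposition than the paper's; as written, however, the scheme is incorrect for \SUM.
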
 

This result implies that top-$k$ results in $Q(D)$ can be enumerated in $O(k |D| \log |D|)$ time.~\autoref{thm:general} is able to recover the prior results for ranked enumeration of full queries as well~\cite{deep2019ranked}. The key idea of our algorithm is to develop multiway join plans~\cite{ngo2012worst} by exploiting the properties of join trees. Embedding the priority queues in the join tree strategically allows us to generate the sorted output on-the-fly and avoid the binary join plans that all state-of-the-art systems use. Further, since we formulate the problem in terms of delay guarantees, it allows our techniques to be limit-aware: for small $k$, the answering time is also small.

\smallskip
\introparagraph{2. Faster Enumeration with More Preprocessing}
Our second contribution is an algorithm that allows for a smooth tradeoff between preprocessing time and delay guarantee for a subset of join-project queries known as {\em star} queries over binary relations of the form $R_i(A_i, B)$ (denoted as $Q^\star_m)$ (\autoref{sec:star}):

\lstset{upquote=true}

\lstdefinestyle{mystyle}{
	commentstyle=\color{codegreen},
	keywordstyle=\color{codepurple},
	stringstyle=\color{codepurple},
	basicstyle=\small\ttfamily,
	breaklines=true,
	columns=fullflexible,
	frame=single,
}
\lstset{style=mystyle}

\begin{lstlisting}[ language=SQL,
deletekeywords={IDENTITY},
deletekeywords={[2]INT},
morekeywords={clustered},
framesep=2pt,
mathescape=true,
xleftmargin=2pt,
framexleftmargin=1pt,
frame=tb,
framerule=0pt ]
SELECT DISTINCT $A_1, \dots, A_m$ FROM $R_1, \dots, R_m$
WHERE $R_1.B = \dots = R_m.B$ ORDER BY $A_1 + \dots + A_m$ LIMIT k;
\end{lstlisting}


\begin{theorem} \label{thm:ranked:star}
	For a star join-project query $Q^\star_m$, an  instance $D$, and a ranking function $\rank \in \{\text{\SUM}, \text{\lexi}\}$, the query result $Q(D)$ can be enumerated according to $\rank$ with worst-case delay $O\left(|D|^{1 - \epsilon} \log |D|\right)$, using $O\left(|D|^{1 + (m - 1) \epsilon}\right)$ preprocessing time and $O\left(|D|^{m(1 - \epsilon)}\right)$ space, for any $0 \leq \epsilon \leq 1$.
\end{theorem}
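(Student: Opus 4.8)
The plan is to exploit the symmetric structure of $Q^\star_m$ through a degree-based partition of the shared join attribute $B$, handling the two resulting parts asymmetrically: the low-degree part is cheap to materialize, whereas the high-degree part is few in number and can therefore be streamed on the fly with a controlled deduplication overhead. Fix a threshold $\tau = |D|^\epsilon$ and, writing $S_i(b) = \{a_i : (a_i,b)\in R_i\}$, call a value $b$ \emph{light} if its total degree $\sum_{i=1}^m |S_i(b)| \le \tau$ and \emph{heavy} otherwise. Since the degrees sum to $|D|$, there are at most $|D|^{1-\epsilon}$ heavy values. The output is exactly $\bigcup_b\big(S_1(b)\times\cdots\times S_m(b)\big)$ with $b$ projected away and duplicates removed, so I would split this union into its light-$b$ and heavy-$b$ contributions, produce each in $\rank$-order, and merge them. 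Both \SUM\ and \lexi\ are handled uniformly, the only difference being the comparator used for ordering and for the per-$b$ Cartesian-product enumeration.

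For preprocessing, the key observation is that the light contribution is small as a multiset: for a light $b$ we have $\prod_i |S_i(b)| \le \tau^{m-1}|S_1(b)|$, so $\sum_{\text{light }b}\prod_i |S_i(b)| \le \tau^{m-1}\sum_b |S_1(b)| \le |D|^{(m-1)\epsilon}\cdot|D| = |D|^{1+(m-1)\epsilon}$, matching the preprocessing budget. I would therefore enumerate all light-$b$ product tuples, deduplicate them with a hash table, and sort them once into a single $\rank$-sorted stream $T_{\text{light}}$. The heavy part I would \emph{not} materialize, since its total size can be $\Omega(|D|^m)$; instead, during preprocessing I only sort each list $S_i(b)$ for the $\le |D|^{1-\epsilon}$ heavy $b$, so that the product $S_1(b)\times\cdots\times S_m(b)$ can be enumerated incrementally in $\rank$-order at $O(\log|D|)$ amortized cost per tuple. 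This is precisely ranked enumeration of a \emph{full} join, which I can invoke as the black box underlying \autoref{thm:general}.

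For enumeration I would run a single global priority queue merging at most $|D|^{1-\epsilon}+1$ sorted streams: one per heavy $b$ plus $T_{\text{light}}$. The queue is keyed by the pair (rank value, tuple identity), which forces all copies of a given output tuple to be contiguous at the front once their common key becomes minimal, regardless of how other tuples of the same rank interleave. I repeatedly pop the minimum; when its tuple is new I emit it and then drain its remaining contiguous copies, advancing each of those streams by pushing its successor. The crucial point for the delay is that a fixed output tuple occurs at most once inside each stream (a stream is either the deduplicated $T_{\text{light}}$ or the distinct tuples from a single $b$), hence at most $|D|^{1-\epsilon}+1$ times across the whole merge. Consequently at most $O(|D|^{1-\epsilon})$ duplicate pops separate two consecutive distinct outputs, each costing $O(\log|D|)$ for the heap operation and successor generation, giving the claimed worst-case delay $O(|D|^{1-\epsilon}\log|D|)$. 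At $\epsilon=0$ every $b$ is heavy and nothing is precomputed, so the bound degenerates to the $O(|D|\log|D|)$ delay of \autoref{thm:general}, as it should.

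The hard part will be the space bound rather than the deduplication logic, which is routine once the contiguity invariant is in place. I must argue that the global queue, together with the on-the-fly frontiers of the $\le |D|^{1-\epsilon}$ per-heavy-$b$ Cartesian-product enumerators, stays within $O(|D|^{m(1-\epsilon)})$; bounding these frontiers is where I expect the main technical effort to lie, since naively a lazy $m$-way product enumerator can retain a frontier that grows with the number of tuples it has emitted. I would control this by charging each retained frontier element to a heavy witness and a coordinate choice, so that the simultaneous frontiers never exceed the number of distinct heavy output tuples the algorithm could be straddling. The remaining obligation is purely bookkeeping: verifying that the single threshold $\tau=|D|^\epsilon$ meets all three resource bounds across the whole range $0\le\epsilon\le1$, and confirming the two endpoints (recovering \autoref{thm:general} at $\epsilon=0$ and full materialization at $\epsilon=1$).
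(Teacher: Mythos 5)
Your delay and preprocessing-time arguments are essentially sound: with ties broken by tuple identity, duplicates of the minimal tuple are contiguous across the $|D|^{1-\epsilon}+1$ sorted streams, each distinct answer occurs at most once per stream, so consecutive distinct outputs are separated by $O(|D|^{1-\epsilon})$ pops of cost $O(\log|D|)$ each; and the bound $\sum_{\text{light } b}\prod_i|S_i(b)|\le |D|^{1+(m-1)\epsilon}$ is correct (the extra $\log|D|$ from sorting is a blemish the paper's own preprocessing shares). The genuine gap is the \emph{space} bound, and it is not where you located it. You flag the lazy frontiers of the heavy-$b$ enumerators as the hard part, but in the paper's model (\autoref{sec:problem-parameter}) the space bound refers to the data structure built during preprocessing, and your preprocessing materializes $T_{\text{light}}$. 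That set can genuinely contain $\Theta\left(|D|^{1+(m-1)\epsilon}\right)$ \emph{distinct} tuples: take $|D|^{1-\epsilon}$ disjoint groups, each a single join value $b$ with $\Theta(|D|^{\epsilon})$ fresh $A_i$-values in every relation; every $b$ is light and every product tuple is distinct. Since $1+(m-1)\epsilon > m(1-\epsilon)$ exactly when $\epsilon > (m-1)/(2m-1)$, your structure violates the claimed space bound on a wide range of $\epsilon$ (for $m=2$, $\epsilon=1/2$: you store $\Theta(|D|^{3/2})$ against a claim of $O(|D|)$). The theorem's exponent $m(1-\epsilon)$ is itself inconsistent with the paper's own construction, which yields space $O(|D|^{m\epsilon})$, but your usage exceeds that reading too, by a factor $|D|^{1-\epsilon}$ for every $\epsilon<1$.

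The failure is baked into the orientation of your decomposition rather than into a missing lemma: partitioning on the degree of the join attribute $B$ forces the side with few streams (heavy $b$'s) to be the streamed side, and therefore forces the light side---which is small in \emph{time} but not in \emph{cardinality}---to be stored. The paper inverts both choices. It classifies the \emph{output} values $a_i \in \domain(A_i)$ as heavy when $|\sigma_{A_i=a_i}(R_i)|\ge\delta$ with $\delta=|D|^{1-\epsilon}$, materializes only the all-heavy output $\mO^H=\pi_{\bA}\left(R_1^H\Join\cdots\Join R_m^H\right)$---at most $(|D|/\delta)^m=|D|^{m\epsilon}$ tuples, since each attribute has at most $|D|/\delta$ heavy values---and streams the light output through $m$ subqueries $Q_i$ whose join tree is rooted at the light relation $R_i$; the fine-grained analysis of the acyclic algorithm (\autoref{sec:finegrained}) bounds each such subquery's delay by the root's degree cap, $O(\delta\log|D|)$, so an $(m+1)$-way merge suffices. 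In other words, the part that is expensive to enumerate is exactly the part that is cheap to store, which is what the space bound needs. A local repair inside your $B$-degree scheme does not work: refusing to materialize $T_{\text{light}}$ means merging up to $|D|$ per-$b$ streams with deduplication, which destroys the $O(|D|^{1-\epsilon}\log|D|)$ delay, so meeting all three bounds essentially requires switching to the paper's output-degree decomposition.
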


\autoref{thm:ranked:star} enables users to carefully control the space usage, preprocessing time and delay. For both~\autoref{thm:general} and~\autoref{thm:ranked:star}, we can show that the delay guarantee is optimal subject to a conjecture about the running time of star join-project queries in \autoref{sec:optimality}.

\smallskip
\introparagraph{3. Experimental Evaluation}
Our final contribution is an extensive experimental evaluation for practical join-project queries on real-world datasets (\autoref{sec:experiments}). To the best of our knowledge, this is the first comprehensive evaluation of how existing state-of-the-art relational and graph engines execute join-project queries in the presence of ranking. We choose MariaDB, PostgreSQL, two popular open-source RDBMS, and Neo4j as our baselines. We highlight two key results. First, our experimental evaluation demonstrates the bottleneck of serially performing materialization, de-duplicating, and sorting. Even with \sqlhighlight{LIMIT 10} (i.e. return the top-$10$ ranked results), the engines are orders of magnitude slower than our algorithm. For some queries, they cannot finish the execution in a reasonable time since they run out of main memory. On the other hand, our algorithm has orders of magnitude smaller memory footprint that allows for faster execution. 
The second key result is that all baseline engines are agnostic of the ranking function. The execution time of the queries is identical for both the sum and lexicographic ranking function. However, our algorithm uses the additional structure of lexicographical ordering and can execute queries $2-3\times$ faster than the sum function. For queries with unions and cycles, our algorithm maintains its performance improvement over the baselines. 

	\section{Problem Setting} \label{sec:framework}

In this section we present the basic notions and terminology, and then discuss our framework. We focus on the class of {\em join-project queries}, which are defined as
$$ \label{eq:q}
Q = \pi_{\bA} (R_1(\bA_1) \Join R_2(\bA_2) \Join \ldots \Join R_m(\bA_m)) 
$$

Here, each relation has schema $R_i(\bA_i)$, where $\bA_i$ is an ordered set of attributes. Let $\mathbb{A} = \bA_1 \cup \bA_2 \cup \cdots \cup \bA_m$. The projection operator $\pi_{\bA}$ only keeps a subset of the attributes from $\mathbb{A}$. The join we consider is {\em natural join}, where tuples from two relations can be joined if they share the same value on the common attributes.  A join-project query is {\em full} if $\bA = \mathbb{A}$. Unlike prior work on ranked enumeration, in this paper we place no restriction on the set of attributes in the projection operator.  For simplicity of presentation, we do not consider selections; these can be easily incorporated into our algorithms. 
As an example, the SQL query in~\autoref{ex:1} corresponds to the following query: $\pi_{A,B} (R_1(A,C) \Join R_2(B,C) )$. 

A database $D$ is a set of relations, whose size is defined as the total number of tuples in all relations denoted as $|D|$.  For tuple $t$, we will use the shorthand $t[A]$ to denote $\pi_A(t)$. {We use $\edges$ to denote the set of all relations in the database.}

\introparagraph{Acyclic Queries and Join Trees} A join-project query $Q$ is \emph{acyclic} if and only if it admits a {\em join tree} $\mT$. In a join tree, each relation is a node, and for each attribute $A$, all nodes in the tree containing $A$ form a connected subtree.  For simplicity, we will use node $i$ to refer to the node corresponding to relation $R_i$ in $\mT$. 
Given a join tree $\mT$, pick any node to be the root, and then orient each edge towards the root. Let $\mT_i$ be the subtree rooted at node $R_i$. 
Let $\prnt{R_i}$ be the (unique) parent of $R_i$, and $\key{R_i} = R_i \cap \prnt{R_i}$ {to be the {\em anchor} attributes between $R_i$ and its parent}.  Let $\child{R_i}$ be the set of children nodes of $R_i$. Finally, we fix the ordering of the projection attributes in $\bA$ to be the order of visiting them in the in-order traversal of $\mT$. {Finally, we define $\bA^\pi_i$ as the ordered set of projection attributes in subtree rooted at node $i$ (including projection attributes of node $i$).} As a convention,  we define $\key{r} = \emptyset, A^\pi_{r} = \emptyset$  for the root $r$ and $\child{R_i} = \emptyset$ for a leaf node $R_i$.

\begin{example}
	\label{exp:join-tree}
	Consider a join-project query $Q = \pi_{A,E} ( R_1(A, B) \Join R_2\\(B, C) \Join R_3(C, D) \Join R_4(D, E) )$ under the ranking function \SUM\ defined over attributes $A,E$. In other words, for every output tuple $t$, the score of the tuple is $t[A]+t[E]$.~\autoref{fig:decomp} shows the join tree for the query. We fix $R_3$ as the root with $R_2$ as the left child and {$R_4$} (a leaf node) as the right child. $R_1$, as a leaf node, is also the only child of $R_2$.
\end{example}
	
\begin{figure}[t]
	\includegraphics[scale=0.34]{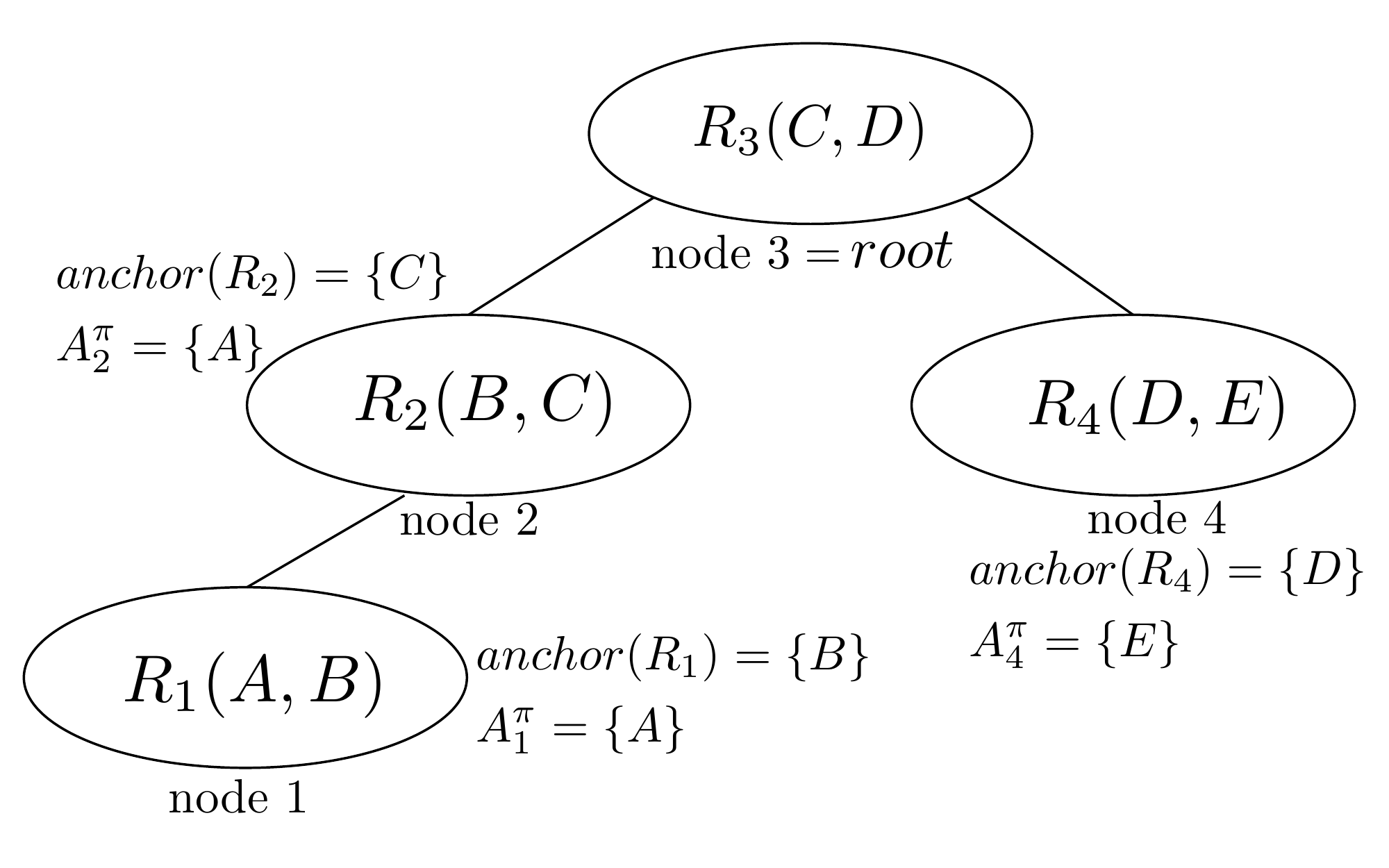}
	\caption{Illustration of join tree for a join-project query $Q = \pi_{A,E} ( R_1(A, B) \Join R_2(B, C) \Join R_3(C, D) \Join R_4(D, E))$. } \label{fig:decomp}
\end{figure}	

\begin{figure}
	\centering
	\includegraphics[scale=0.35]{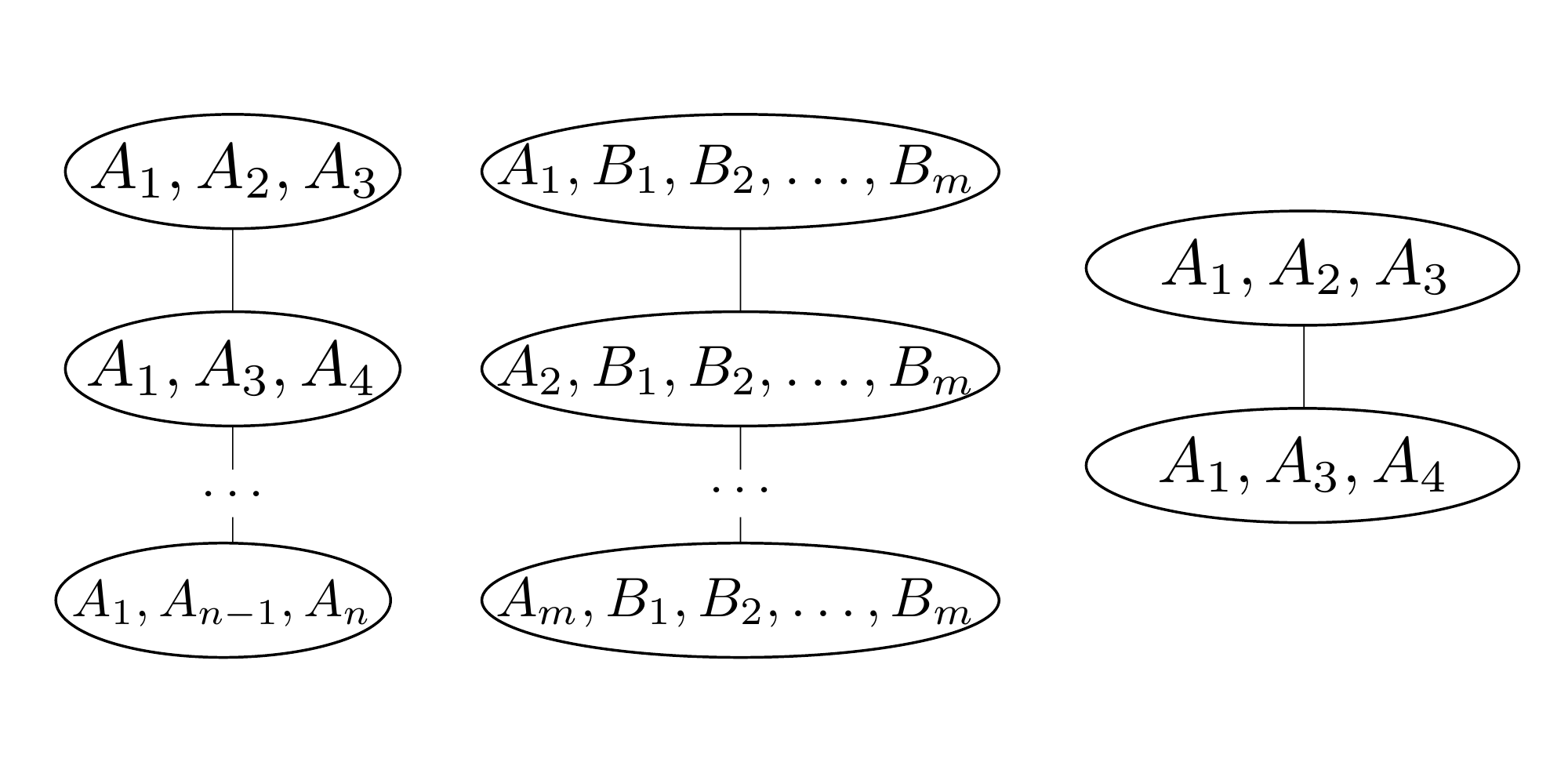}
	\caption{{Examples of GHD and \fhw. The leftmost is the minimal GHD of a cycle join $Q = R_1(A_1, A_2) \Join R_2(A_2, A_3) \Join \cdots \Join R_{n-1}(A_{n-1},A_n) \Join R_n(A_n,A_1)$ with $\fhw = 2$. The middle is the minimal GHD of a bi-clique join $Q = \Join_{i\in[n], j\in[m]} R_{(i-1)m + j}(A_i, B_j)$ with $\fhw = m$. The rightmost is the minimal GHD of a butterfly join $Q = R_1(A_1, A_2) \Join R_2(A_2, A_3) \Join R_3(A_1, A_4) \Join R_4(A_4, A_3)$ with $\fhw = 2$. }}
	\label{fig:ghd}
\end{figure}

{
\smallskip
\introparagraph{Generalized Hypertree Decompositions} A weight assignment $\bu = (u_F)_{F \in \edges}$ is said to be a fractional edge cover if $(i)$ for every $F \in \edges$, $u_F \geq 0$; and $(ii)$ for every $X \in \mathbb{A}, \sum_{F: X \in F} u_F \geq 1$. The \textit{fractional edge cover number} for $\mathbb{A}$, denoted $\rho^*(\mathbb{A})$ is the minimum of $\sum_{F \in \edges} u_F$ over all possible edge covers. A generalized hypertree decomposition (GHD) of a query $Q$ is a tuple $(\htree, (\bag_t)_{t \in V(\htree)})$ where $\htree$ is a tree and every $\bag_t$ (called the bag of $t$) is a subset of $\mathbb{A}$ for each node $t$ of the tree such that: $(i)$ each variables of each $F \in \edges$ is contained in some bag; and $(ii)$ for each $A \in \mathbb{A}$, the set of nodes $\setof{t}{A \in \bag_t}$ is connected in $\htree$. The fractional hypertree width of a decomposition is defined as $\max_{t \in V(\htree)} \rho^*(\bag_t)$, where $\rho^*(\bag_t)$ is the fractional edge cover number of the attributes in $\bag_t$. The fractional hypertree width of a query $Q$, denoted $\fhw(Q)$, is the minimum fractional hypertree width over all possible GHDs of $Q$.~\autoref{fig:ghd} gives examples of GHDs of popular queries and their width. For an acyclic query, it holds that $\fhw = 1$ and any join tree is a valid GHD.

}

\introparagraph{Computational Model}
To measure the running time of our algorithms, we use the uniform-cost RAM 
model~\cite{hopcroft1975design}, where data values as well as pointers to
databases are of constant size. Throughout the paper, all complexity results are  with respect to data complexity, where the query is assumed fixed. It is important to note that we focus on the main memory setting. We further assume existence of perfect hashing that allows constant time lookups in hash tables. 

\subsection{Ranking Functions}

The ordering of query results in $Q(D)$ can be specified by a {\em ranking function}, or through
the \sqlhighlight{ORDER BY} clause of a SQL query in practice.  Formally,  a total order $\succeq$ on the tuples in $Q(D)$ defined over the attributes $\bA$, is induced by a ranking function $\rf$ that maps each tuple $t \in Q(D)$ to a real number $\rank(t) \in \mathbb{R}$. In particular, for two tuples $t_1, t_2$, we have $t_1 \succeq t_2$ if and only if $\rank(t_1) \geq \rank(t_2)$. We assume that $\domain(A)$ for any $A \in \bA$ is also equipped with a total order $\succeq$. We present an example of a ranking function below.

\begin{example} \label{ex:vertex} Consider a function $w:\domain(A) \to \mathbb{R}$ for any  attribute $A \in \bA$.  
For each query result $t$, we define its rank as $\rf(t) = \sum_{A \in \bA} w(t[A])$, the total sum of the weights over all attributes in $\bA$. 
\end{example}

We will focus on \SUM\ and \lexi\ in this paper. We note that both functions 
are instantiations of a more general class of \emph{decomposable functions}~\cite{deep2019ranked}. The ideas introduced for \SUM\ and \lexi\ are readily applicable to more complicated functions including products, a combination of sum and products, etc.

\subsection{Problem Parameters}
\label{sec:problem-parameter}

Given a join-project query $Q$  and a database $D$, an enumeration query asks to enumerate the tuples of $Q(D)$ according to some specific ordering defined by $\rf$. We study this problem in a similar framework as~\cite{Segoufin15}, where an algorithm is decomposed into:
\begin{packed_item}
	\item a {\bf preprocessing phase} that takes time $T_p$ and computes a data structure of size $S_p$, and
	\item an {\bf enumeration phase} {(i.e. the online query phase) that outputs $Q(D)$ without duplicates under the specified ordering whenever a user query is issued. } This phase has full access to any data structures constructed in the preprocessing phase.
	The time between outputting any two consecutive tuples (and also the time to output the first tuple, and the time to notify that the enumeration has completed after the last tuple) is at most $\delta$. 
\end{packed_item}

Prior work~\cite{deep2019ranked} has shown that for acyclic joins without projections, there exists an algorithm with $T_p =  S_p = O(|D|)$ that can achieve $\delta = O(\log|D|)$ delay under ranking. However, the problem of ranked enumeration when projections are involved is wide open. 

{\introparagraph{Using Existing Algorithms} One possible solution to the problem is to set the weights of non-projection attributes to $0$. This will ensure that for \sqlhighlight{SUM} function, only the projection attributes are considered in the ranking and existing algorithms for full join queries could be used. However, this proposal gives poor delay guarantees and is as expensive as enumerating the full join result. For example, for the four path query in~\autoref{exp:join-tree}, the output of the query could be constant in size but the full join can be as large as $\Omega(|D|^2)$ which is prohibitively expensive, but our algorithm would only require $O(|D|)$ in this case. In general, a join with $\ell$ relations may require as much as $\Omega(|D|^{\ell-1})$ time to output the smallest tuple. We describe more details and the formal proof in~\autoref{sec:existing}.}

	\section{General acyclic queries} \label{sec:acyclic}

We first describe the main algorithm of enumerating acyclic join-project queries for \SUM\ ordering in~\autoref{subsec:general}, followed by a specialized algorithm for \lexi\ ordering in~\autoref{sec:lex}. {Before we describe the algorithm, we introduce two key data structures that will be used: \textit{cell} and \textit{priority queues}.
	
	\begin{definition}
		A cell, denoted as $c = \langle t, [p_1, \dots, p_k], q \rangle$, is a vector consisting of three values: (i) a tuple $t \in R_i$ for node $i$ in the join tree $\mT$, (ii) an array of pointers $[p_1, \dots, p_k]$ where the $\ell^{th}$ pointer points to a cell defined for $\ell^{th}$ child of node $i$ in $\mT$, (iii) a pointer $q$ that can only point to another cell defined for node $i$.
	\end{definition}
	
Given a cell $c$ defined for node $i$, one can reconstruct the tuple over $\bA^\pi_i$in constant time (dependent only on the query size, which is a constant) by traversing the pointers recursively. We will use $\anst{c}$ to denote the utility method that performs this task. Note that the time and space complexity of creating a cell is $O(1)$ since the size of the query and the database schema is assumed to be a constant. This implies that we only need to insert/access a constant number of entries in the vector representing a cell. Similarly, $\anst{c}$ also takes $O(1)$ time since the join tree size is a constant.

\introparagraph{Priority queue}	A priority queue is a data structure for maintaining a set $S$ of elements, each with an associated value called a \textsf{key}. The space complexity of a priority queue containing $|S|$ elements is $O(|S|)$. We will use an implementation of a priority queue (e.g., a Fibonacci heap~\cite{fredman1987fibonacci}) with the following properties: (i) an element can be inserted in $O(1)$ time, (ii) the min element can be obtained in $O(1)$ time, and (iii) the min element can be popped and deleted in $O(\log |D|)$ time. We will use the priority queue in conjunction with a cell in the following way: for two cells $c_1$ and $c_2$, the priority queue uses $\rf(\anst{c_1})$ and $\rf(\anst{c_2})$ in the comparator function to determine the relative ordering of $c_1$ and $c_2$. If $\rf(\anst{c_1}) = \rf(\anst{c_2})$, then we break ties according to the lexicographic order of $\anst{c_1}$ and $\anst{c_2}$. The choice of lexicographic ordering is not driven by any specific consideration; as long as the ties are broken consistently, we can use other tie-breaking criteria too. Once again, the comparator function only takes a $O(1)$ time to compare since the ranking function $\rf(\anst{c})$ can be evaluated in constant time.

}

\vspace{-0.5em}

\subsection{General Algorithm} \label{subsec:general}

In this section, we present the algorithm for~\autoref{thm:general}. {At a high level, each node $i$ in the join tree will materialize, in an incremental fashion, all tuples over the attributes $\bA^\pi_i \cup \key{R_i}$ in sorted order.} In order to efficiently store the materialized output, we will use the cell data structure. Since we need to sort the materialized output, each node in the join tree maintains a set of priority queues indexed by $\pi_{\key{R_i}}(u), u \in R_i$. The values of the priority queue are the cells of node $i$. {For example, given the join tree from~\autoref{exp:join-tree}, node $2$ containing $R_2$ will incrementally materialize the sorted result of the subquery $\pi_{C,A} (R_2(B,C) \Join R_1(A,B))$ that is indexed by the values $\pi_C (R_2(B, C))$ since $\bA^\pi_2 = \{A\}$ and $\key{R_2} = \{C\}$. Note that there may be multiple possible join trees for a given acyclic query. Our algorithm is applicable to all join trees. In fact, any node in the join tree can be chosen as the root without any impact on the time and space complexity.}

\introparagraph{Preprocessing Phase} We begin by describing the algorithm for preprocessing in~\autoref{algo:preprocess}. We assume that a join tree has been fixed and the input instance $D$ does not contain any dangling tuples, i.e., tuples that will not contribute in the join; otherwise, we can invoke the Yannakakis algorithm~\cite{yannakakis1981algorithms} to remove all dangling tuples. We initialize a set of empty priority queues for every node in the join tree. We proceed in a bottom up fashion and perform the following steps. For each leaf relation $R_i \in \mT$, we create a cell $\langle t, [], \bot \rangle$ for each tuple $t \in R_i$ and insert it into $\textsf{PQ}_{i}[\pi_{\key{R_i}}(t)]$. For each non-leaf relation $R_j \in\mT$, we create a cell for $t \in R_j$, which points to the top of the priority queue in each child node of $R_j$ that can be joined with $t$. This cell is then added to the  priority queue $\textsf{PQ}_{i}[\pi_{\key{R_i}}(t)]$. Note that we only have one priority queue for the root relation $r$ since $\key{r} = \emptyset$ by definition. 

\begin{algorithm}[!htp]
	\SetCommentSty{textsf}
	\DontPrintSemicolon 
	\SetKwInOut{Input}{Input}
	\SetKwInOut{Output}{Output}
	\SetKwFunction{len}{\textsf{len()}}
	\SetKwFunction{topmost}{\textbf{top()}}
	\SetKwFunction{pop}{\textbf{pop()}}
	\SetKwFunction{first}{\textbf{first}}
	\SetKwFunction{append}{\textsf{append}}
	\SetKwFunction{insertt}{\textbf{insert}}
	\SetKwFunction{listmerge}{\textbf{ListMerge}}
	\SetKwData{pq}{$\mathsf{PQ}$}
	\Input{Input query $Q$, database instance $D$; join tree $\mT$; ranking function $\rank$.}
	\Output{Priority Queues $\pq$}
	\SetKwData{ptr}{\textsf{ptr}}
	\SetKwProg{myproc}{\textsc{procedure}}{}{}
	\SetKwData{return}{\textbf{return}}
	\SetKwData{dedup}{\textsf{dedup}}
	\SetKwData{counter}{\textsf{counter}}
	\SetKwFunction{isequal}{\textsc{compare}}
	\SetKwFunction{rank}{\textsc{rank}}
	
	
	\ForEach{$R_i \in \mT$ in post order traversal}{
		\ForEach{$t \in R_i$}{
			$u \leftarrow \pi_{\key{R_i}}(t);$ \;
			\If{$\pq_{i}[u]$ does not exist}{
				$\pq_{i}[u] \leftarrow \emptyset;$    \  \   \  \   \  \tcc{Initialize a priority queue}
			}
			$L \leftarrow \emptyset$; \;
			\ForEach{$R_j$ is the child of $R_i$ \label{for:1}}{ 
				$L.\insertt(\pq_{j}[\pi_{\key{R_j}}(u)].\topmost)$; \label{line:insert}
			}
			
		}
		$\pq_{i}[u].\insertt(\langle t, L, \bot \rangle)$;
	}
	
	\caption{{\sc PreprocessAcyclic}}
	\label{algo:preprocess}
\end{algorithm}

\begin{algorithm}[!htp]
	\SetCommentSty{textsf}
	\DontPrintSemicolon 
	\SetKwInOut{Input}{Input}
	\SetKwInOut{Output}{Output}
	\SetKwFunction{len}{\textsf{len()}}
	\SetKwFunction{topmost}{\textbf{top()}}
	\SetKwFunction{pop}{\textbf{pop}}
	\SetKwData{next}{\textsf{next}}	
	\SetKwFunction{append}{\textsf{append}}
	\SetKwFunction{insertt}{\textsf{insert}}
	\SetKwFunction{break}{\textsf{break}}
	\SetKwProg{myproc}{\textsc{procedure}}{}{}
	\SetKwData{pq}{$\mathsf{PQ}$}
	\Input{Input query $Q$, database instance $D$; join tree $\mT$; ranking function $\rank$; Priority queues $\pq$}
	\Output{$Q(D)$ in ranked order}
	\SetKwData{last}{\textsf{last}}
	
	\SetKwData{return}{\textbf{return}}
	\SetKwData{next}{\textsf{next}}
	\SetKwData{temp}{\textsf{temp}}
	\SetKwData{counter}{\textsf{counter}}
	\SetKwFunction{enum}{\textsc{Enum}}	
	\SetKwFunction{topdown}{\textsc{Topdown}}
	\SetKwFunction{isequal}{\textsc{is\_equal}}
	\SetKwFunction{rank}{\textsc{rank}}

	\myproc{\enum{}}{
		$\last \leftarrow \emptyset$; \;
		\While{$\pq_r[\emptyset] \neq  \emptyset$}{
			$o \leftarrow \pq_r[\emptyset].\topmost$; \;
			\If{\isequal{$o, \last$} $=$ false \label{line:if} }{
				{\bf print} $\mathsf{output}(o), \last \leftarrow o$;  \tcc*{new output}
				 \label{user:output}
			}
			\topdown$(o, r)$;\; 
		}
	}
	
	\myproc{\topdown{$c, $ $j$}   /* $c = \langle t,  [p_1, \dots, p_k], \next \rangle$ */ }{ 
		$u \leftarrow \pi_{\key{R_j}}(c.t)$;\;
		\If{$c.\next = \bot$}{
			\While{true}{
				$\temp \leftarrow \pop(\pq_{j}[u])$;\;
				\ForEach{$R_i$ is a child of $R_j$ \label{for:loop}}{
					$p'_i \leftarrow \topdown(c.p_i, i)$ ;\;
					\If{$p_i' \neq \bot$ }{ 
						$\pq_{j}[u].\insertt(\langle t,  [c.p_1, \dots, p_i', \dots c.p_k], \bot \rangle)$ \label{for:loopend}
					}
				}
				\If {$R_j$ is not the root}{
					$c.\next \gets {\mathsf{addressof}}(\pq_{j}[u].\topmost)$;
				}
				\lIf{\isequal{$\temp, \pq_{j}[u].\topmost$} = false \label{equal:check}}{
					\break;
				}			
			}
		}
		\KwRet{$c.\next$};\;
	}
	\myproc{\isequal{$c_1, c_2$}}{
		\lIf{\small \rank{$\mathsf{output}(c_1)$}$ \neq  $ \rank{$\mathsf{output}(c_2)$}}{
			\texttt{return false}}
		\ForEach{$A \in \bA$}{
			\lIf{$\mathsf{output}(c_1)[A] < \mathsf{output}(c_2)[A]$}{\texttt{return false}}
		}
		\texttt{return} true;
	}

	\caption{{\sc EnumAcyclic}}
	\label{algo:enumerate}
\end{algorithm} 

\begin{figure*}[!htp]
	\centering
	\hspace*{0em}
	\begin{subfigure}{0.45\linewidth}
		\includegraphics[scale=0.4]{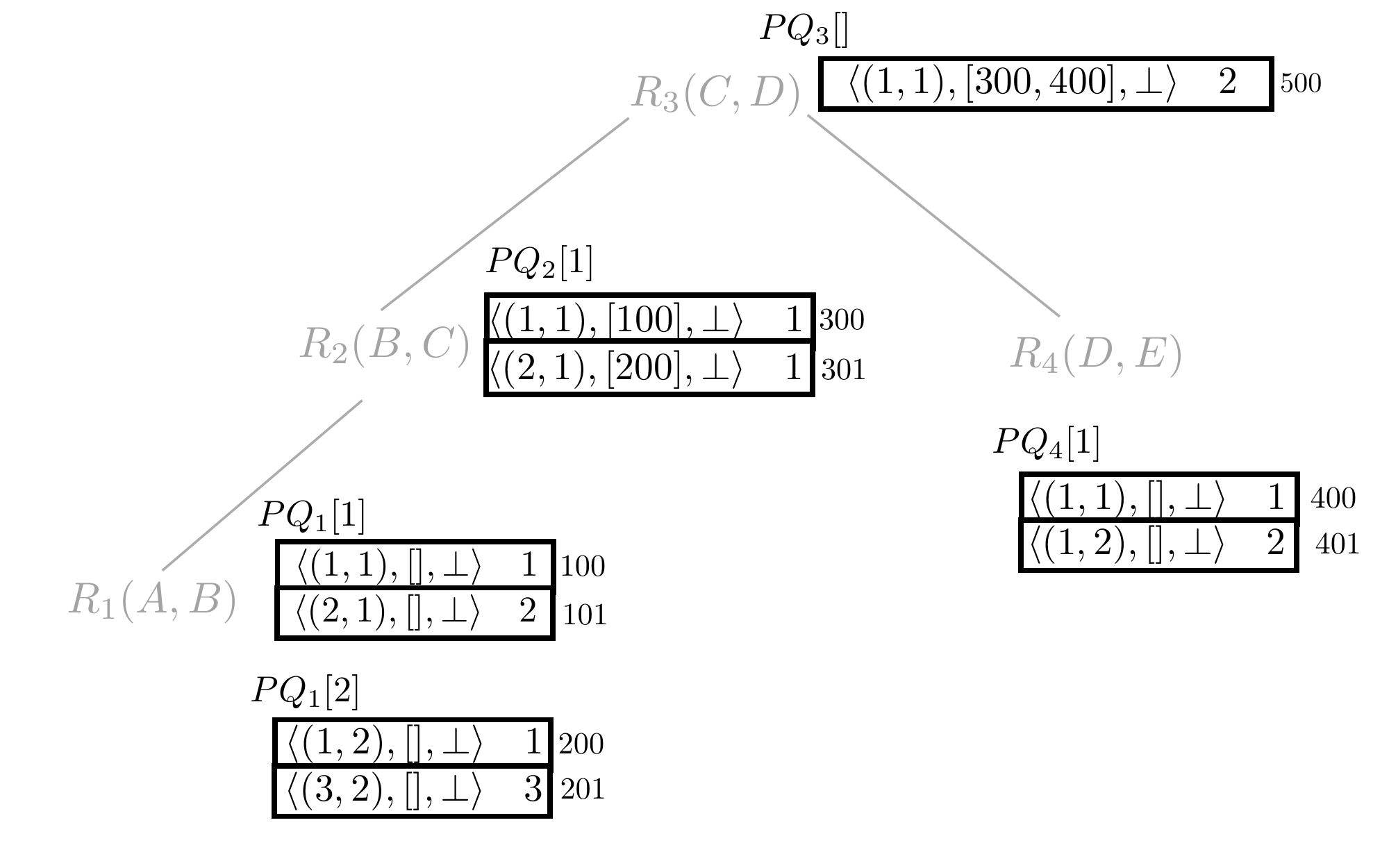}
		\caption{Data structure state after the preprocessing phase. Each memory location has a cell and the partial score of the partial answer}  \label{fig:preprocess}
	\end{subfigure}
	\hspace{4em}
	\begin{subfigure}{0.45\linewidth}
		\includegraphics[scale=0.4]{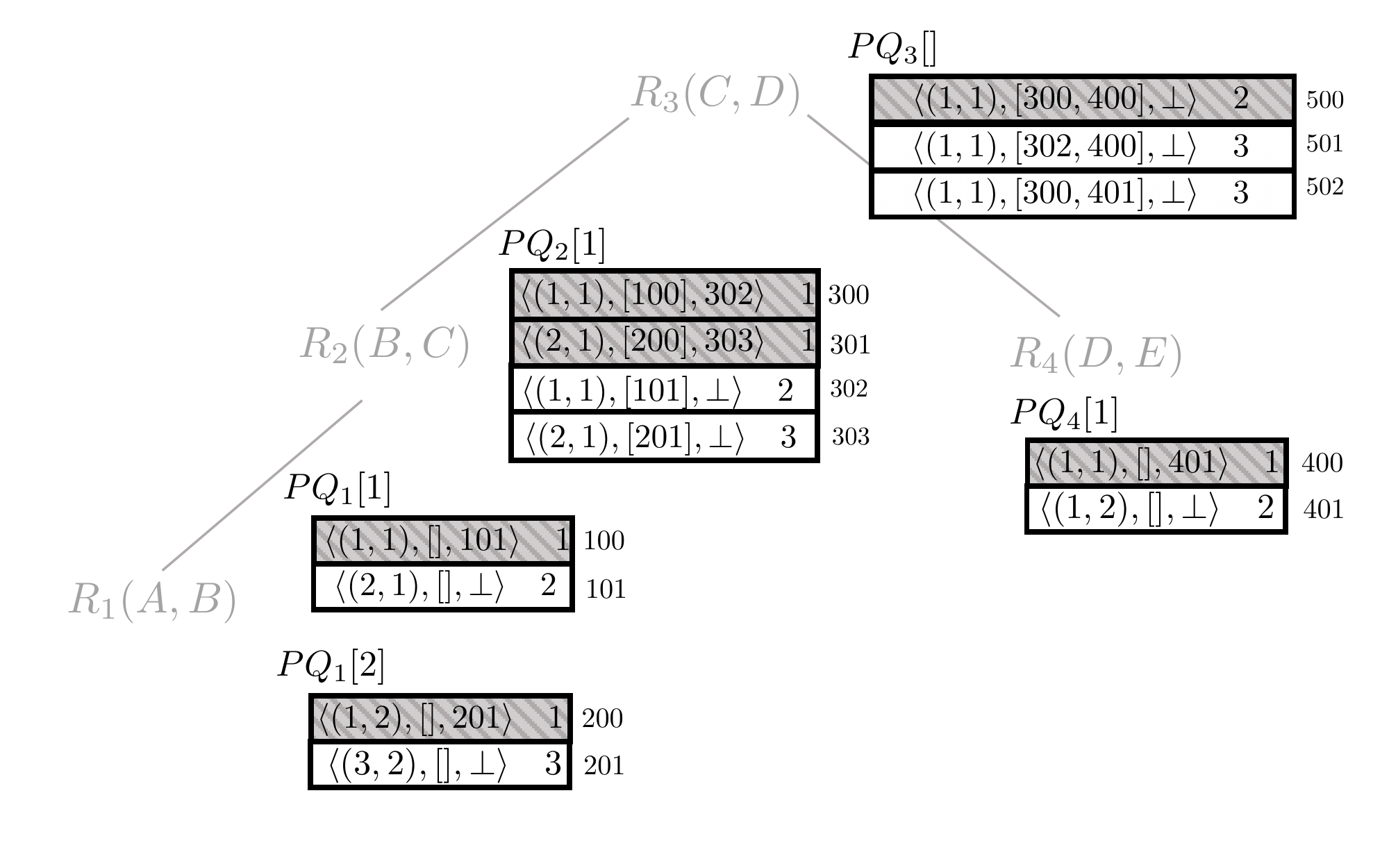}
		\caption{Data structure after one iteration of procedure \textsc{Enum()}}  \label{fig:enumerate}
	\end{subfigure}
\caption{Example to demonstrate the preprocessing and enumeration phase of the general algorithm}
\end{figure*}

\begin{example}
	Continuing with the 4-path query running example, consider the following instance $D$ as shown below.
	
	\scalebox{0.80}{
	\begin{minipage}[t]{0.24\linewidth}
		\centering
		\begin{tabular}[t]{ !{\vrule width1pt} c|c  !{\vrule width1pt} } 
			\Xhline{1pt}
			${A}$ & ${B}$ \\ 
			\Xhline{1pt}
			1 & 1 \\ 
			2 & 1 \\
			1 & 2 \\ 
			3 & 2 \\			
			\Xhline{1pt}
		\end{tabular}		
		\vspace{1em}
		$R_1$
	\end{minipage}
	\begin{minipage}[t]{0.24\linewidth}
		\centering
		\begin{tabular}[t]{ !{\vrule width1pt} c|c !{\vrule width1pt} } 
			\Xhline{1pt}
			${B}$ & ${C}$ \\ 
			\Xhline{1pt}
			1 & 1 \\ 
			2 & 1 \\
			\Xhline{1pt}
		\end{tabular}		
		\vspace{1em}	
		$R_2$
	\end{minipage}
	\begin{minipage}[t]{0.24\linewidth}
		\centering
		\begin{tabular}[t]{ !{\vrule width1pt} c|c !{\vrule width1pt} } 
			\Xhline{1pt}
			${C}$ & ${D}$ \\ 
			\Xhline{1pt}
			1 & 1 \\ 
			1 & 2 \\			
			\Xhline{1pt}
		\end{tabular}		
		\vspace{1em}	
		$R_3$
	\end{minipage}
	\begin{minipage}[t]{0.24\linewidth}
		\centering
		\begin{tabular}[t]{ !{\vrule width1pt} c|c !{\vrule width1pt} } 
			\Xhline{1pt}
			${D}$ & ${E}$ \\ 
			\Xhline{1pt}
			1 & 1 \\ 
			1 & 2 \\			
			\Xhline{1pt}
		\end{tabular}		
		\vspace{1em}	
		$R_4$
	\end{minipage}}

	As we saw before, \autoref{fig:decomp} shows the join tree along with the anchor attributes in each relation.~\autoref{fig:preprocess} shows the  state of priority queues after the preprocessing step. After the full reducer pass, tuple $(1,2)$ is removed from $R_3$ because there is no join tuple that can be formed using it. Then, we start constructing the cells for each node starting with the leaf nodes. Since $B$ is the anchor for relation $R_1$, we create two priority queues $\textsf{PQ}_{1}[1]$ and $\textsf{PQ}_{1}[2]$. For $\textsf{PQ}_{1}[1]$, we create the cells for tuples $(1,1)$ and $(2,1)$. For convenience, the cells are followed by the partially aggregated score. Consider relation $R_2(B,C)$. The cell for tuple $(1,1)$ in $\textsf{PQ}_{2}[1]$ points to the top of $\textsf{PQ}_{1}[1]$ (shown as pointer with address $100$). The root bag consists of a single tuple entry which points to the cells at locations $300$ and $400$. The output tuple that can be formed by the root bag is $(A = 1, E = 1)$.
\end{example}

\introparagraph{Enumeration Phase} We describe the enumeration procedure in~\autoref{algo:enumerate}. The high-level idea is to output answers by repeatedly popping elements from the root priority queue. It may be possible that multiple tuples of the root priority queue output the same final result. In order to deduplicate answers, we compare the answer at the current top of the priority queue with the previous answer (\autoref{line:if}), and output it only if they are different.
Then, we invoke the procedure \textsc{Topdown} to insert new candidates into the priority queue.  This procedure will be recursively propagated over the join tree until it reaches the leaf nodes. Observe that once the new candidates have been inserted, the \textsf{next} pointer of a cell is updated by pointing to the topmost element in the priority queue. This chaining materializes the answers for a particular node that can be reused and is key to avoiding repeated computation.

\begin{example}
	Continuing our running example,~\autoref{fig:enumerate} shows the state of the priority queues after one complete iteration of procedure \textsc{Enum()}. We first pop the only element in root priority queue and note that the output tuple $(A=1, E=1)$ is enumerated. Then we call \textsc{Topdown} with cell at memory $500$ and root (node $3$) as arguments (denoted as $\textsc{Topdown}(*500, 3) $). The \textsf{next} for the cell is $\bot$ so we pop the cell at $500$ from the priority queue (shown as greyed out in the figure) and recursively call $\textsc{Topdown}(*300, 2) $. The cell at memory location $300$ has $\textsf{next} = \bot$. Therefore, we enter the while loop, pop the cell and recursively call $\textsc{Topdown}(*100, 1)$. We have now reached the leaf node. The anchor attribute value for cell at $100$ is $u = 1$, so we pop the current cell from $\textsf{PQ}_{1}[1]$ (greyed out cell at $100$), find the next candidate at the top of $\textsf{PQ}_{1}[1]$ (which is cell at $101$), chain it to the cell at $100$ by assigning $\textsf{next} = 101$ and return the cell at $101$ to the parent. When the program control returns from the recursive call back to node $2$, we create a new cell (at memory address $302$) that points to $101$ and insert it into the priority queue. However, observe that the cell at memory location $301$ also generates $A=1$, a duplicate since cell at $300$ also generated it. This is where the equality check at~\autoref{equal:check} comes in. Since both cells at $300$ and $301$ generate the same value, we also pop off the cell at $301$ in the subsequent while loop iteration, find its next candidate and create the cell at $303$, and insert into the priority queue. This ensures that all elements in $\textsf{PQ}_{2}[1]$ generating the same $A$ value are removed, ensuring no duplicates at the root level. Finally, the control returns to the root level \textsc{Topdown} call. The recursive call to the right child (node $4$) create a new cell $401$ and we insert two cells at the root priority queue, cell $501$ and $502$ that correspond to output tuple $(A=2, E=1) $ and $(A=1, E=2)$ respectively.
\end{example}

{
	We are now ready to formally prove~\autoref{thm:general}.
	
	\begin{restatable}{lemma}{lemdelay} \label{lem:delay}
		The delay guarantee of \textsc{EnumAcyclic} is at  most $O(|D| \log |D|)$.
	\end{restatable}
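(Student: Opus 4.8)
The plan is to bound the wall-clock work performed by \autoref{algo:enumerate} between two consecutive printed tuples and show it is $O(|D|\log|D|)$. First I would localize where time is spent: every step of \textsc{Enum}/\textsc{Topdown} is either a peek at a priority-queue top, a comparison via \textsc{is\_equal}, an insertion, or a pop. Since the query (hence the join tree) is fixed, reconstructing $\anst{c}$ and evaluating $\rf(\anst{c})$ are $O(1)$; peeks, comparisons, and Fibonacci-heap insertions are $O(1)$; only a pop costs $O(\log|D|)$. Thus the delay is $O(\log|D|)$ times the number of pops executed while finalizing one output, and the whole lemma reduces to the combinatorial claim that a single call to $\textsc{Topdown}(o,r)$ (plus the bounded number of non-printing \textsc{Enum} iterations around it) performs $O(|D|)$ pops.

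Next I would establish two invariants that drive the counting. \textbf{Monotonicity:} proceeding bottom-up and inducting on the join tree, each node $j$ produces the distinct outputs of its subquery over $\bA^\pi_j\cup\key{R_j}$ in strictly increasing order (under $\rf$ with lexicographic tie-breaking); consequently, replacing a child pointer of a cell by that child's \textsf{next} strictly increases the cell's value, so any cell freshly inserted while finalizing a value $v$ has value strictly larger than $v$ and is never popped during the finalization of $v$. \textbf{Chaining:} once a cell's \textsf{next} field is set it is never recomputed, so the body of the \texttt{while} loop of $\textsc{Topdown}(c,j)$ runs for a given cell $c$ at most once over the whole execution. Together these imply that, at any node, the pops performed while advancing past one value touch only the already-resident cells of that node producing exactly that value (duplicates arising from projected-out attributes), of which there are at most $|R_j|\le|D|$ in the relevant anchor-indexed queue.

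The core step is to bound the pops per node in one $\textsc{Topdown}(o,r)$ call by induction down the tree. The root loop pops only the (at most $O(|D|)$) resident cells producing the target tuple $\tau$, and each such pop dispatches a constant number of child advances; a child advance is either $O(1)$ (its \textsf{next} is already chained) or a fresh \texttt{while} loop. The key observation is that the fresh advances dispatched to a child node $i$ during one root call act either on priority queues $\mathsf{PQ}_i[\cdot]$ indexed by \emph{distinct} anchor values, or on the \emph{same} cell (served from the chain after the first touch); hence the pops charged to node $i$ in this call are bounded by the number of cells residing in $\mathsf{PQ}_i$, namely $O(|D|)$, rather than by a product. Summing over the constant number of nodes gives $O(|D|)$ pops, and multiplying by $O(\log|D|)$ per pop yields the claimed delay.

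The main obstacle I anticipate is exactly this last counting: a naive recursion bound lets the pops multiply across the levels of the tree into a product $\prod_j d_j$ of per-node duplication factors $d_j$, which could be polynomially larger than $|D|$. Defusing this requires the two invariants in tandem—monotonicity to guarantee inserted cells are not re-examined within the same finalization, and chaining to guarantee each cell's expansion is paid for only once—so that the work dispatched to a node is charged against disjoint resident cells and never re-charged. A secondary technical point is the interaction with the \textsc{is\_equal} test in \textsc{Enum}: I must confirm that between two genuine prints only $O(1)$ non-printing iterations occur (each duplicate top is removed by a finalizing $\textsc{Topdown}$ that makes strict progress in the output order), so the per-output accounting faithfully captures the true delay.
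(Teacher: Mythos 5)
Your proposal is correct and follows essentially the same argument as the paper's proof: both reduce the delay to counting $O(\log|D|)$-cost priority-queue operations, both rely on the memoization of the \textsf{next} pointers (your ``chaining'' invariant, the paper's notion of a ``non-trivial'' \textsc{Topdown} call) so that each cell's expansion is paid for only once, and both charge the per-output work to each join-tree node against disjoint units—$O(|D|)$ per node—summed over the constantly many nodes. The only cosmetic difference is the accounting unit: the paper charges one non-trivial \textsc{Topdown} call per (node, anchor value) pair, whereas you charge pops against resident duplicate cells (with the monotonicity invariant made explicit); both give $O(|D|)$ pops per output and hence the claimed $O(|D|\log|D|)$ delay.
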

	
	\begin{restatable}{lemma}{lempreprocess} \label{lem:preprocess}
		{\sc PreprocessAcyclic}  running in $O(|D| \log |D|)$ time, generates a data structure of size $O(|D|)$.
	\end{restatable}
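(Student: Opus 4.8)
The plan is to bound running time and space separately, in each case charging both the work and the memory to individual input tuples, and relying throughout on the fact that the query---and hence the join tree $\mT$, the number of children $\child{R_i}$ of any node, and the schema of every cell---has constant size. The key structural fact I would isolate first is that \textsc{PreprocessAcyclic} creates \emph{exactly one cell per input tuple}: the post-order traversal visits each relation once, and the inner loop over $t \in R_i$ runs $\sum_i |R_i| = |D|$ times in total, each time producing a single cell $\langle t, L, \bot\rangle$. This gives the master bound of $O(|D|)$ cells, on which both claims rest.

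For the time bound I would first dispatch the preliminary full-reducer step: since we assume no dangling tuples, or else invoke Yannakakis's algorithm, the semi-join reduction for an acyclic $Q$ consists of one bottom-up and one top-down pass over the constantly many edges of $\mT$, and each semi-join is computed in time linear in the sizes of the two relations under the perfect-hashing assumption, for a total of $O(|D|)$. I would then bound a single iteration of the main loop body: computing the anchor key $u = \pi_{\key{R_i}}(t)$ and the lookup or initialization of $\textsf{PQ}_i[u]$ are $O(1)$ under perfect hashing; the loop over $\child{R_i}$ runs a constant number of times and each iteration only reads the top of an \emph{already-built} child queue, which is $O(1)$ by property (ii); and constructing the cell together with evaluating its key $\rf(\anst{c})$ is $O(1)$, since $\anst{c}$ reconstructs the tuple over $\bA^\pi_i$ in constant time. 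Finally the priority-queue insertion is $O(1)$ by property (i), so each of the $|D|$ iterations costs $O(1)$; summing and adding the reducer cost gives $O(|D|)$, which is comfortably within the claimed $O(|D|\log|D|)$ (indeed, even a simpler heap with $O(\log|D|)$ insertion would only degrade the total to the stated bound).

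For the space bound I would count cells and queues. There are at most $|D|$ cells, each of $O(1)$ size because it stores one tuple reference, a fixed-length array of child pointers, and the single pointer $q$. The priority queues of relation $R_i$ are indexed by the distinct values of $\pi_{\key{R_i}}$ occurring in $R_i$, of which there are at most $|R_i|$, so there are $O(|D|)$ queues overall; since each cell is inserted into exactly one queue, the queues collectively hold $|D|$ elements and, by the stated linear-space property of the priority queue together with $O(1)$ overhead per (possibly empty) queue, occupy $O(|D|)$ space. Adding the hash directories that map keys $u$ to queues, which again have $O(|D|)$ entries, the entire data structure is $O(|D|)$.

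The step I expect to require the most care is the tacit claim, used above, that when the body processes a tuple $t$ of a non-leaf $R_i$ every child queue $\textsf{PQ}_j[\cdot]$ it reads is already constructed and \emph{non-empty}, so that reading its top is genuinely $O(1)$ rather than undefined. This is exactly what the post-order traversal (children processed before parents) and the prior removal of dangling tuples (every tuple of $R_i$ has a matching witness in each child, so the indexed child queue is non-empty) jointly guarantee, and I would state this dependency explicitly as a small invariant rather than absorb it into the per-tuple charging. The remaining arguments are routine accounting.
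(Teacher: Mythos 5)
Your proof is correct and follows essentially the same route as the paper's: charge $O(1)$ work per tuple for the constant-size children loop plus one priority-queue insertion, sum over the $O(|D|)$ input tuples, and get $O(|D|)$ space from the one-cell-per-tuple count. You are in fact slightly tighter (using the stated $O(1)$-insertion property to get $O(|D|)$ total time, where the paper charges $O(\log|D|)$ per insertion to land exactly on the stated bound) and more careful about the full-reducer cost and the non-empty-child-queue invariant, but these are refinements of the same argument, not a different one.
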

	
	\begin{restatable}{lemma}{lemcorrect} \label{lem:correctness}
		\textsc{EnumAcyclic} enumerates the query result $Q(D)$ in ranked order correctly.
	\end{restatable}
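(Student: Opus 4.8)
The plan is to prove correctness through a single invariant attached to every node of the join tree, which I then specialise to the root. Fix a node $i$ and an anchor value $u \in \pi_{\key{R_i}}(R_i)$, and let $Q_i^u$ be the set of \emph{distinct} tuples over $\bA^\pi_i$ produced by joining all relations in the subtree $\mT_i$ and keeping only those whose $\key{R_i}$-value equals $u$; note that $\bA^\pi_r = \bA$ and $\key{r}=\emptyset$, so $Q_r^\emptyset = Q(D)$. First I would record a soundness fact: every cell $c$ that ever enters $\textsf{PQ}_i[u]$ satisfies $\anst{c}\in Q_i^u$. This follows by bottom-up induction on $\mT$ using the definition of $\anst{\cdot}$ — a leaf cell $\langle t,[],\bot\rangle$ encodes $t$ itself, and an internal cell concatenates a tuple $t\in R_i$ with the outputs of one child cell per child, each of which is, by the inductive hypothesis, a valid subtree tuple agreeing with $t$ on the shared anchor attributes. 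Hence no spurious tuple is ever materialised, which gives soundness of the final output.

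The heart of the argument is the \emph{enumeration invariant}, also proved by bottom-up induction: repeatedly reading the top of $\textsf{PQ}_i[u]$ and calling \textsc{Topdown} on it returns exactly the tuples of $Q_i^u$, each once, in non-decreasing score order (ties broken by the fixed lexicographic order). The inductive step rests on the decomposability and monotonicity of the ranking function: for \SUM\ the score $\mathsf{rank}(\anst{c})$ of a cell equals the contribution of $t$ over its own projection attributes plus the scores of its child cells, so advancing a child pointer to its \textsc{Topdown}-successor can only raise the score. The while-loop of \textsc{Topdown} — which, after popping a cell, advances each child pointer by one recursive \textsc{Topdown} call and re-inserts the resulting candidate — is therefore exactly a Lawler-style successor generation over the product order of the children, seeded by \textsc{PreprocessAcyclic} with the cell that pairs each $t$ with the minimum of every compatible child queue. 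I would then establish (a) completeness, that repeated coordinate advances starting from these seeds reach every element of $Q_i^u$ (using the inductive hypothesis that each child queue enumerates its own subtree in full), and (b) the key monotonicity invariant that the minimum-score not-yet-returned tuple is always present at the top when \textsc{Topdown} must return it, because its immediate predecessor under the coordinate order was popped earlier and triggered its insertion. Deduplication of equal projected values is handled inside the same loop: after a value $v$ is exposed, the \textsc{is\_equal} test keeps draining and re-generating successors of every cell whose output equals $v$, and since all successors have score $\ge v$ the loop ends with a top of strictly larger value, so each distinct element of $Q_i^u$ surfaces exactly once.

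Finally I specialise the invariant to the root, where the single queue $\textsf{PQ}_r[\emptyset]$ enumerates $Q_r^\emptyset = Q(D)$. The \textsc{Enum} loop reads the top of this queue, prints it only when it differs from the previously printed tuple \textsf{last} (the check at~\autoref{line:if}), and calls \textsc{Topdown} to expose the next element; combining this with the enumeration invariant shows that every tuple of $Q(D)$ is printed exactly once in non-decreasing order of score, which is precisely the claim. For \lexi\ the same induction goes through verbatim because lexicographic combination is also monotone in its components, the only change being the comparator that the specialised algorithm of~\autoref{sec:lex} exploits for efficiency rather than correctness. I expect part (b) to be the main obstacle: rigorously ruling out that the lazy, one-tuple-at-a-time materialisation ever ``skips'' a result — i.e.\ proving that at the instant \textsc{Topdown} is required to surface the next-ranked combination, that combination has genuinely already been inserted and is not still hidden behind an un-advanced child pointer. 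Pinning this down requires tightly coupling the monotonicity of the score with the chaining performed through the \textsf{next} pointers.
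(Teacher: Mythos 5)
Your proposal is correct and follows essentially the same route as the paper's own proof: a bottom-up induction on the join tree with the invariant that each node's priority queue $\textsf{PQ}_i[u]$ emits the distinct subtree tuples over $\bA^\pi_i$ in ranked order, where the ``no-skipping'' step (your part (b)) is exactly what the paper establishes by contradiction via monotonicity of \SUM\ over the Lawler-style one-coordinate-advance candidates, and your soundness/completeness observations match the second half of the paper's argument. The only cosmetic difference is that you make the deduplication draining at internal nodes and the tie-breaking explicit, which the paper's proof treats implicitly.
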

	
	Together, the above lemmas establish~\autoref{thm:general}. We defer the full proofs to~\autoref{sec:correctness}.} We also show how we can recover logarithmic delay guarantee for full queries from~\cite{tziavelis2020optimal, deep2019ranked}.

\subsection{Improvement for Lexicographic Ranking} \label{sec:lex}

The algorithm from last section is also applicable to \lexi\ ranking function. In fact, we can transform \lexi\ with an attribute ordering of $A_1, A_2, \cdots, A_m$, into \SUM\ by defining a ranking function $\rank(t) = \sum_{i =1}^{m} 10^{m-i} \cdot w(\pi_{A_i} (t))$ for tuple $t$, while preserving the \lexi\ ordering. In this section, we present an alternative algorithm by exploiting the special structural properties of \lexi, that the global ranking also implies local ranking over every output attribute. 
Moreover,  it admits to enumerate query results not only in lexicographic order as given by \sqlhighlight{ORDER BY} $A_1, A_2, \cdots, A_m$ but also arbitrary ordering on each attribute (for instance, \sqlhighlight{ORDER BY} $A_1$ \sqlhighlight{ASC}, $A_2$ \sqlhighlight{DESC} $\dots$).

	\smallskip
	\introparagraph{Preprocessing Phase} In this phase, we perform the full reducer pass to remove all dangling tuples and create hash indexes for the base relations in sorted order. We also sort $\domain(A_i)$.
	
	\smallskip
	\introparagraph{Enumeration Phase}  Given an attribute order of output attributes $\bA = \{A_1, A_2, \cdots, A_m\}$, we start by fixing the minimum value in $\domain(A_1)$ as $a_1$. Then, we perform the two-phase semi-joins to remove tuples that cannot be joined with value $a_1$, and find the values in $\domain(A_2)$ that survive after semi-joins, denoted as $\mL_{A_2}(a_1)$.  Similarly, we fix the minimum value in $\mL_{A_2}(a_1)$ as $a_2$, and perform the two-phase semi-joins for finding the values in $\domain(A_3)$ that can be joined with both $a_1, a_2$. We continue this process until all attributes in $\bA$ have been fixed, and end up with enumerating such a query result (with fixed values). Then, we backtrack and continue the process until all values in attribute $A_1$ are exhausted. 
	
	\begin{algorithm}[!htp]
		\SetCommentSty{textsf}
		\DontPrintSemicolon 
		\SetKwInOut{Input}{Input}
		\SetKwInOut{Output}{Output}
		\SetKwFunction{len}{\textsf{len()}}
		\SetKwFunction{topmost}{\textbf{top()}}
		\SetKwFunction{pop}{\textbf{pop()}}
		\SetKwFunction{first}{\textbf{first}}
		\SetKwFunction{append}{\textsf{append}}
		\SetKwFunction{listmerge}{\textbf{ListMerge}}
		\SetKwData{pq}{$\mathsf{PQ}$}
		\Input{Input query $Q$, database $D$}
		\Output{$Q(D) \ltimes t$ in lexicographic order of $A_i, \cdots, A_m$}
		\SetKwData{ptr}{\textsf{ptr}}
		\SetKwProg{myproc}{\textsc{procedure}}{}{}
		\SetKwData{return}{\textbf{return}}
		\SetKwData{dedup}{\textsf{dedup}}
		\SetKwData{counter}{\textsf{counter}}
		\SetKwFunction{join}{\textsc{Join}}
		\SetKwFunction{rank}{\textsc{rank}}
		
		
		\lIf{$i = m$}{{\bf output} $t$ and \return; }
		\ForEach{$a \in \mL$}{
			$\mL' \leftarrow \pi_{A_{i+1}}(\sigma_{A_i = a} (R_{i+1} \ltimes t))$ \label{line:semijoin}; \tcc*{by semi-joins} 
			$t' \leftarrow (t,a)$;\tcc*{create new tuple}
			{\sc EnumAcyclicLexi}($t', \mL', i+1$);\;
		}
		
		\caption{{\sc EnumAcyclicLexi}($t$, $\mL$, $i$)}
		\label{algo:lexicographic}
	\end{algorithm} 
	
	Algorithm~\ref{algo:lexicographic} takes as input an acyclic query $Q$, an database $D$, an integer $i \in \{1,\cdots, m\}$, a tuple $t$ defined over attributes $A_1, \cdots, A_{i-1}$, and a set of values $\mL \subseteq \domain(A_i)$ that can be joined with $t$ in $D$. The original problem can be solved by invoking \\$\textsc{EnumAcyclicLexi}(\emptyset, \domain(A_1), 1)$ for sorted $\domain(A_1)$. 
	
	{
		\begin{restatable}{lemma}{lemlexi} \label{lem:lexi}
		$\textsc{EnumAcyclicLexi}$ enumerates $Q(D)$ correctly in lexicographic order with delay guarantee $O(|D|)$ after preprocessing time $T_p = O(|D|\log|D|)$ and space complexity $O(|D|)$.
	\end{restatable}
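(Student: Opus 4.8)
The plan is to establish the three claims of the lemma — correctness (including the lexicographic ordering and deduplication), the $O(|D|)$ delay, and the $O(|D|\log|D|)$ preprocessing with $O(|D|)$ space — separately, viewing the recursion of \textsc{EnumAcyclicLexi} as a depth-first traversal of the tree of \emph{consistent} prefix assignments $(a_1, \dots, a_i)$ to the output attributes $A_1, \dots, A_i$.

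First I would prove correctness. The central invariant I would maintain is that whenever \textsc{EnumAcyclicLexi}$(t, \mL, i)$ is invoked, $t$ is a partial assignment over $A_1, \dots, A_{i-1}$ that extends to at least one tuple of $Q(D)$, and $\mL$ is \emph{exactly} the set of values $a \in \domain(A_i)$ such that $(t, a)$ again extends to some tuple of $Q(D)$. Establishing that the line computing $\mL'$ via the two-phase semijoins yields precisely this set is the key claim; it rests on the classical fact that for an acyclic query a full (bottom-up then top-down) semijoin reduction makes the instance globally consistent, so every surviving value of an attribute participates in a complete join answer. Given the invariant, the lexicographic ordering follows because at each level we iterate the values of $\mL$ in increasing order and recurse fully on one before advancing to the next, drawing these values from the pre-sorted $\domain(A_i)$; deduplication is automatic, since $\mL$ is a \emph{set} of distinct projected values, so each output tuple $(a_1, \dots, a_m)$ is emitted along a unique root-to-leaf path.

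The crux of the delay analysis — and the step I expect to be the main obstacle — is ruling out wasted work. By the invariant, every value placed in $\mL$ extends to a genuine answer, so no branch of the recursion is a dead end: each downward step eventually reaches a leaf that emits a tuple. The recursion has constant depth $m$, and between two consecutive emitted tuples the traversal backtracks and re-descends along at most $O(m) = O(1)$ edges; each such step recomputes one $\mL'$ by a single two-phase semijoin pass, which for an acyclic query costs $O(|D|)$. To keep this within $O(|D|)$ rather than $O(|D|\log|D|)$, I would produce each $\mL'$ already in sorted order without re-sorting, by scanning the pre-sorted $\domain(A_{i+1})$ and testing membership in the surviving set via the hash indices built in preprocessing. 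Combining these, the work between any two consecutive outputs — as well as before the first and after the last — is $O(|D|)$.

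Finally, for the resource bounds I would account for preprocessing as the full reducer pass ($O(|D|)$ by Yannakakis) together with sorting each domain $\domain(A_i)$ and building the sorted hash indices, giving $T_p = O(|D|\log|D|)$; the space is $O(|D|)$ since the indices and sorted domains are linear and the recursion stack holds only $O(m)=O(1)$ frames, each storing a list $\mL$ of size $O(|D|)$. The delicate points to get right are the exact characterization of the set $\mL$ produced by the conditional semijoins and the argument that no semijoin pass is ever wasted, as these are precisely what simultaneously guarantee the correct ordering and the linear delay.
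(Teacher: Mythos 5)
Your proof is correct and follows essentially the same route as the paper's: recursion over the output attributes in pre-sorted domain order, with two-phase semijoins characterizing exactly the values that extend the current prefix, so that constant recursion depth times the $O(|D|)$ semijoin cost yields $O(|D|)$ delay after $O(|D|\log|D|)$ preprocessing. You are in fact more explicit than the paper on the two points its proof leaves implicit --- that global consistency of the reduced instance rules out dead-end branches, and that each $\mL'$ must be produced already in sorted order (by scanning the pre-sorted domain) to avoid an extra logarithmic factor in the delay.
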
 }

\section{Star Queries} \label{sec:star}

In this section, we present a specialized data structure for the {\em star query}, which is represented as: $ Q^\star_m = \pi_{\bA} (R_1(A_1, B) \Join R(A_2,B) \Join \dots \Join R_m(A_m, B)).$
where $\bA = \{A_1, \cdots, A_m\}$. All relations in a star query join on exactly the same attribute(s). 
In this following, we present a specialized data structure on ranked enumeration for $Q^\star_m$ in Section~\ref{sec:algorithm}, and prove the optimality in Section~\ref{sec:optimality}. 

\subsection{The Algorithm}
\label{sec:algorithm}
Consider the star query $Q^\star_m$, a database $D$ and a ranking function $\rank$.  Now we present a data structure for Theorem~\ref{thm:ranked:star}. 

\begin{algorithm}[t]
	\SetCommentSty{textsf}
	\DontPrintSemicolon 
	\SetKwInOut{Input}{Input}
	\SetKwInOut{Output}{Output}
	\SetKwFunction{len}{\textsf{len()}}
	\SetKwFunction{insertt}{\textbf{insert}}
	\SetKwData{pq}{$\mathsf{PQ}$}
	\Input{Input star query $Q^\star_m$, ranking function $\mathsf{rank}$ and database $D$; degree threshold $\delta \ge 1$}
    \Output{Heavy output $\mO^H$ and priority queue $\pq$}
	\SetKwData{ptr}{\textsf{ptr}}
	
	\SetKwData{return}{\textbf{return}}
	\SetKwData{dedup}{\textsf{dedup}}
	\SetKwData{next}{\textsf{next}}
	\ForEach{$i \in \{1,2,\cdots, m\}$}{
		$R^H_i \leftarrow \{ t \in R_i : |\sigma_{A_i = \pi_{A_i} (t)}| \ge \delta\}$;\;
		$R^L_i \leftarrow \{ t \in R_i : |\sigma_{A_i = \pi_{A_i} (t)}| < \delta\}$;\;
	}
	Compute $\mO^H \leftarrow \pi_{\bA} \left(R^H_1 \Join \dots \Join R^H_m\right)$;\;
	Sort $\mO^H$ by \rank;\;
	\For{$i \in \{0, 1, \dots, m - 1\} $}{
		$Q_i \gets R^H_1 \Join \dots \Join R^H_{m-1}\Join R^L_i \Join R_{i+1} \Join \dots \Join R_m$;\;
		$\mT_i \gets$ a join tree for $Q$ with $R_i$ as root and all other relations as children of $R_i$;\;
		{\sc PreprocessAcyclic}$(Q_i, \mT_i)$;\;  
		$\next \leftarrow ${\sc EnumAcyclic}$(Q_i, \mT_i)$;\;
		$\pq.\insertt(\next)$; \tcc{insert the smallest tuple into $\pq$}}
	
	\caption{{\sc PreprocessStar}}
	\label{algo:preprocess:star}
\end{algorithm} 

\smallskip
\introparagraph{Preprocessing Phase} 
Without loss of generality, assume that there is no dangling tuples in $D$. Moreover, if $\bA$ does not include an attribute $A$, we can remove efficiently $R_i$ using a semi-join.   We first fix a degree threshold $\delta \ge 1$ (whose value will be determined later). For each $i \in \{1,2,\cdots,m\}$, a value $a_i \in \domain(A)$ is \emph{heavy} if it has degree larger than $\delta$ in $R_i$, i.e., $|\sigma_{A=a_i} (R_i)| \geq \delta$, and \emph{light} otherwise. A tuple $t = (a_i, b) \in R_i$ is {\em heavy} if $a_i$ is heavy. For $R_i$, let $R^H_i, R^L_i$ be the set of heavy and light tuples in $R_i$. An output $t = (a_1, a_2, \dots, a_m) \in Q^\star_m(D)$ is \emph{heavy} if $a_i$ is heavy in $R_i$ for each $i \in \{1,2,\cdots,m\}$, and \emph{light} otherwise. In this way, we can divide the output $Q^\star_m(D)$ into $\mathcal{O}^{H}$ and $\mathcal{O}^{L}$, containing all heavy and light output tuples separately. In the preprocessing phase, our goal is to materialize all heavy output tuples ($\mathcal{O}^{H}$) ordered by \rank. Details are described in~\autoref{algo:preprocess:star}. We compute 
$\mO^H = \pi_{\bA} \left(R^H_1 \Join R^H_2 \Join \cdots \Join R^H_m \right)$ by invoking the Yannakakis algorithm~\cite{yannakakis1981algorithms}, and then sort $\mO^H$ by \rf. Next, we insert the smallest query result from $\mO^H$ into the priority queue. Then, we define $m$ different subqueries as
$Q_i = \pi_{\bA} \left(R^H_1 \Join \dots \Join R^H_{i-1}\Join R^L_i \Join R_{i+1} \Join \dots \Join R_m \right)$
where tuples in relation $R_j$ are heavy for any $j < i$ and tuples in relation $R_i$ are light. For such $Q_i$, we consider a join tree $\mT_i$ in which $R_i$ is the root and all other relations are children of $R_i$.  We preprocess a data structure for $Q_i$ with $\mT_i$, by invoking Algorithm~\ref{algo:preprocess}.

\smallskip
\introparagraph{Enumeration Phase} As described in \autoref{algo:enumeration:star}, the high-level idea in the enumeration is to perform a ($m+1$)-way merge over $\mO^H$ and $Q_i$'s. Specifically, we maintain a priority queue $\pq$ with one entry for each subquery $Q_i$ and one entry for $\mO^H$.  Once the smallest element is extracted from \pq\ (say $t$ generated by $Q_i$), we extract the next smallest candidate from $Q_i$ (if there is any) and insert it into $\pq$.  Moreover, finding the smallest candidate output result from $\mO^H$ is trivial since $\mO^H$ have been materialized in a sorted way in the preprocessing phase. We conclude this subsection with the formal statement of the result.

\begin{algorithm}[t]
	\SetCommentSty{textsf}
	\DontPrintSemicolon 
	\SetKwInOut{Input}{Input}
	\SetKwInOut{Output}{Output}
	\SetKwFunction{len}{\textsf{len()}}
	\SetKwFunction{insertt}{\textbf{insert}}
	\SetKwFunction{pop}{\textbf{pop()}}
	\SetKwData{pq}{$\mathsf{PQ}$}
	\Input{Star query $Q^\star_m$, ranking function $\mathsf{rank}$ and database $D$; 
		Output of $\mO^H$ and priority queue $\pq$}
	\Output{$Q^\star_m(D)$ in ranked order }
	\SetKwData{ptr}{\textsf{ptr}}
	
	\SetKwData{return}{\textbf{return}}
	\SetKwData{dedup}{\textsf{dedup}}
	\SetKwData{counter}{\textsf{counter}}
	\SetKwData{next}{\textsf{next}}	
	
	\While{$\pq \neq \emptyset$}{
		$t \leftarrow \pq.\pop$;\;
		{\bf output} $t$; \ \ \ \ \tcc*{enumerate the result}
		\If{$t \notin \mO^H$}{
			$i \leftarrow $ smallest positive index such that $\pi_{A_j}(t)$ is heavy for all $j < i$ and $\pi_{A_i}(t)$ is light;\;
			$\next \leftarrow$ $\textsc{EnumAcyclic}(Q_i, \mT_i)$;\;
			$\pq.\insertt(\next)$;\;
		} \lElse {$\pq.\insertt(\mO^H.\pop)$;}
	}
	
	\caption{{\sc EnumStar}}
	\label{algo:enumeration:star}
\end{algorithm} 

{
\begin{restatable}{lemma}{lemstar} \label{lem:star}
	\autoref{algo:preprocess:star} runs in time $T = O(|D| \cdot (|D|/\delta)^{m-1})$ and requires space $S = O((|D|/\delta)^m)$. \autoref{algo:enumeration:star}  correctly enumerates the result of the query in ranked order with delay $O(|D|\log|D|/\delta)$.
\end{restatable}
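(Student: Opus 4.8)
The plan is to verify the three claims---space, preprocessing time, and enumeration correctness/delay---separately, each resting on the heavy/light partition of the output. First I would argue that the decomposition induced by \autoref{algo:preprocess:star} is both exhaustive and pairwise disjoint: every tuple $(a_1,\dots,a_m)\in Q^\star_m(D)$ either has all coordinates heavy, in which case it lands in $\mO^H$, or has a smallest coordinate index $i$ with $a_i$ light, in which case it is produced by $Q_i$ and by no other $Q_j$. This disjointness is exactly what makes the $(m{+}1)$-way merge of \autoref{algo:enumeration:star} correct: since no output tuple is generated by two different streams, global correctness reduces to (i) each stream being individually emitted in sorted, duplicate-free order, and (ii) the standard fact that a priority-queue merge of sorted streams yields a globally sorted sequence. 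Stream $\mO^H$ is sorted and deduplicated during preprocessing, and each $Q_i$ is enumerated by \textsc{EnumAcyclic}, which emits $\pi_{\bA}(Q_i)$ correctly and without duplicates by \autoref{lem:correctness}.

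For the \emph{space} bound, the key observation is that a heavy value of $A_i$ consumes at least $\delta$ tuples of $R_i$, so each relation has at most $|D|/\delta$ distinct heavy values; hence $|\mO^H|\le (|D|/\delta)^m$. The $m$ auxiliary structures built by the calls to \autoref{algo:preprocess} each occupy $O(|D|)$ space by \autoref{lem:preprocess}, which is dominated by $\mO^H$, giving $S=O((|D|/\delta)^m)$. For the \emph{time} bound, the cost is dominated by materializing $\mO^H$. I would bound the size of the underlying full join $J = R^H_1 \Join \dots \Join R^H_m$ as $|J| = \sum_b \prod_{i=1}^m d_i(b)$, where $d_i(b)$ is the number of heavy $A_i$-values paired with $b$. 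Bounding $d_i(b)\le |D|/\delta$ for the $m-1$ relations other than $R^H_1$ and telescoping the remaining factor, $\sum_b d_1(b)\le |R^H_1|\le |D|$, yields $|J|\le |D|\,(|D|/\delta)^{m-1}$. Since the Yannakakis pass computes this acyclic join in time linear in its input plus $|J|$, and the projection-and-dedup (by hashing) followed by sorting $\mO^H$, together with the $m$ invocations of \autoref{algo:preprocess} (each $O(|D|\log|D|)$ by \autoref{lem:preprocess}), are all dominated in the regime $\delta=\Omega(\log|D|)$, the total is $T=O(|D|\,(|D|/\delta)^{m-1})$.

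The main obstacle, and the step I would spend the most care on, is the \emph{delay}. Between two consecutive outputs the algorithm pops one element of the top-level priority queue and replenishes it: for the $\mO^H$ stream this is an $O(\log|D|)$ access into a presorted array, while for a light stream $Q_i$ it triggers one incremental call to \textsc{EnumAcyclic} on $\mT_i$. Here the light root is decisive. Since $R^L_i$ is the root of $\mT_i$ and every other relation is a leaf child, any fixed output $(a_1,\dots,a_m)$ can be derived only through the join values $b$ with $(a_i,b)\in R^L_i$, and the lightness of $a_i$ caps the number of such $b$ by $\delta$. Thus at most $\delta$ distinct root cells ever evaluate to the same output, so \textsc{EnumAcyclic} discards $O(\delta)$ duplicate cells---each costing $O(\log|D|)$ to pop and only $O(1)$ leaf work to replenish, as the children carry no further recursion---before surfacing the next distinct tuple.

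Hence each replenishment, and therefore the delay, is $O(\delta\log|D|)$, and establishing that the number of duplicate derivations of any single output is capped by $\delta$ is the crucial step: it is exactly what turns a smaller threshold $\delta$ (equivalently, more heavy material precomputed into $\mO^H$, hence larger $S$ and $T$) into a smaller worst-case delay, which is the intended tradeoff. I would therefore prove the duplicate bound first and only then assemble it with the pop cost of the Fibonacci-heap operations to obtain the stated delay guarantee.
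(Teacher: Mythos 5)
Your proposal is correct and takes essentially the same route as the paper's own proof: count heavy values ($\le |D|/\delta$ per relation) to bound the size/time of materializing $\mathcal{O}^H$ and its space, use the lightness of the root relation $R^L_i$ in $\mT_i$ to cap duplicate derivations of any output at $\delta$ (the paper invokes its fine-grained degree analysis for exactly this), and argue correctness via the $(m+1)$-way priority-queue merge of disjoint sorted streams, where your explicit exhaustiveness/disjointness argument merely spells out what the paper leaves implicit. Note that, exactly like the paper's proof, you derive delay $O(\delta \log |D|)$ — the bound consistent with \autoref{thm:ranked:star} under $\delta = |D|^{1-\epsilon}$ — rather than the $O(|D|\log|D|/\delta)$ written in the lemma statement, which appears to be a typo in the paper, so this is not a gap in your argument.
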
}

\subsection{Tradeoff Optimality} \label{sec:optimality}

We next present conditional optimality for our tradeoff achieved in Theorem~\ref{thm:ranked:star}. Before showing the proof, we first revisit a result on unranked evaluation for $Q^\star_m$ in~\cite{amossen2009faster}:
\begin{lemma}[\cite{amossen2009faster}] \label{lem:twopath:evaluation}
	There exists a combinatorial\footnote{An algorithm is called combinatorial if it does not use algebraic techniques such as fast matrix multiplication.} algorithm that can evaluate $Q^\star_m$ on any database $D$ in time $O\left(|D| \cdot {|Q^\star_m(D)|}^{1-\frac{1}{m}}\right)$.
\end{lemma}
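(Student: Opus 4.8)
The plan is to show that the most naive evaluation strategy already meets the bound once the size of the intermediate join is charged against the output size. Concretely, I would compute the full natural join of $R_1,\dots,R_m$ on the shared attribute $B$, project onto $\bA=\{A_1,\dots,A_m\}$, and eliminate duplicates on the fly. Implementationally: hash the tuples of each $R_i$ into buckets indexed by their $B$-value, then iterate, value by value, over the Cartesian product of the $m$ buckets sharing that value; each generated tuple $(a_1,\dots,a_m)$ is projected and inserted into one global hash table that accumulates $Q^\star_m(D)$ without duplicates. Under the uniform-cost RAM model with perfect hashing assumed in~\autoref{sec:framework}, building the buckets costs $O(|D|)$ and each product tuple is handled in $O(1)$, so the running time is $O(|D| + F)$, where $F=\sum_{b} \prod_{i=1}^m d_i(b)$ is the size of the full join and $d_i(b)=|\sigma_{B=b}(R_i)|$.

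\textbf{Key inequality.} The heart of the argument is to bound $F$ by $O(|D|\cdot Z^{1-1/m})$, where $Z=|Q^\star_m(D)|$. Fix a value $b$. The tuples it contributes to the full join form exactly the Cartesian product $\prod_{i=1}^m \{a_i : (a_i,b)\in R_i\}$, which has $\prod_i d_i(b)$ distinct elements, and \emph{every} one of them is a genuine output tuple witnessed by $b$. Hence the block generated by a single value is a subset of $Q^\star_m(D)$, giving the crucial bound $\prod_{i=1}^m d_i(b)\le Z$ for every $b$. I would then write $\prod_i d_i(b)=\big(\prod_i d_i(b)\big)^{1/m}\cdot\big(\prod_i d_i(b)\big)^{1-1/m}$, bound the second factor by $Z^{1-1/m}$ using the subset inequality, and bound the first factor by $\tfrac{1}{m}\sum_i d_i(b)$ via AM--GM. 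This yields $\prod_i d_i(b)\le \tfrac{1}{m}\,Z^{1-1/m}\sum_i d_i(b)$.

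\textbf{Summation and conclusion.} Summing the last inequality over all values $b$ and exchanging the order of summation gives $F\le \tfrac{1}{m}Z^{1-1/m}\sum_i\sum_b d_i(b)=\tfrac{1}{m}Z^{1-1/m}\sum_i |R_i|=\tfrac{1}{m}|D|\,Z^{1-1/m}$. Plugging this back, the total time is $O(|D| + F)=O(|D|\,Z^{1-1/m})$, where the $O(|D|)$ term is absorbed because $Z\ge 1$ forces $Z^{1-1/m}\ge 1$ (the degenerate case $Z=0$ is detected in $O(|D|)$ time by a semijoin pass and returns the empty result). Since the procedure uses only bucketing, nested loops, and hashing, it is combinatorial in the sense of the footnote, and no knowledge of $Z$ or any degree threshold is needed---the bound holds for the threshold-free algorithm automatically.

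\textbf{Main obstacle.} The only genuine difficulty is output-sensitivity: a priori the full join size $F$ can be as large as $\Theta(|D|\cdot Z)$---a factor $Z^{1/m}$ above the target---so merely running the join is not obviously good enough, which is exactly why the materialize-then-deduplicate baselines are expensive. The per-value bound $\prod_i d_i(b)\le Z$ is what closes this gap, and I expect \emph{recognizing} that each $B$-value's contribution is itself a subset of the final answer to be the one non-obvious step; everything after it is AM--GM and a change in the summation order. A secondary point to verify is that duplicate elimination truly costs $O(1)$ per generated tuple, which relies on the perfect-hashing assumption of the computational model.
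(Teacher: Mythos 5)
Your proof is correct, and it is worth noting that the paper contains no internal proof to compare it against: \autoref{lem:twopath:evaluation} is imported as a black box from Amossen and Pagh~\cite{amossen2009faster}. Your derivation is a legitimate self-contained replacement, and every step checks out: for each join value $b$, the block $\{a_1:(a_1,b)\in R_1\}\times\cdots\times\{a_m:(a_m,b)\in R_m\}$ consists entirely of genuine output tuples, so $\prod_i d_i(b)\le Z$ where $Z=|Q^\star_m(D)|$; writing $\prod_i d_i(b)=\bigl(\prod_i d_i(b)\bigr)^{1/m}\bigl(\prod_i d_i(b)\bigr)^{1-1/m}$, applying AM--GM to the first factor, and summing over $b$ with $\sum_b\sum_i d_i(b)=\sum_i|R_i|=|D|$ gives $F\le\tfrac{1}{m}|D|\,Z^{1-1/m}$ for the full-join size $F$, so bucket-by-$B$ enumeration with hash-based deduplication runs in $O(|D|+F)=O(|D|\,Z^{1-1/m})$ in the paper's uniform-cost RAM model with perfect hashing; your two caveats (reading the bound as $O(|D|+|D|\,Z^{1-1/m})$ when $Z=0$, and charging the $O(1)$ per generated tuple to the perfect-hashing assumption) are exactly the right ones. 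The methodological contrast is also interesting: analyses in this space, including the paper's own star-query machinery (\autoref{algo:preprocess:star} and the degree-based discussion in \autoref{sec:finegrained}), proceed by a heavy/light split on degrees with an explicit threshold $\delta$, whereas your per-value subset bound plus AM--GM amortizes globally, is threshold-free, and needs no knowledge of $Z$ --- showing that the naive multiway hash join is already output-sensitive at this rate. The trade-off is that your key observation is specific to the star shape: the injection of each $b$-block into $Q^\star_m(D)$ uses both that all relations share the single join attribute $B$ and that every $A_i$ survives the projection, so the argument does not transfer as stated to general acyclic join-project queries, which is precisely why the paper's general algorithm in \autoref{sec:acyclic} needs a different mechanism.
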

This result was presented over a decade ago without any improvement since then. Thus, it is not unreasonable to conjecture that \autoref{lem:twopath:evaluation} is optimal. Based on its conjectured  optimality, we can show the following result for unranked enumeration.

\begin{restatable}{lemma}{lemoptimal} \label{lemma:optimal}
	Consider star query $Q^\star_m$, database $D$ and some constant $\epsilon \in [0,1]$. If there exists an algorithm that supports $O(|D|^{1-\epsilon} \log |D|)$-delay enumeration after $O(|D|^{1 + (m-1)\epsilon - \epsilon'})$ preprocessing time for some constant $\epsilon' > 0$, the optimality of Lemma~\ref{lem:twopath:evaluation} will be broken.
\end{restatable}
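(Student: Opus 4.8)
The plan is to argue by contradiction through a reduction from unranked \emph{evaluation} of $Q^\star_m$ to ranked enumeration. Suppose such an enumeration algorithm exists, and write $n = |D|$ and $N = |Q^\star_m(D)|$. Given any database, I would run its preprocessing phase in $O(n^{\,1+(m-1)\epsilon-\epsilon'})$ time and then drive the enumeration phase until it signals completion; this outputs the entire set $Q^\star_m(D)$ (the order is irrelevant for evaluation). The total time is $T = O(n^{\,1+(m-1)\epsilon-\epsilon'}) + N\cdot O(n^{\,1-\epsilon}\log n)$, which depends only on $n$ and on the output size $N$. If I can exhibit a parameter regime in which $T = o(n\cdot N^{\,1-1/m})$, then this evaluation procedure beats the running time of \autoref{lem:twopath:evaluation}, contradicting its conjectured optimality; and since every subroutine used in the reduction is combinatorial, the contradiction lands inside the combinatorial model to which the conjecture applies.

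The next step is to fix the output-size regime. I parametrize $N = \Theta(n^{c})$ for an exponent $c$ to be chosen, and note this regime is realizable: place all joinable tuples on a single value of the join attribute $B$, give each $R_i$ exactly $\Theta(n^{c/m})$ distinct $A_i$-values, so that $N = \Theta(n^{c})$, and pad the remaining tuples with private, non-joining $B$-values so that $|D| = \Theta(n)$; this works whenever $c/m \le \epsilon \le 1$. In this regime the conjectured-optimal evaluation time is $n\cdot N^{\,1-1/m} = \Theta(n^{\,1 + c(m-1)/m})$, while the reduction costs $\Theta(n^{\,1+(m-1)\epsilon-\epsilon'}) + \Theta(n^{\,c+1-\epsilon}\log n)$.

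The calculation then reduces to forcing both terms of the reduction strictly below the exponent $1 + c(m-1)/m$. The enumeration term obeys $c + 1 - \epsilon < 1 + c(m-1)/m$ exactly when $c < m\epsilon$, and there the polynomial separation $\Theta(n^{(m\epsilon - c)/m})$ swallows the stray $\log n$ factor. The preprocessing term obeys $1 + (m-1)\epsilon - \epsilon' < 1 + c(m-1)/m$ exactly when $c > m\epsilon - \tfrac{m}{m-1}\epsilon'$. Together these define the interval $c \in (\,\max\{0,\ m\epsilon - \tfrac{m}{m-1}\epsilon'\},\ m\epsilon\,)$, which is nonempty for every $\epsilon > 0$ since its upper endpoint $m\epsilon$ is positive and it has positive width; fixing any interior value $c$ (and observing $0 < c < m\epsilon \le m$, so the regime is valid) yields $T = o(n\cdot N^{\,1-1/m})$ and completes the contradiction.

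I expect the main obstacle to be pinning down this feasible window for $c$ and verifying it is nonempty: both terms of the reduction must be driven below the conjectured bound simultaneously, and it is precisely the $-\epsilon'$ saving in the preprocessing that opens an exponent gap of width $\tfrac{m}{m-1}\epsilon'$ to absorb the enumeration term, whose own slack is governed by how far $c$ lies below $m\epsilon$. The secondary things to get right are that the hard-instance family genuinely realizes output size $\Theta(n^{c})$ (so the conjectured lower bound is actually asserted in that regime), that the whole reduction stays within the combinatorial model, and that the delay's $\log n$ factor is dominated by the strict polynomial separation rather than eroding it.
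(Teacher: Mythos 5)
Your proposal is correct and follows essentially the same route as the paper's proof: both reduce unranked evaluation of $Q^\star_m$ to (preprocessing $+$ exhaustive enumeration), giving total time $O\bigl(|D|^{1+(m-1)\epsilon-\epsilon'} + |Q^\star_m(D)|\cdot|D|^{1-\epsilon}\log|D|\bigr)$, and then select an output-size regime in which this is polynomially below the conjectured bound $|D|\cdot|Q^\star_m(D)|^{1-1/m}$. The differences are only presentational: the paper balances the two terms by fixing the output size at $|D|^{m\epsilon-\epsilon'}/\log|D|$ (implicitly assuming $\epsilon'<\epsilon$, which is without loss of generality), whereas you exhibit a whole interval of feasible exponents $c$ and additionally spell out the explicit hard-instance construction and the absorption of the $\log$ factor, both of which the paper leaves implicit.
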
 

The  lower bound holds for any ranking function. \autoref{lemma:optimal} implies that for star queries, both \autoref{thm:general} and \autoref{thm:ranked:star} are optimal. Before concluding this section, we also remark on the question of whether the logarithmic factor that we obtain in the delay guarantee is removable. Prior work~\cite{deep2019ranked} showed that for the following simple join query $Q =  R(x) \Join S(y)$ over \SUM, there exists no algorithm supporting constant-delay enumeration after linear preprocessing time. Note that this does not rule out a sub-logarithmic delay guarantee, which remains an open problem.

	\section{General queries}
\label{sec:cyclic}

In this section, we will describe how to extend the algorithm for acyclic queries to handle cyclic queries. {The key idea is to transform the cyclic query into an acyclic one, by constructing a GHD as defined in~\autoref{sec:framework}. A GHD automatically implies an algorithm for cyclic joins. After materializing the results of the subquery induced by each node in the decomposition, the residual query becomes acyclic.  Hence, we can apply our algorithm for acyclic queries directly obtaining the following}:

\begin{theorem} \label{thm:cq}
	For a join-project query $Q$, a database instance $D$ and a ranking function $\rank \in \{\textrm{\SUM}, \textrm{\lexi} \}$, the query results $Q(D)$ can be enumerated according to $\rank$ with  $O(|D|^\fhw \log |D|)$ delay, after $O(|D|^\fhw \log |D|)$ preprocessing time.
\end{theorem}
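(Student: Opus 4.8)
The plan is to reduce the cyclic case to the acyclic case already handled by Theorem \ref{thm:general}, using a generalized hypertree decomposition (GHD) as the bridge. First I would fix an optimal GHD $(\htree, (\bag_t)_{t \in V(\htree)})$ of $Q$ achieving width $\fhw{Q}$. For each node $t$ of this decomposition, I materialize the intermediate relation $R_t := \Join_{F \subseteq \bag_t} F$ projected onto $\bag_t$, i.e. the join of all input atoms whose attributes are contained in the bag. By the definition of the fractional hypertree width and the AGM bound together with a worst-case optimal join algorithm such as \panda\ or the algorithm of~\cite{ngo2012worst}, each such $R_t$ can be computed in time $O(|D|^{\rho^*(\bag_t)}) \le O(|D|^\fhw)$ and has size at most $O(|D|^\fhw)$. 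I would first argue that computing all bags costs $O(|D|^\fhw)$ since the number of bags depends only on the (fixed) query.

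The key structural observation is that, after this materialization, the query $Q$ is equivalent to a new query $Q'$ whose atoms are exactly the materialized bag relations $R_t$, and whose hypergraph is precisely the tree $\htree$; hence $Q'$ is acyclic and $\htree$ is a valid join tree for it. Crucially, the projection $\pi_{\bA}$ of the original query can be pushed onto $Q'$ without changing the output, because every attribute of $\mathbb{A}$ lives in some bag by the covering condition of the GHD, and the connectivity condition guarantees $\htree$ remains a legal join tree once we treat each $R_t$ as a single relation. Thus $Q(D) = \pi_{\bA}(Q'(D'))$ where $D'$ is the database consisting of the materialized bag relations, and $Q'$ is an acyclic join-project query over $D'$.

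Now I would invoke Theorem \ref{thm:general} on $Q'$ over the instance $D'$. Since $|D'| = O(|D|^\fhw)$, that theorem gives enumeration in ranked order with delay $O(|D'| \log |D'|) = O(|D|^\fhw \log |D|)$ after preprocessing $O(|D'|) = O(|D|^\fhw)$. Combining the $O(|D|^\fhw)$ cost of materializing the bags with the $O(|D'|)$ preprocessing of Theorem \ref{thm:general} gives total preprocessing $O(|D|^\fhw \log |D|)$ (the logarithmic factor absorbs any sorting performed inside the worst-case optimal join or the reducer pass), and the stated delay follows directly. The argument is ranking-agnostic in the sense that the ranking function $\rank$ is evaluated on the output attributes $\bA$ exactly as in the acyclic algorithm, so both \SUM\ and \lexi\ are handled uniformly.

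The main obstacle I anticipate is the correctness of pushing the projection through the decomposition: I must verify that materializing bag relations and then running the acyclic projection algorithm does not lose or duplicate any output tuple, and in particular that the anchor attributes $\key{\cdot}$ and the projection attributes $\bA^\pi_t$ induced on the tree $\htree$ are well-defined when a bag covers several original attributes. This requires checking that the connectivity property of the GHD (each attribute induces a connected subtree) is exactly what licenses treating $\htree$ as a join tree for the projection algorithm of Section \ref{sec:acyclic}, and that the Yannakakis-style full reducer pass used in preprocessing still removes all dangling tuples correctly at the granularity of bags. A secondary, more routine point is confirming that the $O(|D|^\fhw)$ bound on both the size and computation time of every bag relation holds simultaneously, which follows from the standard fractional-cover argument but should be stated explicitly.
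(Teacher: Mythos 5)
Your proposal matches the paper's own argument: Section~\ref{sec:cyclic} proves \autoref{thm:cq} exactly by constructing a GHD of width $\fhw{Q}$, materializing the subquery induced by each bag (with size and time $O(|D|^{\fhw{Q}})$ via worst-case optimal joins), observing that the residual query over the bag relations is acyclic with the decomposition tree as its join tree, and then invoking \autoref{thm:general} on the materialized instance. Your additional care about pushing the projection through the decomposition and about the anchor attributes being well-defined on bags is a sound elaboration of what the paper leaves implicit, not a different route.
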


We now go one step further and extend our algorithm to queries that are {\em unions} of join-project queries (UCQs) using an idea introduced by~\cite{deep2019ranked, tziavelis2020optimal}. A UCQ query is of the form $Q = Q_1 \cup Q_2 \cup \cdots \cup Q_m$, where each $Q_i$ is a join-project query defined over the same projection attributes $\bA$. Semantically, $Q(D) = \bigcup_i Q_i(D)$. Recent work by Abo Khamis et al.~\cite{abo2017shannon} presents an improved algorithm (called \panda) that constructs multiple GHDs by partitioning the input database into disjoint pieces and build a GHD for each piece. In this way, the size of materialized subquery can be bounded by $O(|D|^\subw)$, where $\subw$ is the \emph{submodular width}~\cite{marx2013tractable} of input query $Q$. Moreover, $\subw \leq \fhw$ holds generally for query $Q$, thus improving the previous result on $\fhw$. By using~\autoref{thm:general} in conjunction with data-dependent decompositions from \panda\, we can immediately obtain the following result:
 
\begin{theorem} \label{thm:ucq}
For a join-project query $Q$, a database instance $D$ and a ranking function $\rank \in \{\textrm{\SUM}, \textrm{\lexi} \}$, the query results $Q(D)$ can be enumerated according to $\rank$ with $O(|D|^\subw \log |D|)$ delay, after $O(|D|^\subw \log |D|)$ preprocessing time.
\end{theorem}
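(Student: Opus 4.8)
The plan is to reduce the UCQ to a bounded collection of \emph{acyclic} join-project enumeration problems, one per data-dependent piece produced by \panda, and then merge their ranked streams. First I would treat each disjunct $Q_i$ separately. Applying \panda\ to $Q_i$ partitions the input into disjoint pieces and, for each piece, produces a GHD whose every bag has fractional edge cover number at most $\subw$; materializing each bag via a worst-case optimal join costs $O(|D|^\subw)$ time and space, and after this materialization the residual query over the materialized bags is acyclic. Taking the correctness of \panda\ as given from~\cite{abo2017shannon}, the union of the outputs of these acyclic residual queries, over all pieces and all $i$, is exactly $Q(D) = \bigcup_i Q_i(D)$.

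Next I would invoke \autoref{thm:general} on each acyclic residual query. Since the materialized instance feeding each such query has size $N = O(|D|^\subw)$, \autoref{lem:preprocess} gives preprocessing time $O(N \log N) = O(|D|^\subw \log |D|)$ (using that $\subw$ is a constant for a fixed query), and \autoref{lem:delay} yields a ranked enumeration procedure with delay $O(N \log N) = O(|D|^\subw \log |D|)$. Crucially, because both \SUM\ and \lexi\ are functions of the projection-attribute values $\bA$ alone, and every piece preserves these attributes in its output, the order induced by \rank\ is consistent across all the streams, so the streams agree on both the set of global tuples and their ranked order.

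I would then perform a multi-way merge across all the sorted streams: maintain a single global priority queue holding the current head of each stream, repeatedly pop the global minimum, and advance the stream it came from. The number of streams is at most polylogarithmic in $|D|$ (the number of \panda\ pieces) times a constant (the fixed number of disjuncts), so every heap operation costs $O(\log\log|D|)$, which is dominated by the per-stream delay; the merge therefore preserves the $O(|D|^\subw \log|D|)$ bound.

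The remaining, and main, difficulty is deduplication: since all disjuncts share the projection attributes $\bA$, a single output tuple can be generated by several streams, and a union must emit it only once. I would reuse the comparison-against-the-last-emitted-tuple idea from \textsc{EnumAcyclic} (\autoref{line:if} of \autoref{algo:enumerate}): keep the last emitted tuple and, whenever the global minimum equals it, discard it and advance the corresponding stream without emitting. The key argument is that each distinct output tuple is produced by at most a constant number of streams, so all of its copies appear consecutively at the top of the global queue and can be absorbed with only a constant number of extra pops between two successive distinct outputs; hence the worst-case delay remains $O(|D|^\subw \log|D|)$. Establishing this bounded-duplication claim—rather than the routine merge bookkeeping—is where the real care is needed. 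Finally, since $\subw \le \fhw$ holds generally, this bound subsumes \autoref{thm:cq}.
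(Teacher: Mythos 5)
Your overall route is exactly the paper's: the paper derives \autoref{thm:ucq} by invoking \panda\ to partition the input into disjoint pieces, building a GHD of width at most $\subw$ for each piece so that every materialized residual query is acyclic, and then applying \autoref{thm:general} (the paper states the theorem as an ``immediate'' consequence of this composition and never spells out how the per-piece ranked streams are combined). So the multi-way merge and the deduplication analysis are precisely the details the paper leaves implicit, and that is also where your argument has a genuine flaw.

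The claim that \emph{each distinct output tuple is produced by at most a constant number of streams} is false in general. \panda's pieces partition the \emph{input} database, not the output: after projecting onto $\bA$, two different pieces can yield the same output tuple (e.g., the same pair $(a_1,a_2)$ witnessed through join values lying in different pieces), and the same tuple can of course also be produced by several disjuncts $Q_i$. Hence a distinct output tuple can appear in up to $S$ streams, where $S$ is the total number of streams --- by your own accounting, polylogarithmic in $|D|$, not constant. The merge itself still works: since \SUM\ and \lexi\ depend only on $\bA$ and ties are broken consistently (as in \textsc{EnumAcyclic}), all copies of a tuple occupy the global minimum simultaneously and are popped consecutively. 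But each absorbed copy forces one advance of some stream, costing that stream's delay $O(|D|^{\subw}\log|D|)$ by \autoref{lem:delay}, so the honest worst-case delay of your merge is $O(S \cdot |D|^{\subw}\log|D|) = O(|D|^{\subw}\,\mathrm{polylog}\,|D|)$ rather than $O(|D|^{\subw}\log|D|)$. The correct replacement for your unjustified claim is ``at most one copy per stream'' (each stream is internally deduplicated by \autoref{thm:general}), and then the polylog factor must be carried through. Note that this slack is of the same kind the paper itself suppresses --- \panda's own preprocessing is $\tilde{O}(|D|^{\subw})$ with hidden polylog factors --- so the theorem survives up to polylogarithmic factors, but your proof as written asserts the sharper bound on the strength of a duplication claim that does not hold.
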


\begin{example}
	Consider the 4-cycle (butterfly) query $\pi_{A,C} (R_1(A,B) \Join R_2(B,C) \Join R_3(C,D) \Join R_4(D,A))$ with ranking function $\rank(t) = \pi_A(t)  + \pi_C(t)$. With $\fhw = 2$, ~\autoref{thm:ucq} implies that the query results can be enumerated according to $\rank$ with $O(|D|^2 \log |D|)$ delay, after $ O(|D|^2)$ preprocessing time. 
	With $\subw = \frac{3}{2}$,~\autoref{thm:ucq} implies that query results can be enumerated according to $\rank$ with delay $O(|D|^{3/2} \log |D|)$,  after $O(|D|^{3/2} \log |D|)$ preprocessing time.
\end{example}
{\introparagraph{A note on optimality} The reader may wonder whether the exponent of \fhw\ and \subw\ in~\autoref{thm:cq} and~\autoref{thm:ucq} are truly necessary. For the triangle query $Q_\triangle(x,y) = R(x,y) \Join S(y,z) \Join T(z,x)$ which is the simplest cyclic query, $\fhw = \subw = 3/2$ and even after $30$ years, the original AYZ algorithm~\cite{alon1994finding} that detects the existence of a triangle in $O(|D|^{3/2})$ time still remains the best known combinatorial algorithm. It is widely conjectured~\cite{marx2021modern, abboud2014popular, abboud2018matching, kopelowitz2016higher} that there exists no better algorithm. As noted in~\cite{abo2017shannon}, the notion of submodular width was suggested as the yardstick for optimality. Indeed, the groundbreaking results by Marx~\cite{marx2013tractable} rules out algorithms with better dependence than \subw\ in the exponent for a small of class of queries but a general unconditional lower bound still remains out of reach. Thus, any improvement in the exponent would automatically imply a better algorithm for cycle detection since ranked enumeration is at least as hard. In~\autoref{sec:lowerbound}, we formally show that the exponential dependence of \subw\ in~\autoref{thm:ucq} is unavoidable subject to popular conjectures.}

	\section{Experimental Evaluation} \label{sec:experiments}
\begin{figure*}
	\begin{lstlisting}[language=SQL,
	deletekeywords={IDENTITY},
	deletekeywords={[2]INT},
	morekeywords={clustered},
	mathescape=true,
	xleftmargin=0pt,
	framexleftmargin=0pt,
	frame=tb,
	framerule=0pt ]
	
	$\mathsf{DBLP}_{2\mathsf{hop}}$ = SELECT DISTINCT $\mathsf{A}_1.\mathsf{name}, \mathsf{A}_2.\mathsf{name}$ FROM $\mathsf{Author}$ AS $\mathsf{A_1}$, $\mathsf{Author}$ AS $\mathsf{A_2}$, $\mathsf{AuthorPapers}$ AS $\mathsf{AP}_1$, $\mathsf{AuthorPapers}$ as $\mathsf{AP_2}$, $\mathsf{Paper}$ AS $\mathsf{P}$ WHERE $\mathsf{AP_1.pid} = \mathsf{AP_2.pid}$ AND $\mathsf{AP_1.aid} = \mathsf{A_1.aid}$ AND $\mathsf{AP_2.aid} = \mathsf{A_2.aid}$ AND $\mathsf{P.is\_research} = $true  ORDER BY $\mathsf{A}_1.\mathsf{weight} + \mathsf{A}_2.\mathsf{weight}$ LIMIT k;
	
	$\mathsf{DBLP}_{3\mathsf{hop}}$ = SELECT DISTINCT $\mathsf{A}.\mathsf{name}, \mathsf{P}.\mathsf{name}$ FROM $\mathsf{Author}$ AS $\mathsf{A}$, $\mathsf{Paper}$ AS $\mathsf{P}$, $\mathsf{AuthorPapers}$ AS $\mathsf{AP}_1$, $\mathsf{AuthorPapers}$ as $\mathsf{AP_2}$, $\mathsf{AuthorPapers}$ as $\mathsf{AP_3}$ WHERE $\mathsf{AP_1.pid} = \mathsf{AP_2.pid}$ AND $\mathsf{AP_2.aid} = \mathsf{AP_3.aid}$ AND $\mathsf{AP_1.aid} = \mathsf{A.aid}$ AND $\mathsf{AP_3.pid} = \mathsf{P.pid}$ AND $\mathsf{P.is\_research} = $true  ORDER BY $\mathsf{A}.\mathsf{weight} + \mathsf{P}.\mathsf{weight}$ LIMIT k;
	
	$\mathsf{DBLP}_{4\mathsf{hop}}$ = SELECT DISTINCT $\mathsf{A}_1.\mathsf{name}, \mathsf{A}_2.\mathsf{name}$ FROM $\mathsf{Author}$ AS $\mathsf{A_1}$, $\mathsf{Author}$ AS $\mathsf{A_2}$, $\mathsf{AuthorPapers}$ AS $\mathsf{AP}_1$, $\mathsf{AuthorPapers}$ as $\mathsf{AP_2}$, $\mathsf{AuthorPapers}$ as $\mathsf{AP_3}$, $\mathsf{AuthorPapers}$ as $\mathsf{AP_4}$, $\mathsf{Paper}$ AS $\mathsf{P_1}$, $\mathsf{Paper}$ AS $\mathsf{P_2}$ WHERE $\mathsf{AP_1.pid} = \mathsf{AP_2.pid}$ AND $\mathsf{AP_2.aid} = \mathsf{AP_3.aid}$ AND $\mathsf{AP_3.pid} = \mathsf{AP_4.pid}$ AND $\mathsf{AP_3.pid} = \mathsf{P_2.pid}$ AND $\mathsf{AP_1.pid} = \mathsf{P_1.pid}$ AND $\mathsf{AP_1.aid} = \mathsf{A_1.aid}$ AND $\mathsf{AP_4.aid} = \mathsf{A_2.aid}$ AND $\mathsf{P_1.is\_research} = $true AND $\mathsf{P_2.is\_research} = $true  ORDER BY $\mathsf{A}_1.\mathsf{weight} + \mathsf{A}2.\mathsf{weight}$ LIMIT k;
	
	$\mathsf{DBLP}_{3\mathsf{star}}$ = SELECT DISTINCT $\mathsf{A}_1.\mathsf{name}, \mathsf{A}_2.\mathsf{name}, \mathsf{A}_3.\mathsf{name}$ FROM $\mathsf{Author}$ AS $\mathsf{A_1}$, $\mathsf{Author}$ AS $\mathsf{A_2}$, $\mathsf{Author}$ AS $\mathsf{A_3}$, $\mathsf{AuthorPapers}$ AS $\mathsf{AP}_1$, $\mathsf{AuthorPapers}$ as $\mathsf{AP_2}$, $\mathsf{AuthorPapers}$ as $\mathsf{AP_3}$, $\mathsf{Paper}$ AS $\mathsf{P}$  WHERE $\mathsf{AP_1.pid} = \mathsf{AP_2.pid} = \mathsf{AP_3.pid}$ AND $\mathsf{AP_1.aid} = \mathsf{A_1.aid}$ AND $\mathsf{AP_2.aid} = \mathsf{A_2.aid}$ AND $\mathsf{AP_3.aid} = \mathsf{A_3.aid}$ AND $\mathsf{AP_3.pid} = \mathsf{P.pid}$ AND $\mathsf{P.is\_research} = $true ORDER BY $\mathsf{A}_1.\mathsf{weight} + \mathsf{A}_2.\mathsf{weight} + \mathsf{A}_3.\mathsf{weight}$ LIMIT k;
	\end{lstlisting}
	\vspace*{-2em}
	\caption{Network analysis queries for DBLP. Queries for IMDB are defined similarly (see~\cite{fullversion}).} \label{fig:dblpqueries}
\end{figure*}

In this section, we perform an extensive evaluation of our proposed algorithm. Our goal is to evaluate three aspects: $(a)$ how fast our algorithm is compared to state-of-the-art implementations for both \SUM\ and \lexi\ ranking functions on various queries {and datasets}, $(b)$ test the empirical performance of the space-time tradeoff in~\autoref{thm:ranked:star}, $(c)$ investigate the performance of our algorithm on various cyclic queries based on different shapes {and $(d)$ test the scalability behavior of our algorithm.}

\subsection{Experimental Setup}

We use Neo4j $4.2.3$ community edition, MariaDB $10.1.47$\footnote{Compared with MySQL, MariaDB performed better in our experiments, hence we report the results for MariaDB} and PostgreSQL $11.12$ for our experiments. All experiments are performed on a Cloudlab machine~\cite{duplyakin2019design} running Ubuntu $18.04$ equipped with two Intel E5-2630 v3 8-core CPUs@$2.40$ GHz and $128$ GB RAM. We focus only on the main memory setting and all experiments run on a single core. Since the join queries are memory intensive, we take special care to ensure that only one DBMS engine is running at a time, restart the session for each query to ensure temp tables in main memory are flushed out to avoid any interference, and also monitor that no temp tables are created on the disk. We only keep one database containing a single relation when performing experiments. We switch off all logging to avoid any performance impact. For PostgreSQL and MariaDB, we allow the engines to use the full main memory to ensure all temp tables are resident in the RAM and sorting (if any) happens without any disk IOs by increasing the sort buffer limit. For Neo4j, we allow the JVM heap to use the full main memory at the time of start up. We also build bidirectional B-tree indexes for each relation ahead of time and create named indexes in Neo4j. All of our algorithms are implemented in C++ and compiled using the GNU C++ $7.5.0$ compiler that ships with Ubuntu $18.04$. Each experiment is run $5$ times and we report the median after removing the slowest and the fastest run.

\subsubsection{Small-Scale Datasets} We use two real world small scale datasets for our experiments: the DBLP dataset, containing relationship between authors and papers, and the IMDB dataset, containing relationship between actors, directors, and movies. {We use these datasets for two reasons: $(i)$ both datasets have been found to be useful and studied extensively in practical problems such as similarity search~\cite{yu2012user}, citation graph analysis~\cite{rahm2005citation}, and network analysis~\cite{elmacioglu2005six, biryukov2008co}. $(ii)$ small-scale datasets allow experiments to finish for all systems allowing us to make a fair comparison and develop a fine-grained understanding. In line with prior work~\cite{kargar2020effective}, for each tuple we assign the weight attribute (and add it to the table schema) in two ways: first, we assign a randomly chosen value, and second, logarithmic weights in which the weight of the entity (author and paper in \textsf{DBLP}) $v$ is $\log_2(1+deg_v)$, where $deg_v$ denotes its degree in the relation. The schema for both datasets is as shown below (underlined attributes are primary keys for the relation):

\begin{enumerate*}
	\item DBLP: AuthorPapers$(aid, pid)$, Author$(\underline{aid}, name, weight)$, 
	
	 Paper$(\underline{aid}, title, venue, year, weight, is\_research)$.
	\item IMDB: PersonMovie$(pid, mid)$, Company$(\underline{cid}, name, nation)$,
	
	Person$(\underline{pid}, name, role, weight)$, 
	 
	 Movie$(\underline{mid}, name, year, genre, cid, weight)$

\end{enumerate*}

\introparagraph{Queries} We consider $4$ acyclic join queries as shown in~\autoref{fig:dblpqueries} for the small-scale datasets, which are  commonly seen in practice~\cite{sun2013mining, bonifati2020analytical}. Intuitively, the first three queries find all the top-k weighted 2-hops, 3-hops and 4-hops reachable attribute pairs within the DBLP network. As remarked by in~\cite{sun2013mining, chen2018inferring, kuccuktuncc2012recommendation}, these queries are of immense practical interest (e.g., see Table 4 in~\cite{sun2013mining}). Queries for IMDB dataset are defined similarly in~\autoref{sec:moreexp}}.  In Subsection~\ref{subsec:cyclic}, we also investigate the performance for cyclic queries. 

\subsubsection{Large-Scale Datasets} We also perform experiments on two real-world large scale relational datasets and one relational benchmark. The first dataset is from the Friendster~\cite{snapnets} online social network that contains $1.8B$ tuples. In the social network each, user is associated with multiple groups. The second dataset is the Memetracker~\cite{snapnets} dataset which describes user generated memes and which users have interacted with the meme. The dataset contains $418M$ tuples. For both Friendster and Memetracker, we use weights for users as the number of groups they belong to and the number of memes they create respectively. Finally, we also use the queries containing a ranking function from the LDBC Social Network Benchmark~\cite{erling2015ldbc} with scale factor $\mathsf{SF=10}$, a publicly available benchmark, to perform scalability experiments.

\introparagraph{Queries} For Friendster and Memetracker, we use two popular queries that are used in network analysis. Similar to the DBLP queries, we identify the ranked user pairs in the two hop and three hop neighborhoods for all users. The ranking is the sum of weights of the user pair. These queries have widespread application in understanding information flow in a network~\cite{myers2014information} and are used in recommendation systems~\cite{feng2019attention, li2019supervised}.  For LDBC benchmark, we use the multi-source version of \textbf{Q3, Q10} and \textbf{Q11}. Each of these queries are variants of the neighborhood analysis and contains \sqlhighlight{UNION}.

\subsection{Small Scale Experiments} \label{thm1:exp}

In this section, we compare the empirical performance of the algorithm given by~\autoref{thm:general} (labeled as \textsc{LinDelay} in all figures) against the baselines for each query. In order to perform a fair comparison, we materialize the top-$k$ answers in-memory since other engines also do it. However, a strength of our system is that if a downstream task only requires the output as a stream, we are able to enumerate the result instead of materializing it, which is not possible with other engines. 

\introparagraph{Sum ordering}~\autoref{fig:linear:notradeoff} shows the main results for the DBLP and IMDB datasets when the ranking function is the sum function and the weights are chosen randomly. Let us first review the results for the DBLP dataset. \autoref{fig:dblp:2path} shows the running time for different values of $k$ in the limit clause. The first observation is that all engines materialize the join result, followed by deduplicating and sorting according to the ranking function which leads to poor performance for all baselines. This is because all engines treat sorting and distinct clause as blocking operators, verified by examining the query plan. On the other hand, our approach is limit-aware. For small values of $k$, we are up to two orders of magnitude faster and as the value of $k$ increases, the total running time of our algorithm increases linearly. Even when our algorithm has to enumerate and materialize the entire result, it is still faster than asking the engines for the top-$10$ results. This is a direct benefit of generating the output in deduplicated and ranked order. As the path length increases from two to three and four path (\autoref{fig:dblp:3path} and \autoref{fig:dblp:4path}), the performance gap between existing engines and our approach also becomes larger.  We also point out that all engines require a large amount of main memory for query execution. For example, MariaDB requires about $40$GB of memory for executing $\mathsf{DBLP}_{4\mathsf{hop}}$. In contrast, the space overhead of our algorithm is dominated by the size of the priority queue. For DBLP dataset, our approach requires a measly $1.3$GB, $4$GB, $3$GB and $2.7$GB total space for $\mathsf{DBLP}_{2\mathsf{hop}}, \mathsf{DBLP}_{3\mathsf{hop}}, \mathsf{DBLP}_{4\mathsf{hop}}$ and $\mathsf{DBLP}_{3\mathsf{star}}$ respectively. For $\mathsf{DBLP}_{3\mathsf{hop}}, \mathsf{DBLP}_{4\mathsf{hop}}$ and $\mathsf{DBLP}_{3\mathsf{star}}$, we also implement breadth first search (BFS) followed by a sorting step using the idea of~\autoref{algo:lexicographic}. As it can be seen from the figures, BFS and sort provides an intermediate strategy which is faster than our algorithm for large values of $k$ but at the cost of expensive materialization of the entire result, which may not be always possible (and is the case for IMDB dataset). However, deciding to use BFS and sort requires knowledge of the output result size, which is unknown apriori and difficult to estimate. For the IMDB dataset, we observe a similar trend of our algorithm displaying superior performance compared to all other baselines. In this case, BFS and sorting even for $\mathsf{DBLP}_{4\mathsf{hop}}$ is not possible since the result is almost $0.5$ trillion items. For $\mathsf{DBLP}_{3\mathsf{star}}$, none of the engines were able to compute the result after running for $5$ hours when main memory ran out. BFS and sort also failed due to the size being larger than the main memory limit. Lastly, Neo4j was consistently the best performing (albeit marginally) engine among all baselines. While there is little scope for rewriting the SQL queries to try to obtain better performance, Neo4j has graph-specific operators such as variable length expansion. We tested multiple rewritings of the query to obtain the best performance (although this is the job of the query optimizer), which is finally reported in the figures. Regardless of the rewritings, Neo4j still treats materializing and sorting as a blocking operator which is a fundamental bottleneck.

\introparagraph{Lexicographic ordering} Figures~\ref{fig:dblp:2path:lex},\ref{fig:dblp:3path:lex},\ref{fig:dblp:4path:lex} and \ref{fig:dblp:3star:lex} show the running time for different values of $k$ in the limit clause for lexicographic ranking function on DBLP (i.e. we replace $\mathsf{A}_1.\mathsf{weight} + \mathsf{A}_2.\mathsf{weight}$ with $\mathsf{A}_1.\mathsf{weight}, \mathsf{A}_2.\mathsf{weight}$ in the \sqlhighlight{ORDER BY} clause) for random weights. The first striking observation here is that the running time for all baseline engines is identical to that of sum function. This demonstrates that existing engines are also agnostic to the ranking function in the query and fail to take advantage of the additional structure. However, lexicographic functions are easier to handle in practice than sum because we can avoid the use of a priority queue altogether. This in turn leads to faster running time since push and pops from the priority queue are expensive due to the logarithmic overhead and need for re-balancing of the tree structure. Thus, we obtain a $2\times$ improvement for lexicographic ordering as compared to the sum function.

{\introparagraph{Join ordering} At this point, the reader may wonder what is the impact of different join orderings on the query execution time for DBMS engines in the presence of \sqlhighlight{ORDER BY}. To investigate this, we supply join order hints to each of the engines. We run the queries on all possible join order hints to find the best possible running time. We found that the join order hints had virtually no impact on execution time. For instance, $\mathsf{DBLP}_{4\mathsf{hop}}$ on Neo4J takes $5521.61s$ without any join hints and the best possible join ordering reduces the time to $5418.23s$, a mere $1.8\%$ reduction. This is not surprising since the bottleneck for all engines is the materialization of the unsorted output, which is orders of magnitude larger than the final output and ends up being the dominant cost. In fact, for queries containing only self-joins, join order hints do not have any impact on the query plan because all relations are identical. Further, the number of possible join orderings that may need to be explored is exponential in the number of relations. On the other hand, our algorithm has the advantage of bypassing the materialization due to the delay based problem formulation and use of multi-way joins.
	
\introparagraph{Logarithmic weights} Instead of choosing the weights randomly, we also investigate the behavior when the weights scale logarithmically w.r.t. to the degree. We observed that all systems as well as our algorithm had identical execution times. This is not surprising considering that no algorithm takes into account the actual distribution of the weights. This observation points to an additional opportunity for optimization where one could use the weight distribution to allow for fine-grained, data-dependent processing. We leave the study of this problem for future work.

}

\begin{figure*}[!htp]
	\begin{subfigure}{0.24\linewidth}
		\hspace*{-1em}
		\includegraphics[scale=0.17]{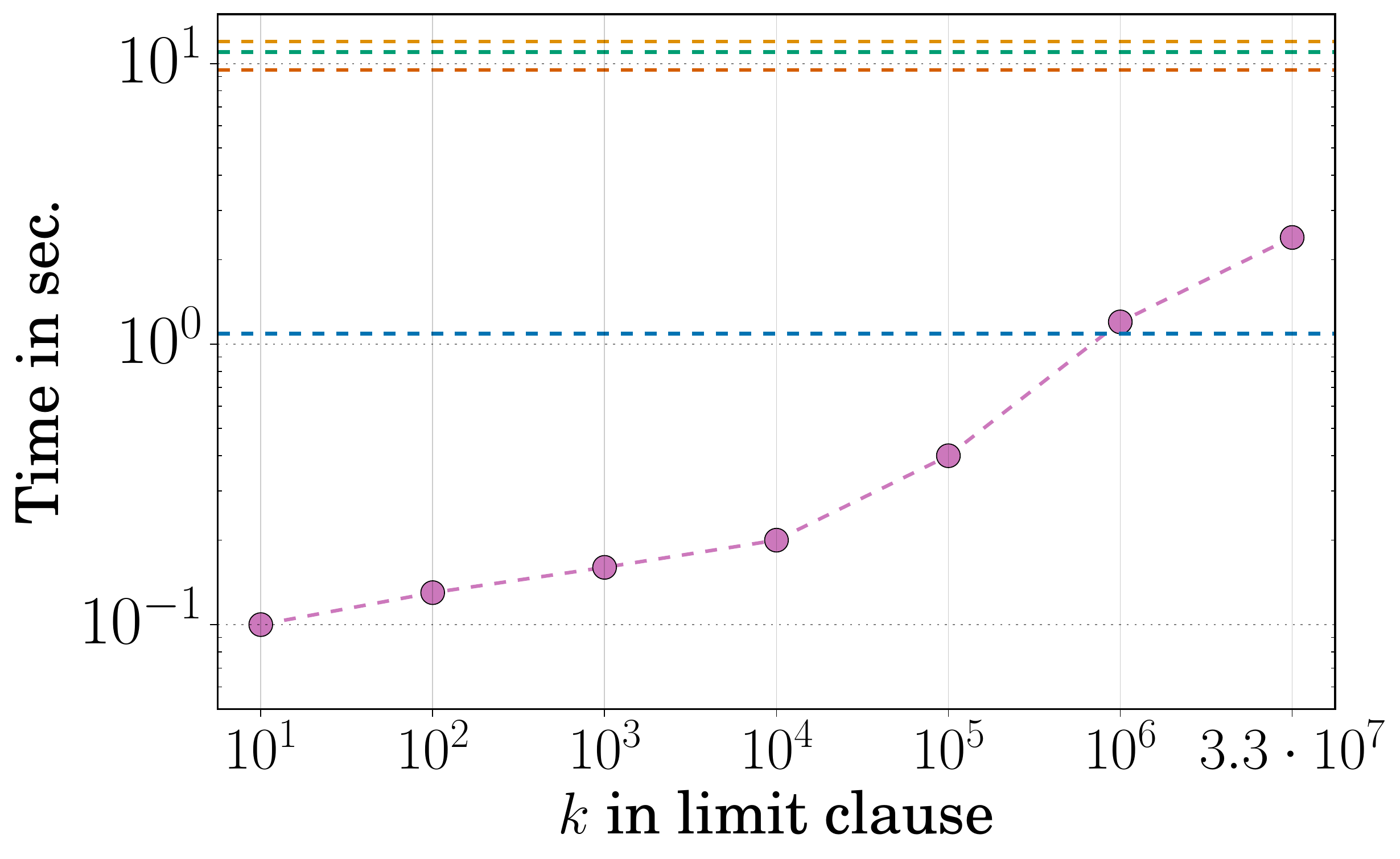}
		\caption{$\mathsf{DBLP}_{2\mathsf{hop}}$}  \label{fig:dblp:2path}
	\end{subfigure}
	\begin{subfigure}{0.24\linewidth}
		\vspace*{-1.2em}
		\hspace*{-1.25em}
		\includegraphics[scale=0.17]{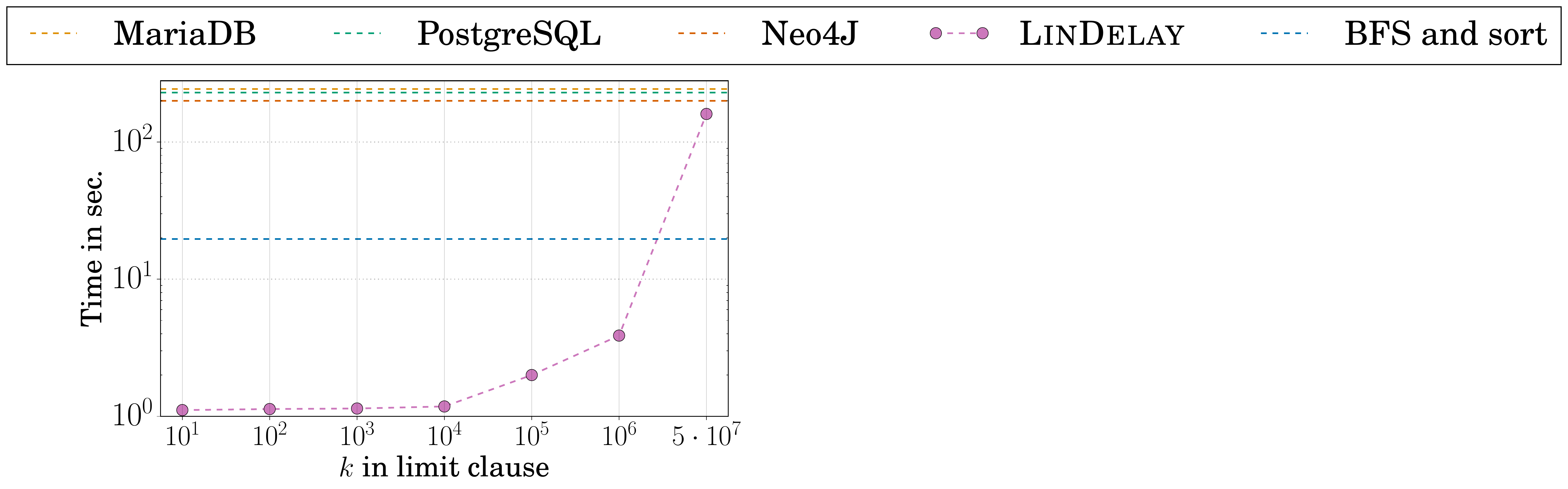}
		\caption{$\mathsf{DBLP}_{3\mathsf{hop}}$}  \label{fig:dblp:3path}
	\end{subfigure}
	\begin{subfigure}{0.24\linewidth}
		\includegraphics[scale=0.17]{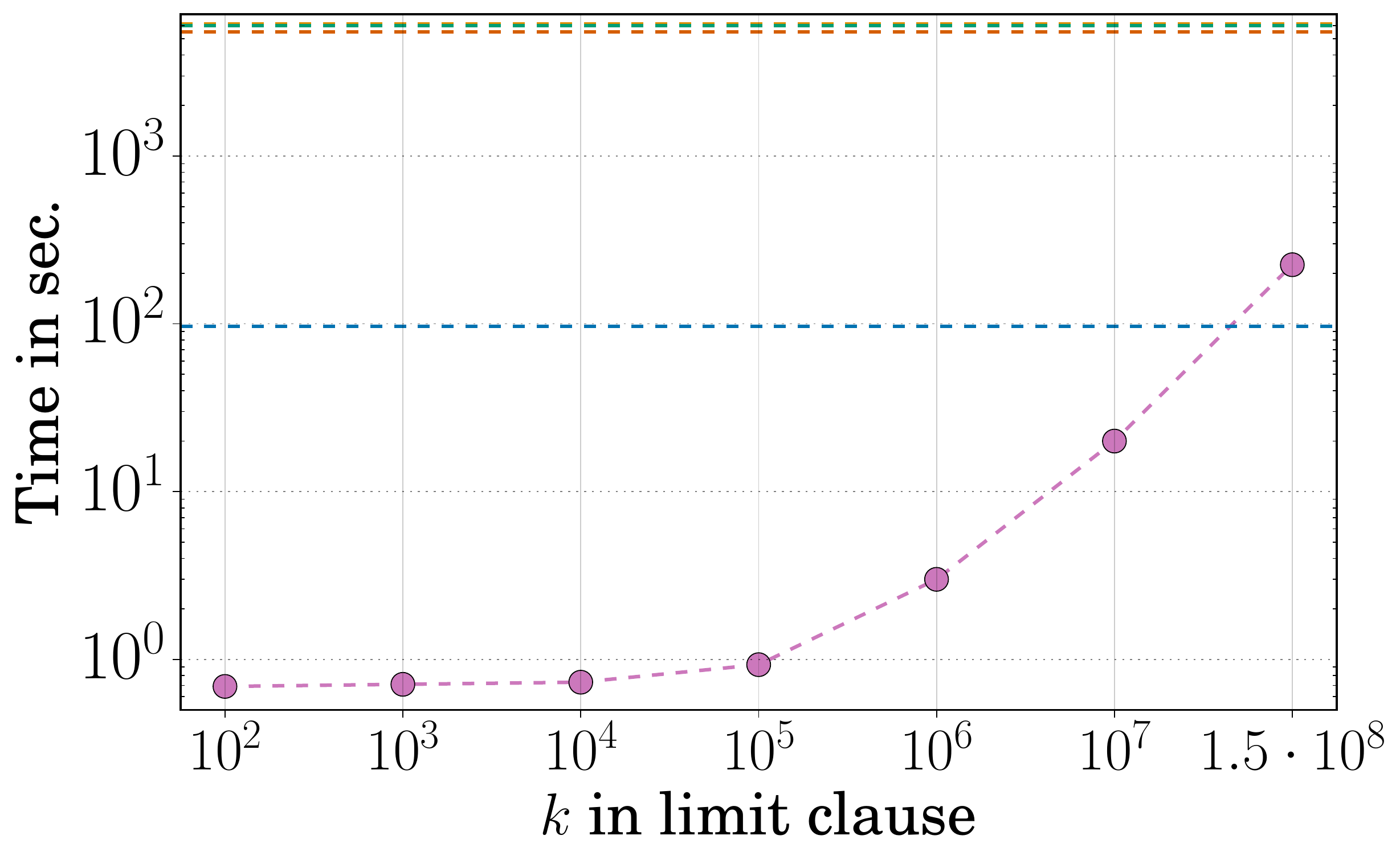}
		\caption{$\mathsf{DBLP}_{4\mathsf{hop}}$}  \label{fig:dblp:4path}
	\end{subfigure}
	\begin{subfigure}{0.24\linewidth}
		\includegraphics[scale=0.17]{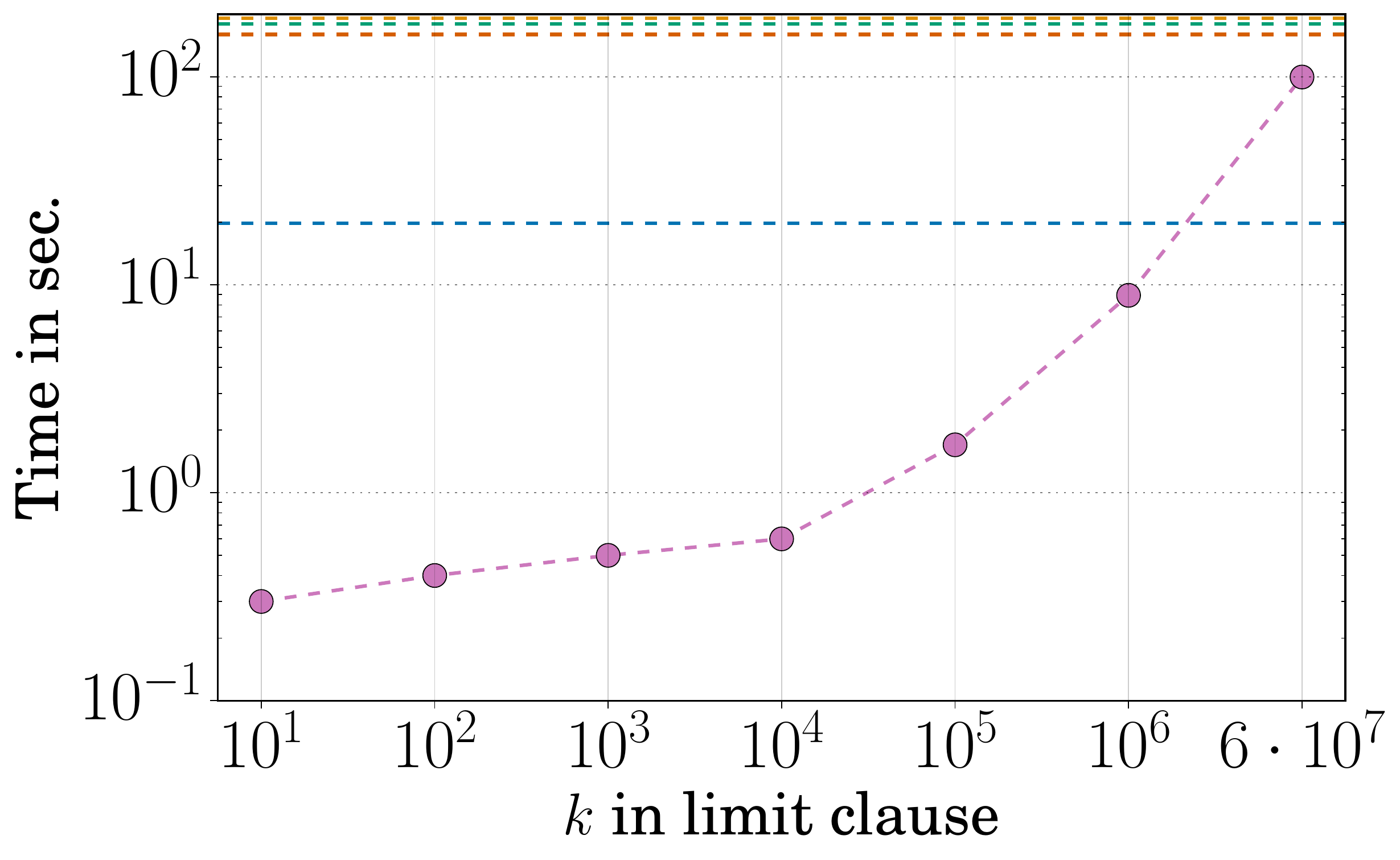}
		\caption{$\mathsf{DBLP}_{3\mathsf{star}}$}  \label{fig:dblp:3star}
	\end{subfigure}

		\begin{subfigure}{0.24\linewidth}
			\includegraphics[scale=0.17]{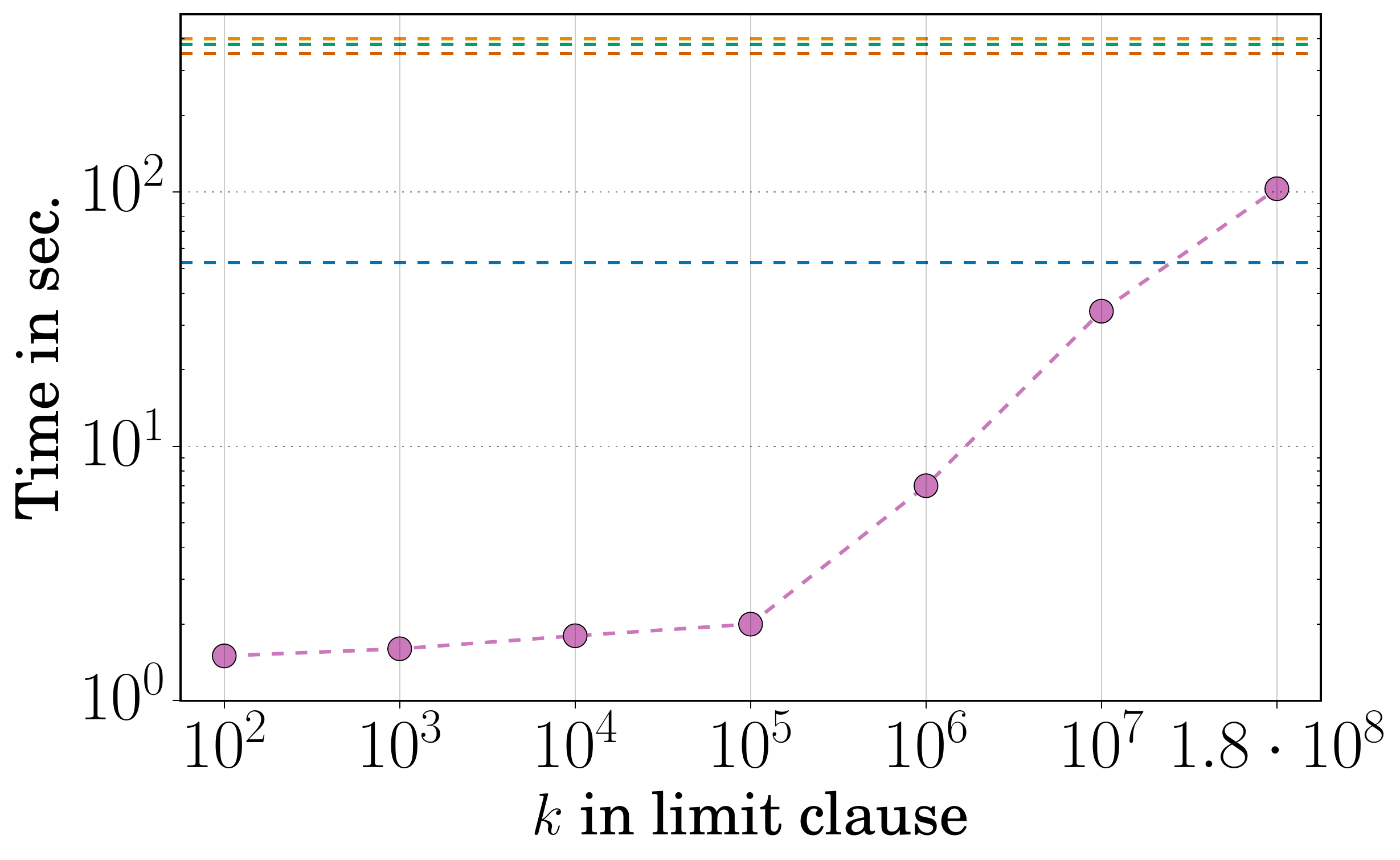}
			\caption{$\mathsf{IMDB}_{2\mathsf{hop}}$}  \label{fig:imdb:2path}
		\end{subfigure}
		\begin{subfigure}{0.24\linewidth}
			\includegraphics[scale=0.17]{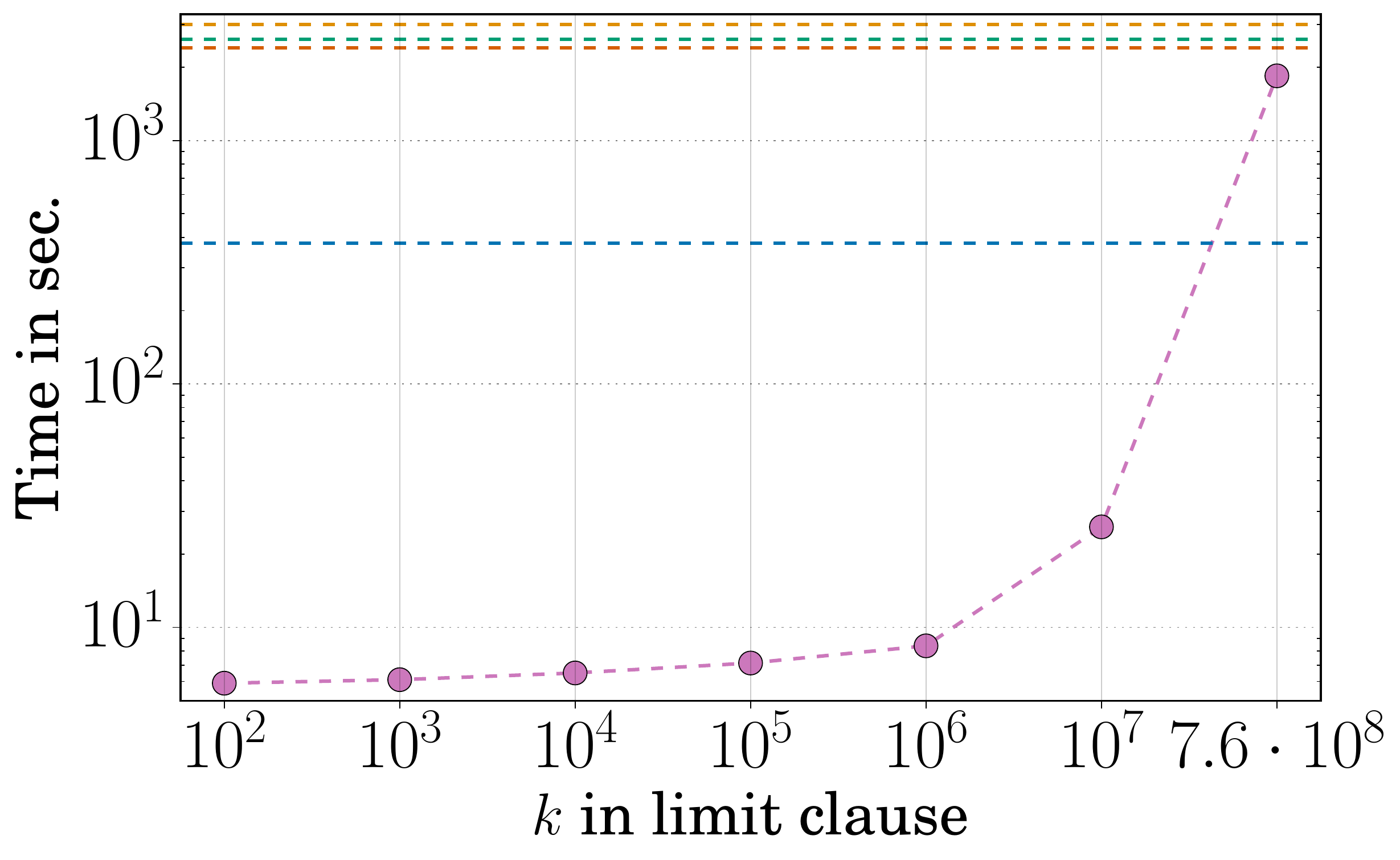}
			\caption{$\mathsf{IMDB}_{3\mathsf{hop}}$}  \label{fig:imdb:3path}
		\end{subfigure}
		\begin{subfigure}{0.24\linewidth}
			\includegraphics[scale=0.17]{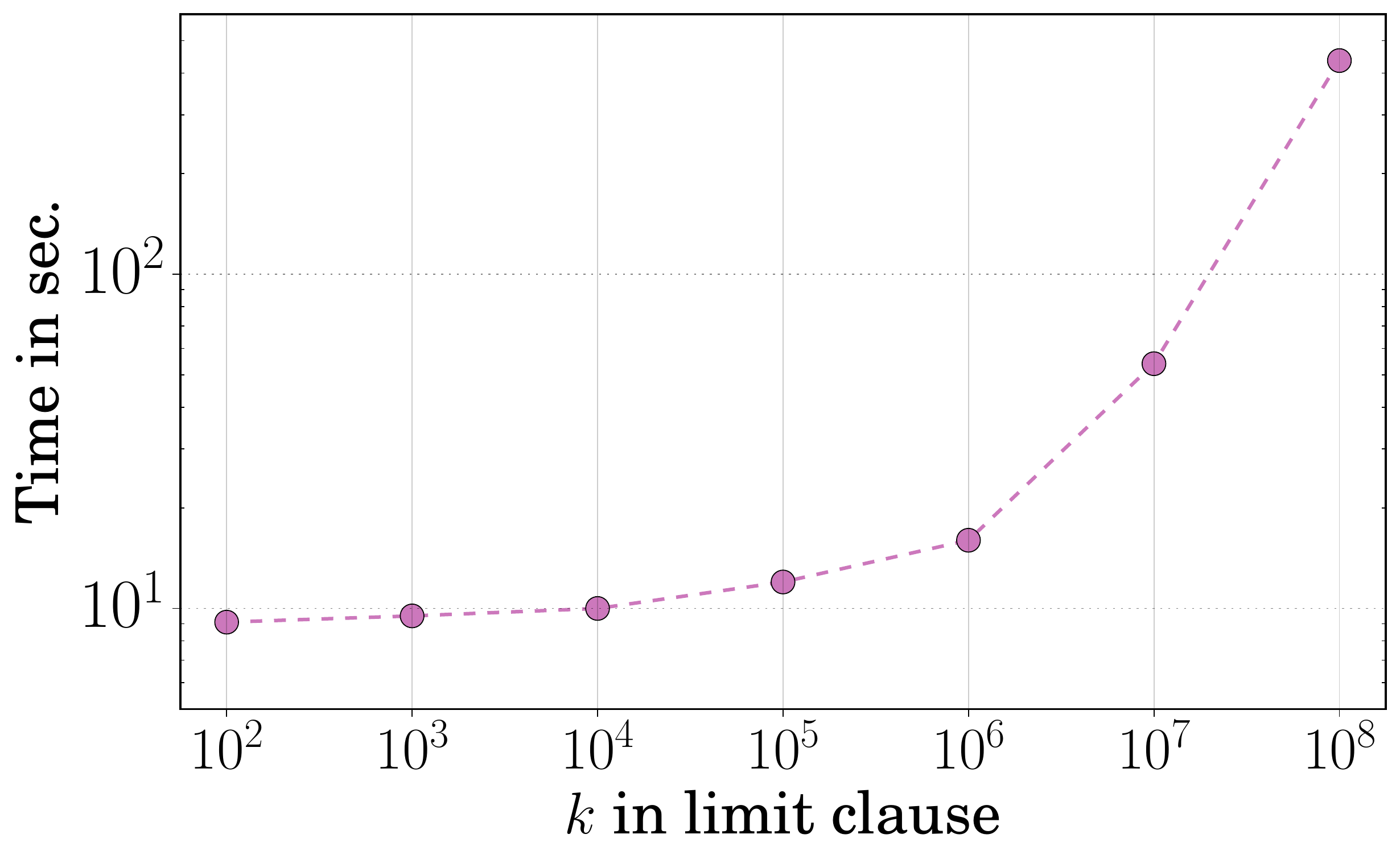}
			\caption{$\mathsf{IMDB}_{4\mathsf{hop}}$}  \label{fig:imdb:4path}
		\end{subfigure}
		\begin{subfigure}{0.24\linewidth}
			\includegraphics[scale=0.17]{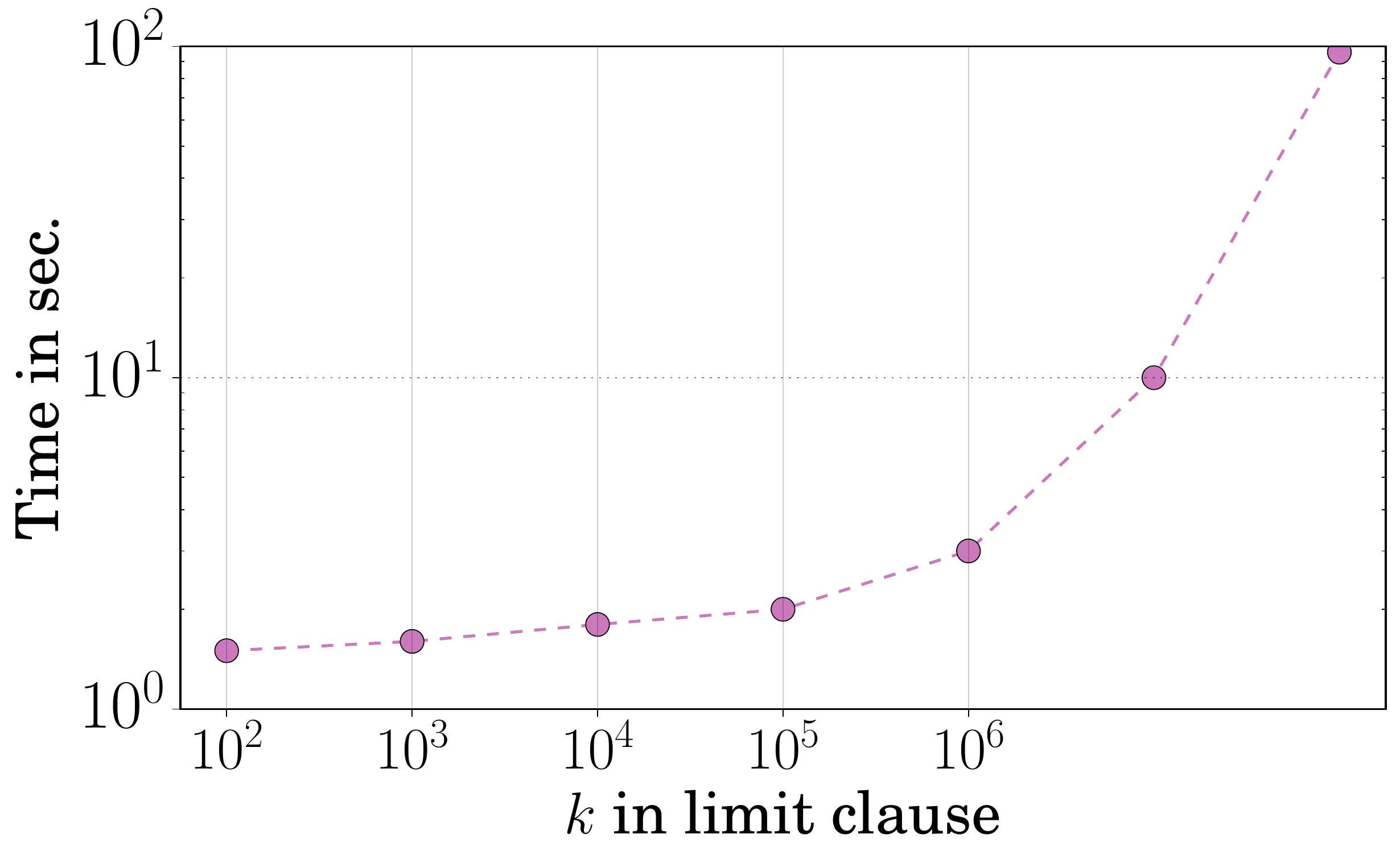}
			\caption{$\mathsf{IMDB}_{3\mathsf{star}}$}  \label{fig:imdb:3star}
		\end{subfigure}
	\caption{Comparing our algorithm with state-of-the-art engines for sum function} \label{fig:linear:notradeoff}
\end{figure*}

\begin{figure*}[!htp]
	\begin{subfigure}{0.24\linewidth}
		\includegraphics[scale=0.17]{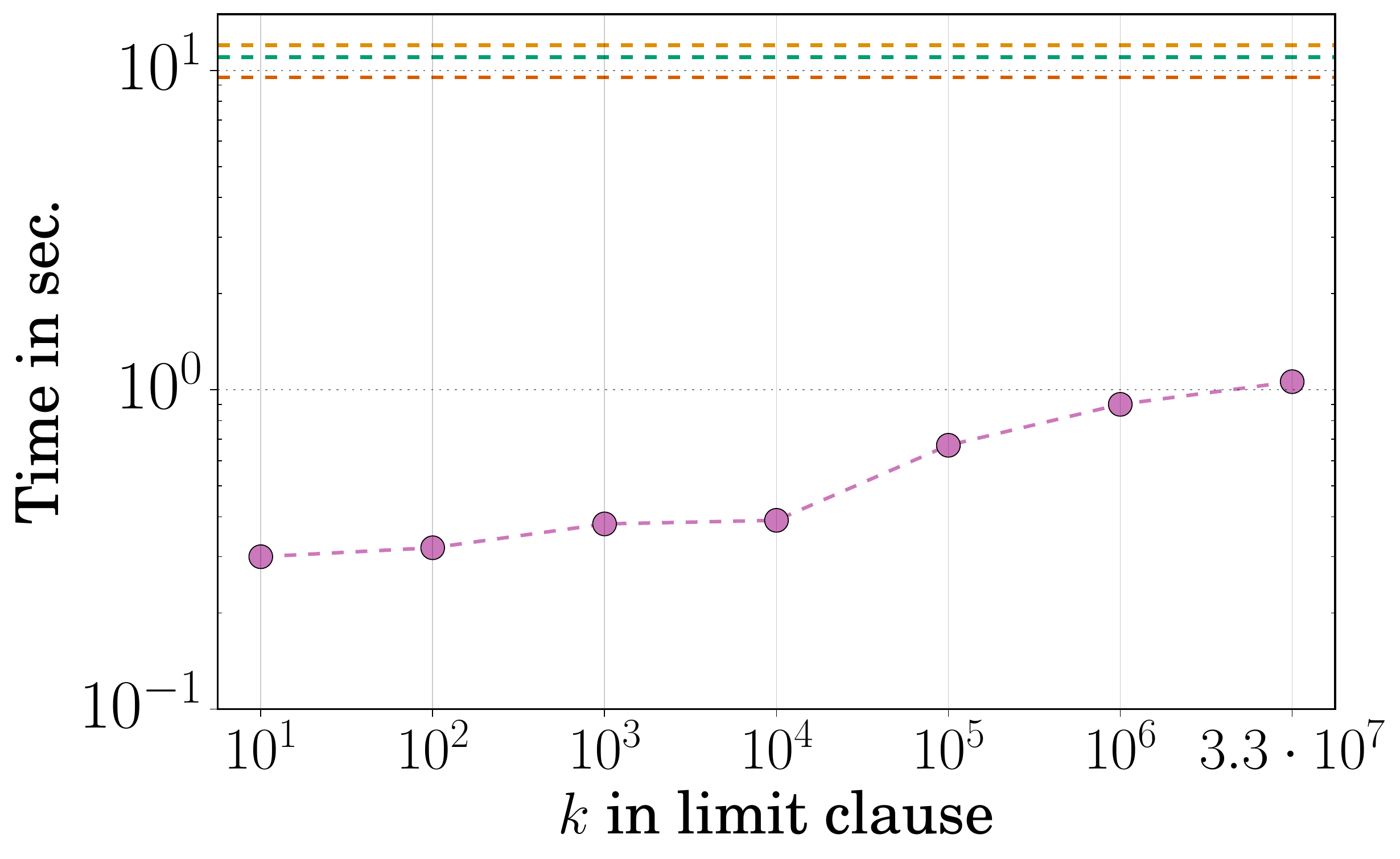}
		\caption{$\mathsf{DBLP}_{2\mathsf{hop}}$}  \label{fig:dblp:2path:lex}
	\end{subfigure}
	\begin{subfigure}{0.24\linewidth}
		\includegraphics[scale=0.17]{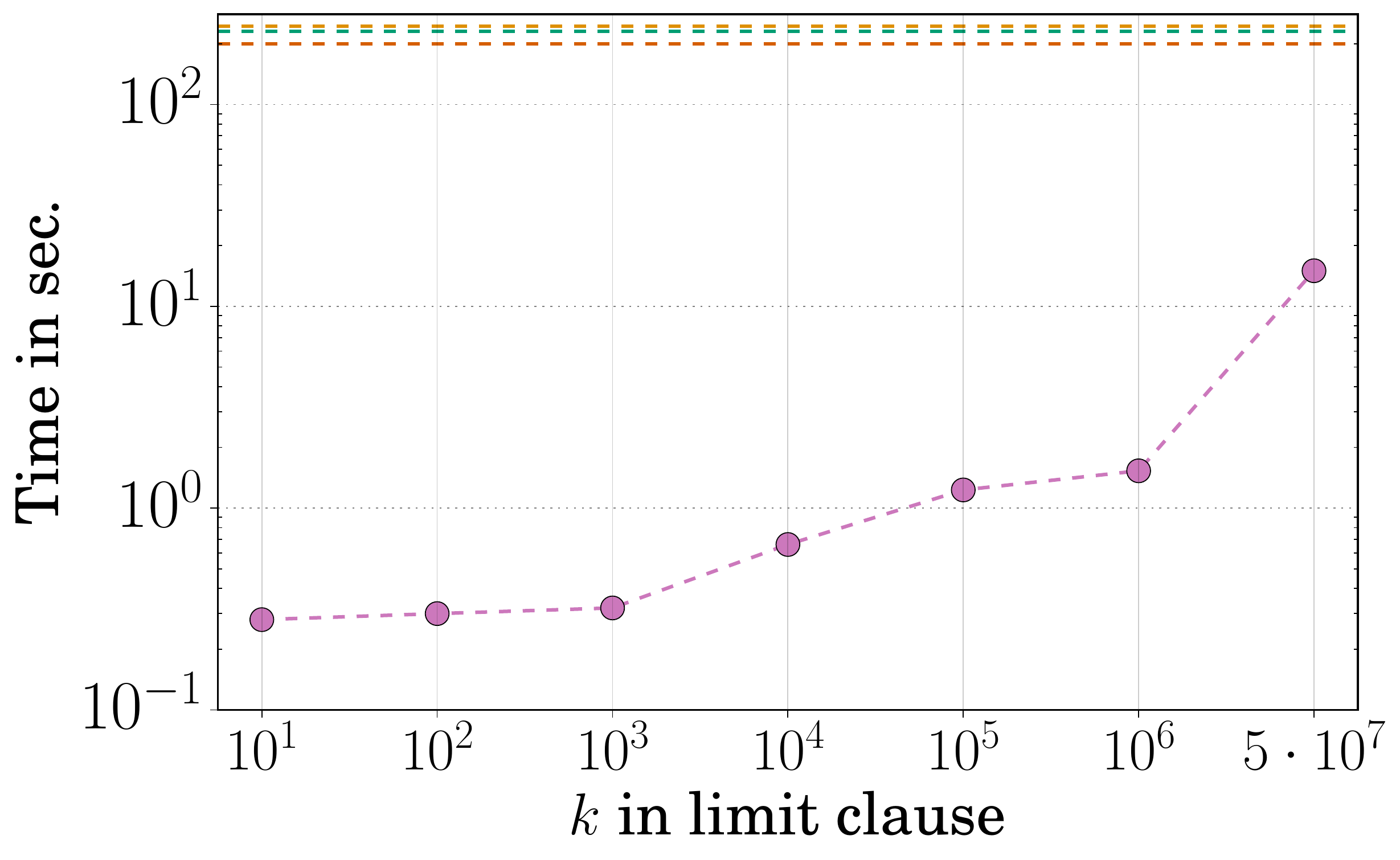}
		\caption{$\mathsf{DBLP}_{3\mathsf{hop}}$}  \label{fig:dblp:3path:lex}
	\end{subfigure}
	\begin{subfigure}{0.24\linewidth}
		\includegraphics[scale=0.17]{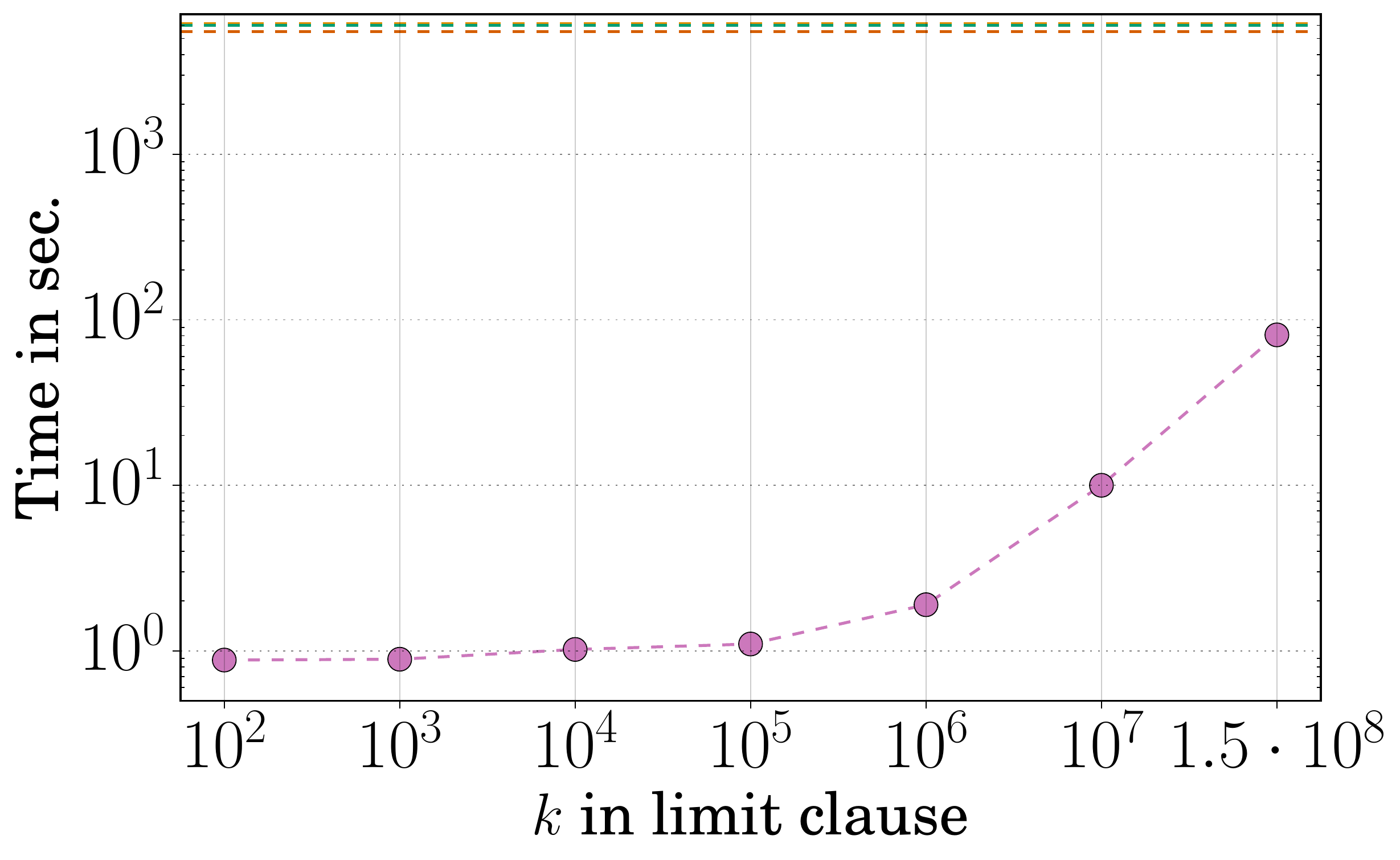}
		\caption{$\mathsf{DBLP}_{4\mathsf{hop}}$}  \label{fig:dblp:4path:lex}
	\end{subfigure}
	\begin{subfigure}{0.24\linewidth}
		\includegraphics[scale=0.17]{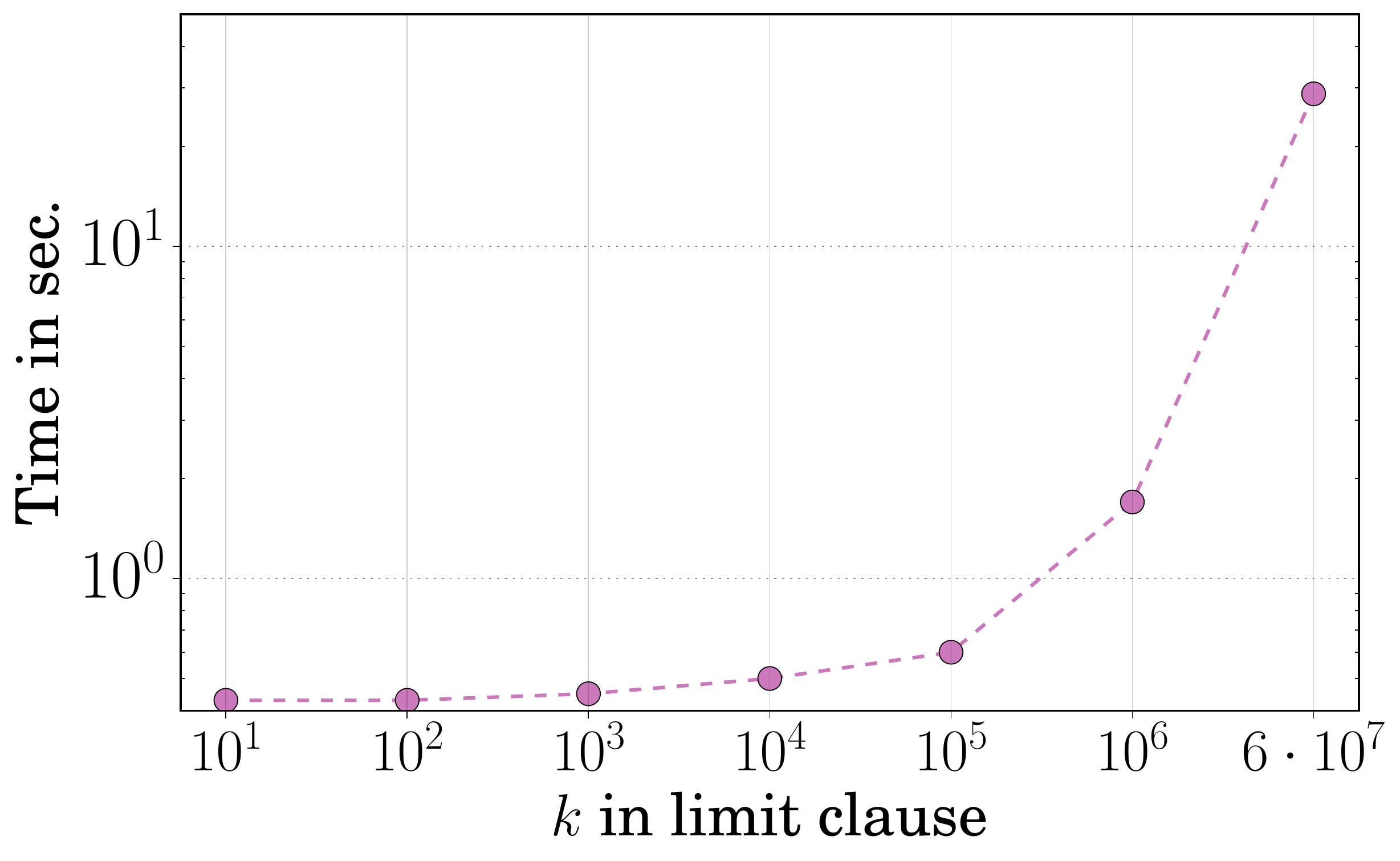}
		\caption{$\mathsf{DBLP}_{3\mathsf{star}}$}  \label{fig:dblp:3star:lex}
	\end{subfigure}
	\caption{Comparing our linear delay algorithm with state-of-the-art engines for lexicographic function.} \label{fig:linear:notradeoff:lex}
\end{figure*}



\subsubsection{Enumeration with Preprocessing} \label{sec:preprocess}

We next investigate the empirical performance of the preprocessing step and its impact on the result enumeration as described by~\autoref{thm:ranked:star}. For all experiments in this section, we fix $k$ to be large enough to enumerate the entire result (which is equivalent to having no limit clause at all).

\introparagraph{Sum ordering} \autoref{fig:dblp:2path:tradeoff} and \autoref{fig:imdb:2path:tradeoff} show the tradeoff between space used by the data structure constructed in the preprocessing phase and running time of the enumeration algorithm for $\mathsf{DBLP}_{2\mathsf{hop}}$ and $\mathsf{IMDB}_{2\mathsf{hop}}$ repectively. We show the tradeoff for $6$ different space budgets but the user is free to choose any space budget in the entire spectrum. As expected, the time required to enumerate the result is large when there is no preprocessing and it gradually drops as more and more results are materialized in the preprocessing phase. The sum of preprocessing time and enumeration time is not a flat line: this is because as an optimization, we do not use priority queues in the preprocessing phase. Instead, we can simply use the BFS and sort algorithm for all chosen nodes which need to be materialized. This is a faster approach in practice as we avoid use of priority queues but priority queues cannot be avoided for enumeration phase. We observe similar trend for $\mathsf{DBLP}_{3\mathsf{star}}, \mathsf{IMDB}_{3\mathsf{star}}$ on both datasets as well. 

\begin{figure*}
	\vspace{2em}
	\begin{subfigure}{0.24\linewidth}
		\includegraphics[scale=0.17]{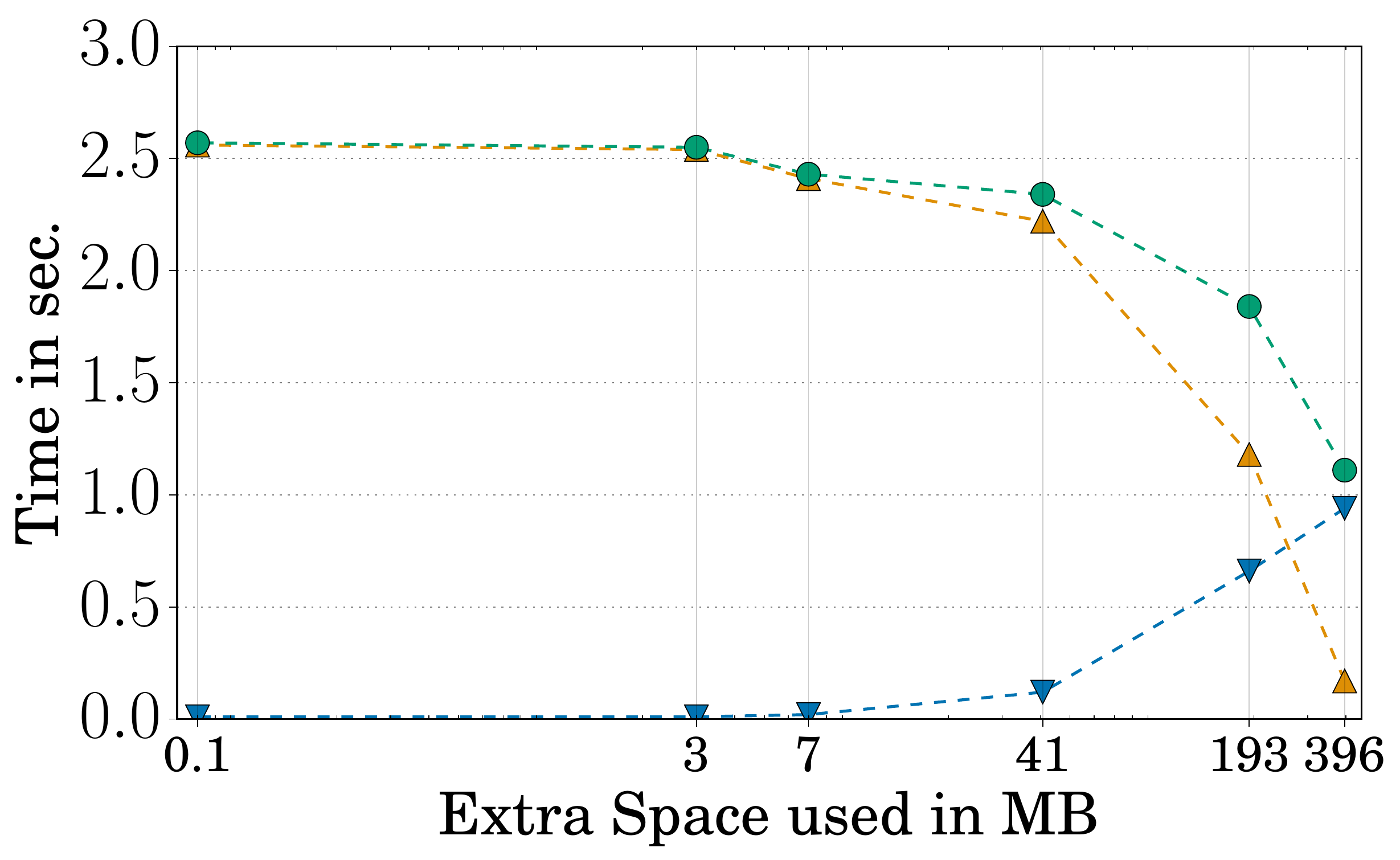}
		\caption{$\mathsf{DBLP}_{2\mathsf{hop}}$}  \label{fig:dblp:2path:tradeoff}
	\end{subfigure}
	\begin{subfigure}{0.24\linewidth}
		\vspace*{-1.4em}
		\includegraphics[scale=0.17]{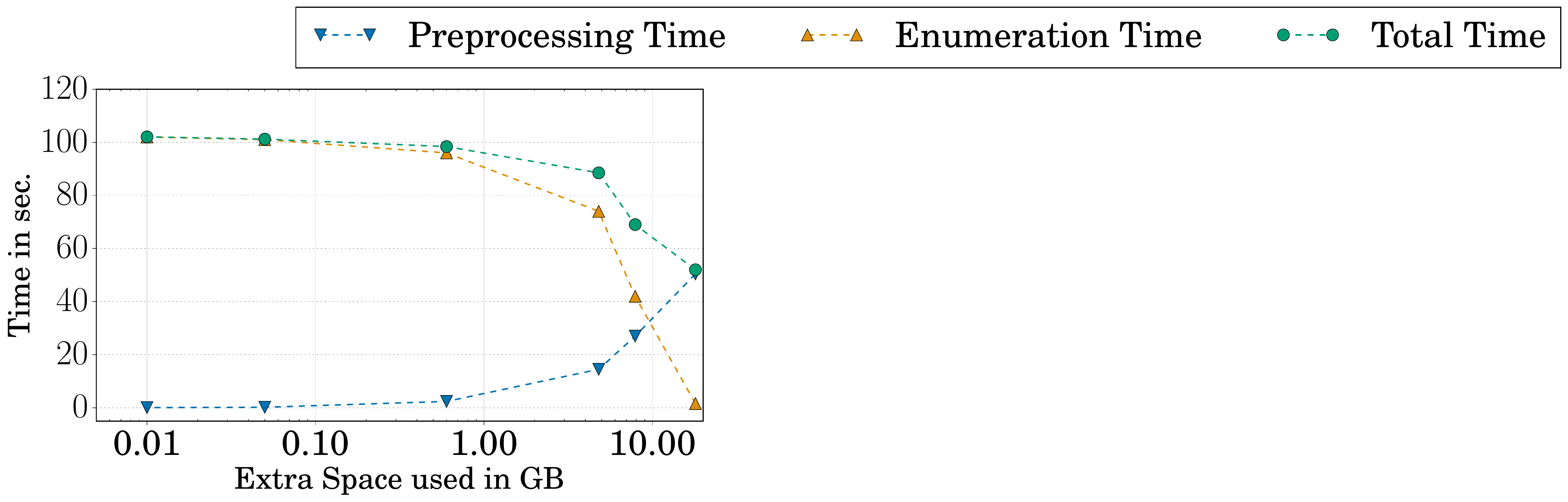}
		\caption{$\mathsf{IMDB}_{2\mathsf{hop}}$}  \label{fig:imdb:2path:tradeoff}
	\end{subfigure}
	\begin{subfigure}{0.24\linewidth}
		\includegraphics[scale=0.17]{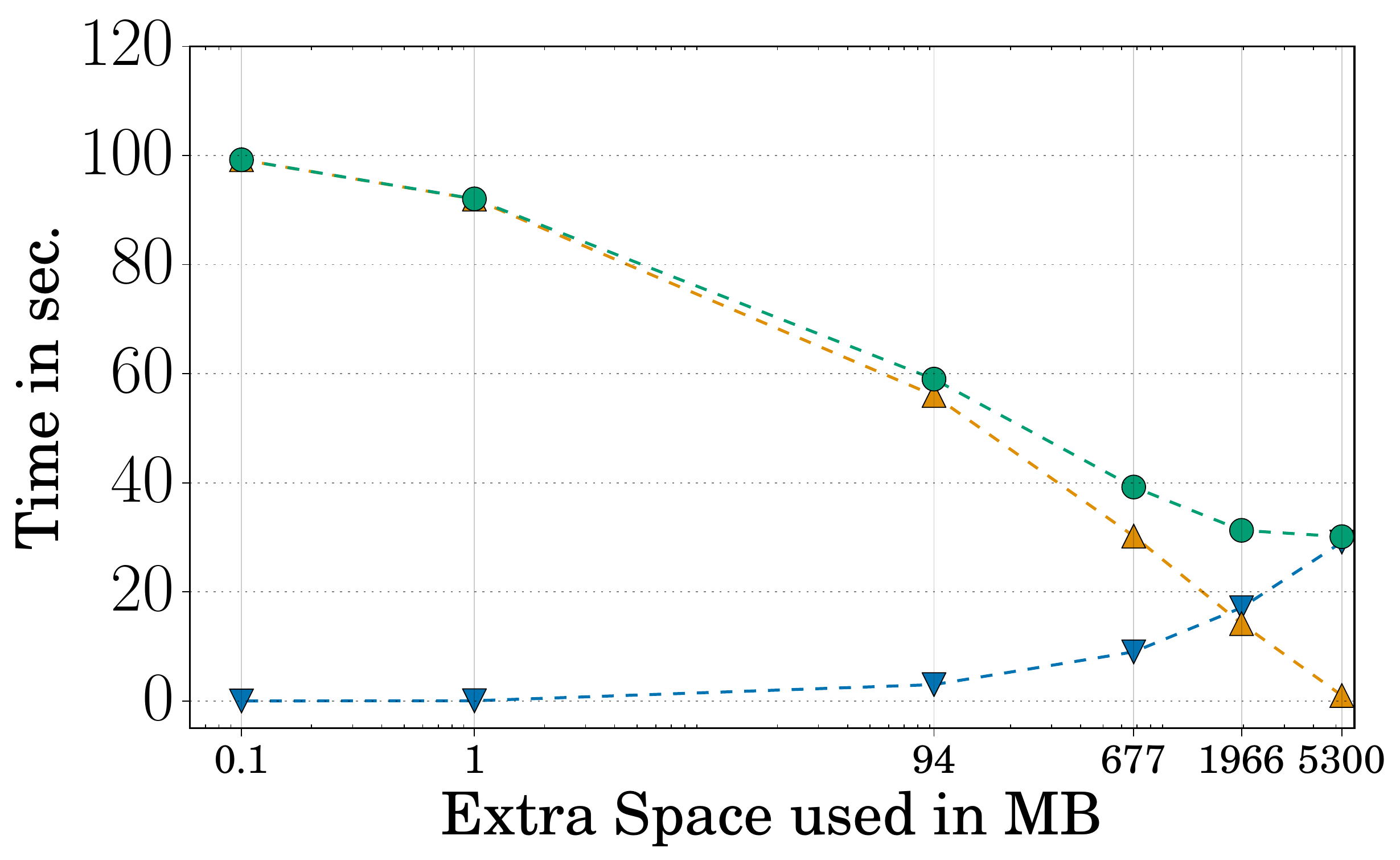}
		\caption{$\mathsf{DBLP}_{3\mathsf{star}}$}  \label{fig:dblp:3star:tradeoff}
	\end{subfigure}
	\begin{subfigure}{0.24\linewidth}
		\includegraphics[scale=0.17]{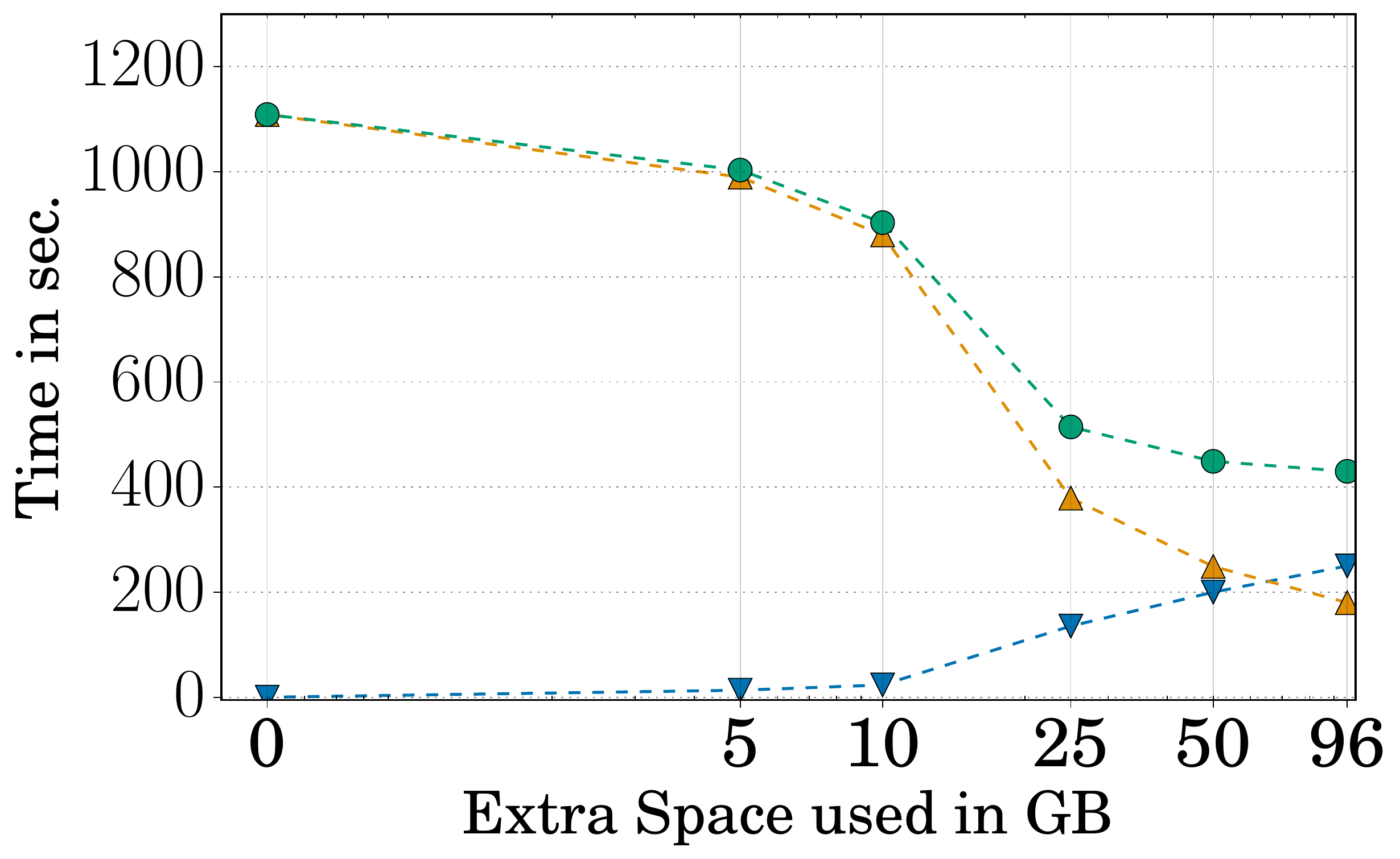}
		\caption{$\mathsf{IMDB}_{3\mathsf{star}}$}  \label{fig:imdb:3star:tradeoff}
	\end{subfigure}
	\caption{Comparing the preprocessing and enumeration tradeoff for sum function when enumerating the entire result} \label{fig:linear:tradeoff}
\end{figure*}

{
\subsubsection{Cyclic Queries} \label{subsec:cyclic}

We also compare the performance of our algorithm to other systems for cyclic queries. We choose four cyclic queries found commonly in practice inspired by~\cite{tziavelis2020optimal}: four cycle, six cycle, eight cycle and bowtie query (two four cycles joined at a common attribute).~\autoref{table:cyclic} shows the performance of our algorithm on the \textsf{DBLP} dataset for the sum function. As the table shows, our algorithm is able to process all queries within $200$ seconds, with the bowtie query being the most computationally intensive. In contrast, for $k=10$ the fastest performing engine Neo4J required 240s (450s) for four cycle (six cycle).  It did not finish execution for eight cycle and bowtie query due to an out of memory error. For the \textsf{IMDB} dataset, our algorithm was able to process all queries, while Neo4J was not able to process any query (except four cycle) due to its large memory requirement. We defer those experiments to~\autoref{sec:moreexp}.}

\subsection{Large Scale Experiments and Scalability}

In this section, we investigate the performance of our techniques on the large scale datasets. \autoref{fig:friend:2path} and~\ref{fig:friend:3path} shows the time to find the top-$k$ answers for the Memetracker dataset on two neighborhood and three neighborhood queries. Compared to the small scale datasets, the execution time increases rapidly even for low values of $k$. This is attributed to the high duplication of answers, which leads to a rapidly increasing priority queue size. \textit{None of MariaDB, Postgres and Neo4J were able to finish, or even to find the top-$10$ answers, within $5$ hours in our experiments}. The same trend is also observed for the Friendster dataset as shown in~\autoref{fig:meme:3path} and~\ref{fig:meme:4path}. Similar to the small-scale datasets, lexicographic functions were faster than the sum function for our algorithm but DBMS engines were unable to finish query execution. We also conduct scalability experiments on LDBC benchmark queries that contain the \sqlhighlight{ORDER BY} clause.~\autoref{table:ldbc} shows the scalability of our algorithm for finding answers of queries \textbf{Q3, Q10, Q11}. As the scale factor increases, the execution time also increases linearly. For each of these queries, all engines require more than $3$ hours to compute the result even for \textsf{SF }$=10$ and $k=10$. This is because of the serial execution plan generated by the engines, forcing the materialization of the unsorted result before sorting and filtering for top-$k$.

\begin{figure*}
	\begin{subfigure}{0.24\linewidth}
		\hspace*{-1em}
		\includegraphics[scale=0.17]{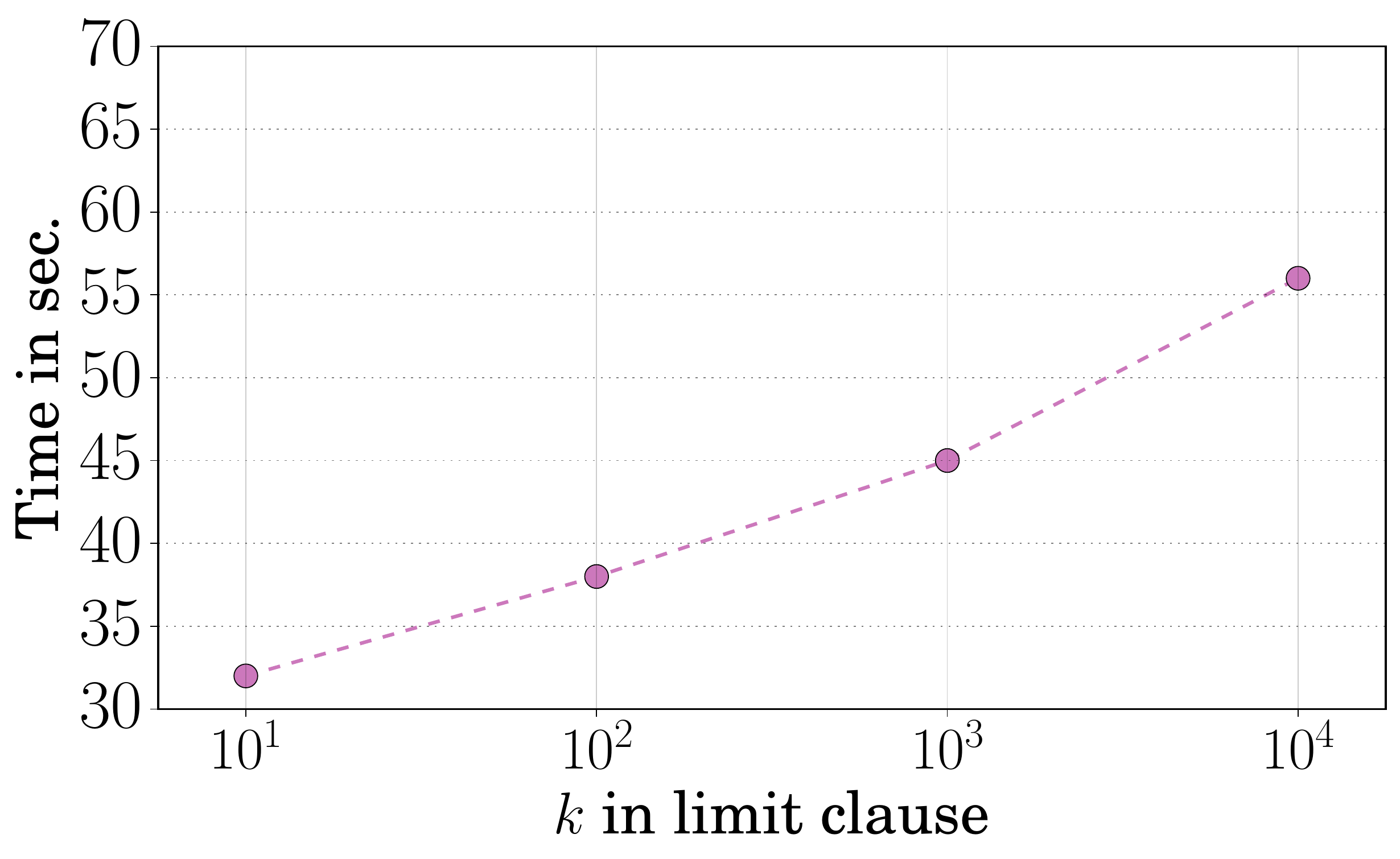}
		\caption{Memetracker 2-neighborhood}  \label{fig:friend:2path}
	\end{subfigure}
	\begin{subfigure}{0.24\linewidth}
		\hspace*{-1.25em}
		\includegraphics[scale=0.17]{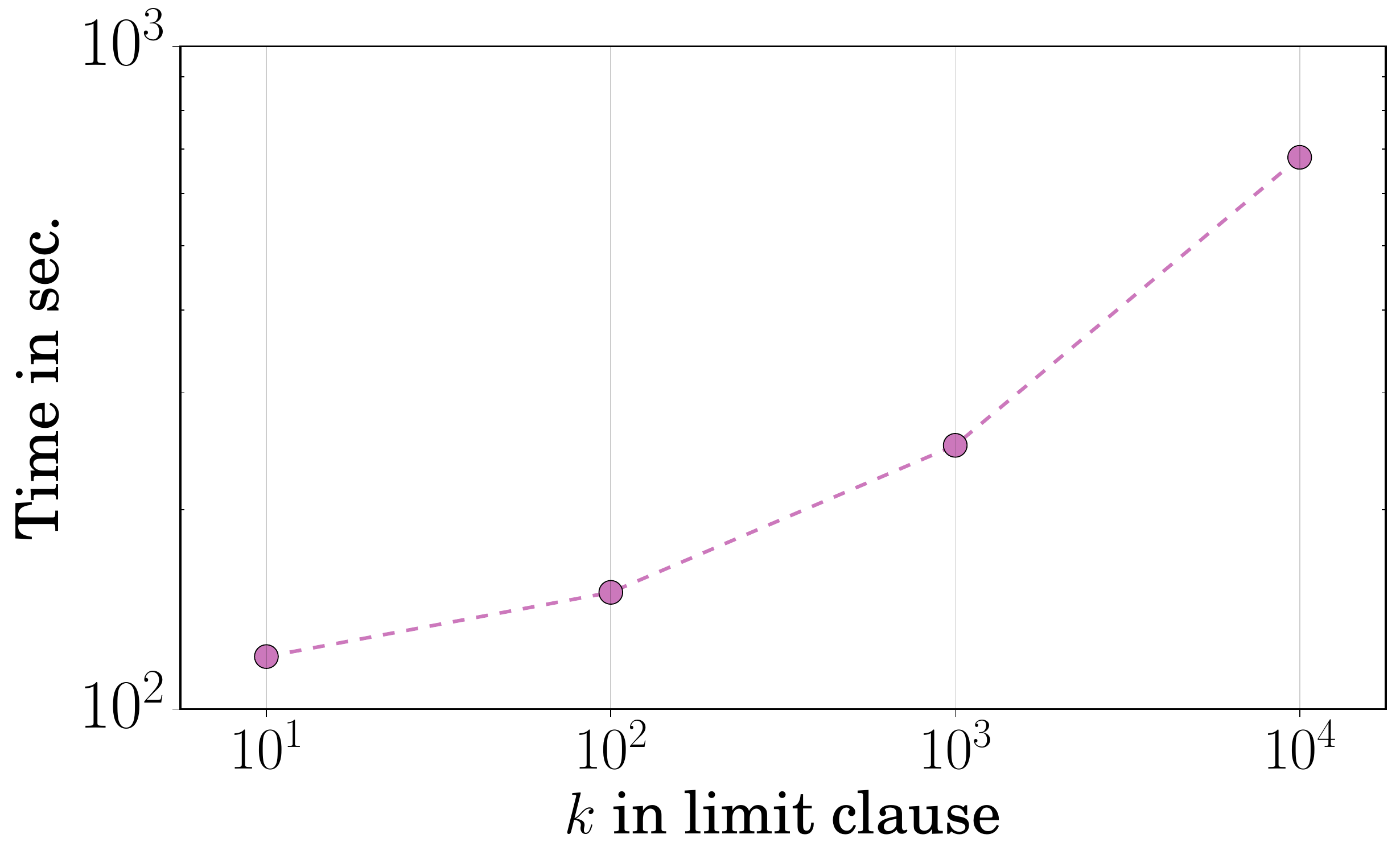}
		\caption{Memetracker 3-neighborhood}  \label{fig:friend:3path}
	\end{subfigure}
	\begin{subfigure}{0.24\linewidth}
		\includegraphics[scale=0.17]{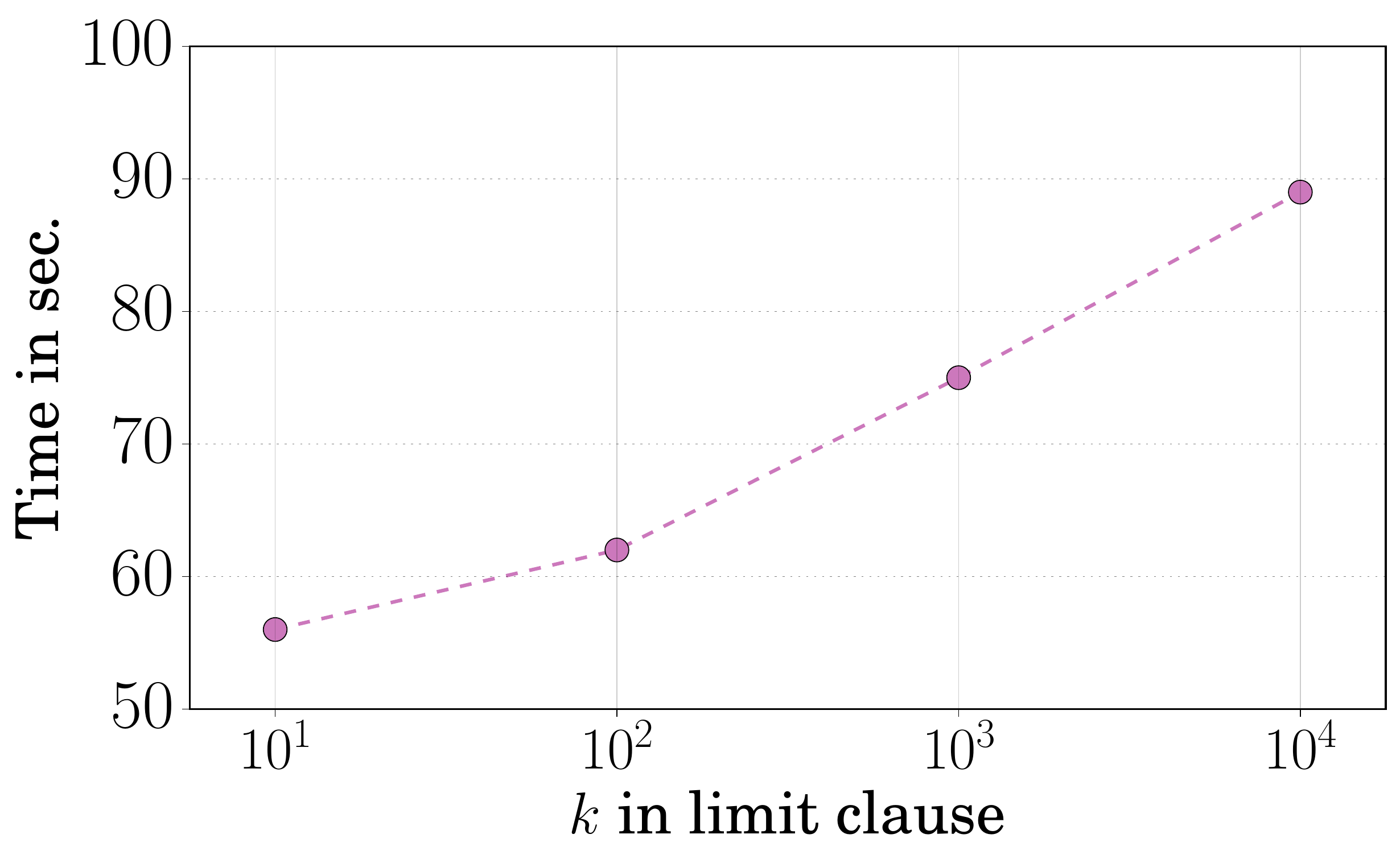}
		\caption{Friendster 2-neighborhood}  \label{fig:meme:4path}
	\end{subfigure}
	\begin{subfigure}{0.24\linewidth}
		\includegraphics[scale=0.17]{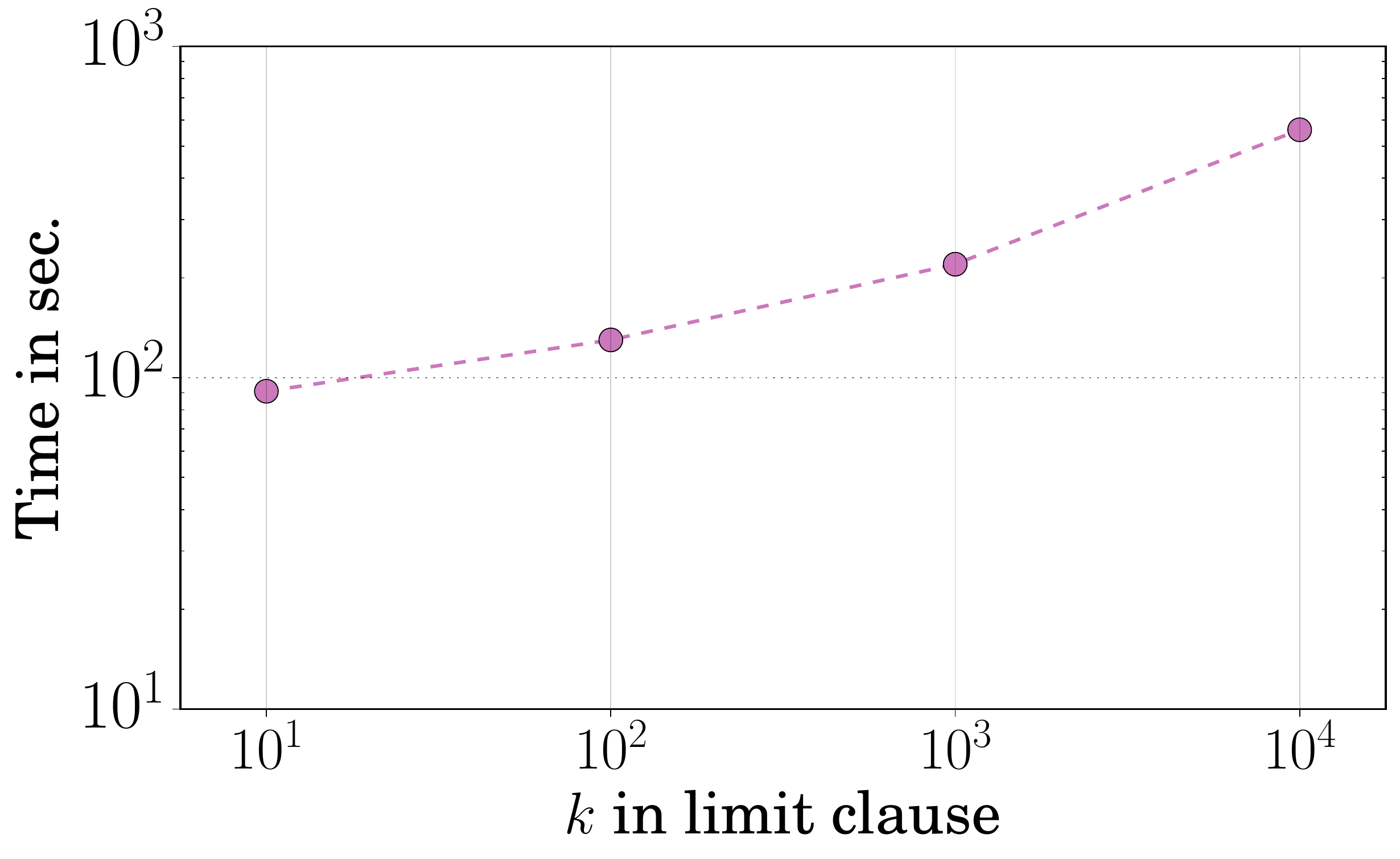}
		\caption{Friendster 3-neighborhood}  \label{fig:meme:3path}
	\end{subfigure}
\caption{\textsc{LinDelay} performance on large-scale datasets} \label{fig:largescale}
\end{figure*}

\begin{figure}
\scalebox{0.9}{
\begin{tabular}{lccccc}\toprule
	& $\mathsf{SF}=10$  & $\mathsf{SF}=20$ & $\mathsf{SF}=30$ & $\mathsf{SF}=40$ & $\mathsf{SF}=40$ \\\midrule
	\textbf{Q3}    & 5.91s & 9.63s & 13.47s & 18.23s & 22.18s  \\
	\textbf{Q10}    & 2.82s & 3.47s & 4.65s & 5.23s & 6.46s  \\
	\textbf{Q11}    & 0.78s & 1.07s & 1.56s & 1.82s & 2.36s  \\\bottomrule
\end{tabular}}
	\caption{Scalability for different scale factors (\textsf{SF}) in LDBC} \label{table:ldbc}
\end{figure}

\begin{figure}
	\scalebox{0.9}{
		\begin{tabular}{lcccc}\toprule
			& $\mathsf{k}=10$  & $\mathsf{k}=10^2$ & $\mathsf{k}=10^3$ & $\mathsf{k}=10^4$ \\\midrule
			\textbf{four cycle}    & 0.85s & 0.95s & 1.2s & 1.8s   \\
			\textbf{six cycle}    & 13.1s & 17.7s & 25.6s & 38.2s   \\
			\textbf{eight cycle}    & 33.4s & 48.7s & 63.9s & 77.8s   \\		
			\textbf{bowtie}    & 112s & 125s & 156s & 192s   \\\bottomrule
	\end{tabular}}
	\caption{Cyclic query performance on the \textsf{DBLP} dataset for different values of $k$ in the \sqlhighlight{LIMIT} clause.} \label{table:cyclic}
\end{figure}	
	\section{Related Work} \label{sec:related}

\introparagraph{Top-$k$} Top-k ranked enumeration of full join queries has been studied extensively by the database community for both certain~\cite{li2005ranksql, qi2007sum, ilyas2004rank, li2005ranksql2,akbarinia2011best,ilyas2008survey,bast2006io,tsaparas2003ranked} and uncertain databases~\cite{re2007efficient, zou2010finding}. Most of these works exploit the monotonicity property of scoring functions, building offline indexes and integrate the function into the cost model of the query optimizer in order to bound the number of operations required per answer tuple. We refer the reader to~\cite{ilyas2008survey} for a comprehensive survey. {We note that none of these works consider non-trivial join-project queries (see~\autoref{sec:semantics} for more discussion). Ours is the first work to consider the ranked enumeration of arbitrary join-project queries.}

\introparagraph{Rank aggregation algorithms} 	Top-k processing over ranked lists of objects has a rich history. The problem was first studied by Fagin et al.~\cite{fagin2002combining, fagin2003optimal} where the database consists of $N$ objects and $m$ ranked streams, each containing a ranking of the $N$ objects with the goal of finding the top-$k$ results for coordinate monotone functions. The authors proposed Fagin's algorithm (\textsc{FA}) and Threshold algorithm (\textsc{TA}), both of which were shown to be instance optimal for database access cost under  sorted list access and random access model. {A key limitation of these works is that it expects the input to be materialized, i.e., $Q(D)$ must already be computed and stored for the algorithm to perform random access.}

\introparagraph{Unranked enumeration of query results}
Recent work by Kara et al.~\cite{kara19} showed that for a small but important fragment of CQs known as hierarchical queries, it is possible to obtain a tradeoff between preprocessing and delay guarantees. Importantly, this result is applicable even in the presence of arbitrary projection. {However, the authors did not investigate how to add ranking because adding priority queues at different location in the join tree leads to different complexities.} In fact, follow up work~\cite{deep2021enumeration} showed that the same unranked enumeration could be performed with better delay guarantees under certain settings. {Our work considers the class of CQs with arbitrary projections and we are also able to extend the main result to UCQs, an even broader class of queries. Naturally, our algorithm automatically recovers the existing results for full CQs as well~\cite{deep2019ranked, tziavelis2020optimal}, in addition to the first extensive empirical evidence on how ranked enumeration can be performed for CQs containing projections beyond free-connex queries.} 

\introparagraph{Factorization and Aggregation}
Factorized databases 
\cite{bakibayev2012fdb,
	olteanu2015size,ciucanu2015worst} exploit the distributivity of product over union to represent query results compactly
and generalize the results on bounded \fhw to the non-Boolean case~\cite{olteanu2015size}. \cite{abo2016faq} captures a wide range of aggregation problems over semirings. Factorized representations 
can also enumerate the query results with constant delay 
according to lexicographic orders of the variables~\cite{bakibayev2013aggregation}. For that to work, the lexicographic order must "agree" with the factorization order. {However, it was shown in~\cite{deep2019ranked} that the algorithm for lexicographic ordering is not optimal. Further, since all prior work in this space using the concept of variable ordering, adding projections to the query forces the building of a GHD that can materialize the entire join query result, which is expensive and an unavoidable drawback.}

\introparagraph{Ranked enumeration}
Both Chang et al. \cite{chang2015optimal} and
Yang et al. \cite{yang2018any} provide any-$k$ algorithms for
\emph{graph queries} instead of the more general CQs;
Kimelfeld and Sagiv \cite{kimelfeld2006incrementally} give an any-$k$ algorithm for acyclic queries with polynomial delay. Recent work on ranked enumeration of MSO logic over words is also of particular interest~\cite{bourhis2020ranked}. {None of these existing works give any non-trivial guarantees for CQs with projections. Ours is the first work in this space that provides non-trivial guarantees.}

	\section{Conclusion}

In this paper, we study the problem of ranked enumeration for CQs with projections. We present a general algorithm that can enumerate query results according to two commonly-used ranking functions (\SUM, \lexi) with near-linear delay after near-linear preprocessing time. We also show how to extend our results to a broader class of queries known as UCQs. For star queries, an important and practical fragment of CQs, we further show how to achieve a smooth tradeoff between the delay, preprocessing time and space used for data structure. Extensive experiments demonstrate that our methods are up to three orders of magnitude better when compared to popular open-source RDBMS and specialized graph engines. This work opens up several exciting future work challenges. The first important problem is to extend our results from main memory setting to the distributed setting. Since the cost of I/O must also be taken into account, it becomes important to identify the optimal priority queue storage layout to ensure that access cost is low. It would also be interesting to develop output balanced algorithms. The second exciting challenge is to incorporate approximation into the ranking. For some applications, it may be sufficient to get an approximately ordered output which could lead to improved running time guarantees. Finally, it would be useful to re-rank the query results when the ranking function is changed by the user and extend our ideas to non-monotone ranking functions.

\smallskip
\introparagraph{Acknowledgments} This research was supported in part by National Science Foundation grants CRII-1850348 and III-1910014. We would like to thank the anonymous reviewers for their careful reading and
valuable comments that contributed greatly in improving the manuscript. We also thank Wim Martens for the reference~\cite{bonifati2020analytical} that motivates the study of star queries.
	
	\bibliographystyle{ACM-Reference-Format}
	\balance
	\bibliography{reference}  


\begin{thebibliography}{72}


\ifx \showCODEN    \undefined \def \showCODEN     #1{\unskip}     \fi
\ifx \showDOI      \undefined \def \showDOI       #1{#1}\fi
\ifx \showISBNx    \undefined \def \showISBNx     #1{\unskip}     \fi
\ifx \showISBNxiii \undefined \def \showISBNxiii  #1{\unskip}     \fi
\ifx \showISSN     \undefined \def \showISSN      #1{\unskip}     \fi
\ifx \showLCCN     \undefined \def \showLCCN      #1{\unskip}     \fi
\ifx \shownote     \undefined \def \shownote      #1{#1}          \fi
\ifx \showarticletitle \undefined \def \showarticletitle #1{#1}   \fi
\ifx \showURL      \undefined \def \showURL       {\relax}        \fi
\providecommand\bibfield[2]{#2}
\providecommand\bibinfo[2]{#2}
\providecommand\natexlab[1]{#1}
\providecommand\showeprint[2][]{arXiv:#2}

\bibitem[\protect\citeauthoryear{Abboud, Backurs, and Williams}{Abboud
  et~al\mbox{.}}{2018a}]%
        {abboud2018if}
\bibfield{author}{\bibinfo{person}{Amir Abboud}, \bibinfo{person}{Arturs
  Backurs}, {and} \bibinfo{person}{Virginia~Vassilevska Williams}.}
  \bibinfo{year}{2018}\natexlab{a}.
\newblock \showarticletitle{If the current clique algorithms are optimal, so is
  Valiant's parser}.
\newblock \bibinfo{journal}{\emph{SIAM J. Comput.}} \bibinfo{volume}{47},
  \bibinfo{number}{6} (\bibinfo{year}{2018}), \bibinfo{pages}{2527--2555}.
\newblock


\bibitem[\protect\citeauthoryear{Abboud and Williams}{Abboud and
  Williams}{2014}]%
        {abboud2014popular}
\bibfield{author}{\bibinfo{person}{Amir Abboud} {and}
  \bibinfo{person}{Virginia~Vassilevska Williams}.}
  \bibinfo{year}{2014}\natexlab{}.
\newblock \showarticletitle{Popular conjectures imply strong lower bounds for
  dynamic problems}. In \bibinfo{booktitle}{\emph{2014 IEEE 55th Annual
  Symposium on Foundations of Computer Science}}. IEEE,
  \bibinfo{pages}{434--443}.
\newblock


\bibitem[\protect\citeauthoryear{Abboud, Williams, and Yu}{Abboud
  et~al\mbox{.}}{2018b}]%
        {abboud2018matching}
\bibfield{author}{\bibinfo{person}{Amir Abboud},
  \bibinfo{person}{Virginia~Vassilevska Williams}, {and}
  \bibinfo{person}{Huacheng Yu}.} \bibinfo{year}{2018}\natexlab{b}.
\newblock \showarticletitle{Matching triangles and basing hardness on an
  extremely popular conjecture}.
\newblock \bibinfo{journal}{\emph{SIAM J. Comput.}} \bibinfo{volume}{47},
  \bibinfo{number}{3} (\bibinfo{year}{2018}), \bibinfo{pages}{1098--1122}.
\newblock


\bibitem[\protect\citeauthoryear{Abo~Khamis, Ngo, and Rudra}{Abo~Khamis
  et~al\mbox{.}}{2016}]%
        {abo2016faq}
\bibfield{author}{\bibinfo{person}{Mahmoud Abo~Khamis}, \bibinfo{person}{Hung~Q
  Ngo}, {and} \bibinfo{person}{Atri Rudra}.} \bibinfo{year}{2016}\natexlab{}.
\newblock \showarticletitle{FAQ: questions asked frequently}. In
  \bibinfo{booktitle}{\emph{Proceedings of the 35th ACM SIGMOD-SIGACT-SIGAI
  Symposium on Principles of Database Systems}}. \bibinfo{pages}{13--28}.
\newblock


\bibitem[\protect\citeauthoryear{Abo~Khamis, Ngo, and Suciu}{Abo~Khamis
  et~al\mbox{.}}{2017}]%
        {abo2017shannon}
\bibfield{author}{\bibinfo{person}{Mahmoud Abo~Khamis}, \bibinfo{person}{Hung~Q
  Ngo}, {and} \bibinfo{person}{Dan Suciu}.} \bibinfo{year}{2017}\natexlab{}.
\newblock \showarticletitle{What do Shannon-type Inequalities, Submodular
  Width, and Disjunctive Datalog have to do with one another?}. In
  \bibinfo{booktitle}{\emph{Proceedings of the 36th ACM SIGMOD-SIGACT-SIGAI
  Symposium on Principles of Database Systems}}. \bibinfo{pages}{429--444}.
\newblock


\bibitem[\protect\citeauthoryear{Akbarinia, Pacitti, and Valduriez}{Akbarinia
  et~al\mbox{.}}{2011}]%
        {akbarinia2011best}
\bibfield{author}{\bibinfo{person}{Reza Akbarinia}, \bibinfo{person}{Esther
  Pacitti}, {and} \bibinfo{person}{Patrick Valduriez}.}
  \bibinfo{year}{2011}\natexlab{}.
\newblock \showarticletitle{Best position algorithms for efficient top-k query
  processing}.
\newblock \bibinfo{journal}{\emph{Information Systems}} \bibinfo{volume}{36},
  \bibinfo{number}{6} (\bibinfo{year}{2011}), \bibinfo{pages}{973--989}.
\newblock


\bibitem[\protect\citeauthoryear{Alon, Yuster, and Zwick}{Alon
  et~al\mbox{.}}{1994}]%
        {alon1994finding}
\bibfield{author}{\bibinfo{person}{Noga Alon}, \bibinfo{person}{Raphael
  Yuster}, {and} \bibinfo{person}{Uri Zwick}.} \bibinfo{year}{1994}\natexlab{}.
\newblock \showarticletitle{Finding and counting given length cycles}. In
  \bibinfo{booktitle}{\emph{European Symposium on Algorithms}}. Springer,
  \bibinfo{pages}{354--364}.
\newblock


\bibitem[\protect\citeauthoryear{Amossen and Pagh}{Amossen and Pagh}{2009}]%
        {amossen2009faster}
\bibfield{author}{\bibinfo{person}{Rasmus~Resen Amossen} {and}
  \bibinfo{person}{Rasmus Pagh}.} \bibinfo{year}{2009}\natexlab{}.
\newblock \showarticletitle{Faster join-projects and sparse matrix
  multiplications}. In \bibinfo{booktitle}{\emph{Proceedings of the 12th
  International Conference on Database Theory}}. ACM,
  \bibinfo{pages}{121--126}.
\newblock


\bibitem[\protect\citeauthoryear{Angles and Gutierrez}{Angles and
  Gutierrez}{2011}]%
        {angles2011subqueries}
\bibfield{author}{\bibinfo{person}{Renzo Angles} {and} \bibinfo{person}{Claudio
  Gutierrez}.} \bibinfo{year}{2011}\natexlab{}.
\newblock \showarticletitle{Subqueries in SPARQL.}
\newblock \bibinfo{journal}{\emph{AMW}}  \bibinfo{volume}{749}
  (\bibinfo{year}{2011}), \bibinfo{pages}{12}.
\newblock


\bibitem[\protect\citeauthoryear{Bagan, Durand, and Grandjean}{Bagan
  et~al\mbox{.}}{2007}]%
        {bagan2007acyclic}
\bibfield{author}{\bibinfo{person}{Guillaume Bagan}, \bibinfo{person}{Arnaud
  Durand}, {and} \bibinfo{person}{Etienne Grandjean}.}
  \bibinfo{year}{2007}\natexlab{}.
\newblock \showarticletitle{On acyclic conjunctive queries and constant delay
  enumeration}. In \bibinfo{booktitle}{\emph{International Workshop on Computer
  Science Logic}}. Springer, \bibinfo{pages}{208--222}.
\newblock


\bibitem[\protect\citeauthoryear{Bakibayev, Ko{\v{c}}isk{\`y}, Olteanu, and
  Z{\'a}vodn{\`y}}{Bakibayev et~al\mbox{.}}{2013}]%
        {bakibayev2013aggregation}
\bibfield{author}{\bibinfo{person}{Nurzhan Bakibayev},
  \bibinfo{person}{Tom{\'a}{\v{s}} Ko{\v{c}}isk{\`y}}, \bibinfo{person}{Dan
  Olteanu}, {and} \bibinfo{person}{Jakub Z{\'a}vodn{\`y}}.}
  \bibinfo{year}{2013}\natexlab{}.
\newblock \showarticletitle{Aggregation and ordering in factorised databases}.
\newblock \bibinfo{journal}{\emph{Proceedings of the VLDB Endowment}}
  \bibinfo{volume}{6}, \bibinfo{number}{14} (\bibinfo{year}{2013}),
  \bibinfo{pages}{1990--2001}.
\newblock


\bibitem[\protect\citeauthoryear{Bakibayev, Olteanu, and
  Z{\'a}vodn{\`y}}{Bakibayev et~al\mbox{.}}{2012}]%
        {bakibayev2012fdb}
\bibfield{author}{\bibinfo{person}{Nurzhan Bakibayev}, \bibinfo{person}{Dan
  Olteanu}, {and} \bibinfo{person}{Jakub Z{\'a}vodn{\`y}}.}
  \bibinfo{year}{2012}\natexlab{}.
\newblock \showarticletitle{FDB: A query engine for factorised relational
  databases}.
\newblock \bibinfo{journal}{\emph{Proceedings of the VLDB Endowment}}
  \bibinfo{volume}{5}, \bibinfo{number}{11} (\bibinfo{year}{2012}),
  \bibinfo{pages}{1232--1243}.
\newblock


\bibitem[\protect\citeauthoryear{Bast, Majumdar, Schenkel, Theobalt, and
  Weikum}{Bast et~al\mbox{.}}{2006}]%
        {bast2006io}
\bibfield{author}{\bibinfo{person}{Holger Bast}, \bibinfo{person}{Debapriyo
  Majumdar}, \bibinfo{person}{Ralf Schenkel}, \bibinfo{person}{Christian
  Theobalt}, {and} \bibinfo{person}{Gerhard Weikum}.}
  \bibinfo{year}{2006}\natexlab{}.
\newblock \showarticletitle{Io-top-k: Index-access optimized top-k query
  processing}.
\newblock  (\bibinfo{year}{2006}).
\newblock


\bibitem[\protect\citeauthoryear{Biryukov}{Biryukov}{2008}]%
        {biryukov2008co}
\bibfield{author}{\bibinfo{person}{Maria Biryukov}.}
  \bibinfo{year}{2008}\natexlab{}.
\newblock \showarticletitle{Co-author network analysis in DBLP: Classifying
  personal names}. In \bibinfo{booktitle}{\emph{International Conference on
  Modelling, Computation and Optimization in Information Systems and Management
  Sciences}}. Springer, \bibinfo{pages}{399--408}.
\newblock


\bibitem[\protect\citeauthoryear{Bonifati, Martens, and Timm}{Bonifati
  et~al\mbox{.}}{2020}]%
        {bonifati2020analytical}
\bibfield{author}{\bibinfo{person}{Angela Bonifati}, \bibinfo{person}{Wim
  Martens}, {and} \bibinfo{person}{Thomas Timm}.}
  \bibinfo{year}{2020}\natexlab{}.
\newblock \showarticletitle{An analytical study of large SPARQL query logs}.
\newblock \bibinfo{journal}{\emph{The VLDB Journal}} \bibinfo{volume}{29},
  \bibinfo{number}{2} (\bibinfo{year}{2020}), \bibinfo{pages}{655--679}.
\newblock


\bibitem[\protect\citeauthoryear{Bourhis, Grez, Jachiet, and Riveros}{Bourhis
  et~al\mbox{.}}{2021}]%
        {bourhis2020ranked}
\bibfield{author}{\bibinfo{person}{Pierre Bourhis}, \bibinfo{person}{Alejandro
  Grez}, \bibinfo{person}{Louis Jachiet}, {and} \bibinfo{person}{Cristian
  Riveros}.} \bibinfo{year}{2021}\natexlab{}.
\newblock \showarticletitle{Ranked enumeration of MSO logic on words}.
\newblock \bibinfo{journal}{\emph{ICDT}} (\bibinfo{year}{2021}).
\newblock


\bibitem[\protect\citeauthoryear{Carmeli and Kr{\"o}ll}{Carmeli and
  Kr{\"o}ll}{2019}]%
        {carmeli2019enumeration}
\bibfield{author}{\bibinfo{person}{Nofar Carmeli} {and} \bibinfo{person}{Markus
  Kr{\"o}ll}.} \bibinfo{year}{2019}\natexlab{}.
\newblock \showarticletitle{On the enumeration complexity of unions of
  conjunctive queries}. In \bibinfo{booktitle}{\emph{Proceedings of the 38th
  ACM SIGMOD-SIGACT-SIGAI Symposium on Principles of Database Systems}}.
  \bibinfo{pages}{134--148}.
\newblock


\bibitem[\protect\citeauthoryear{Chang, Lin, Zhang, Yu, Zhang, and Qin}{Chang
  et~al\mbox{.}}{2015}]%
        {chang2015optimal}
\bibfield{author}{\bibinfo{person}{Lijun Chang}, \bibinfo{person}{Xuemin Lin},
  \bibinfo{person}{Wenjie Zhang}, \bibinfo{person}{Jeffrey~Xu Yu},
  \bibinfo{person}{Ying Zhang}, {and} \bibinfo{person}{Lu Qin}.}
  \bibinfo{year}{2015}\natexlab{}.
\newblock \showarticletitle{Optimal enumeration: Efficient top-k tree
  matching}.
\newblock \bibinfo{journal}{\emph{Proceedings of the VLDB Endowment}}
  \bibinfo{volume}{8}, \bibinfo{number}{5} (\bibinfo{year}{2015}),
  \bibinfo{pages}{533--544}.
\newblock


\bibitem[\protect\citeauthoryear{Chen, Liu, Yang, and Zou}{Chen
  et~al\mbox{.}}{2018}]%
        {chen2018inferring}
\bibfield{author}{\bibinfo{person}{Jinpeng Chen}, \bibinfo{person}{Yu Liu},
  \bibinfo{person}{Guang Yang}, {and} \bibinfo{person}{Ming Zou}.}
  \bibinfo{year}{2018}\natexlab{}.
\newblock \showarticletitle{Inferring tag co-occurrence relationship across
  heterogeneous social networks}.
\newblock \bibinfo{journal}{\emph{Applied Soft Computing}}
  \bibinfo{volume}{66} (\bibinfo{year}{2018}), \bibinfo{pages}{512--524}.
\newblock


\bibitem[\protect\citeauthoryear{Christmann, Roy, and Weikum}{Christmann
  et~al\mbox{.}}{2021}]%
        {christmann2021efficient}
\bibfield{author}{\bibinfo{person}{Philipp Christmann},
  \bibinfo{person}{Rishiraj~Saha Roy}, {and} \bibinfo{person}{Gerhard Weikum}.}
  \bibinfo{year}{2021}\natexlab{}.
\newblock \showarticletitle{Efficient Contextualization using Top-k Operators
  for Question Answering over Knowledge Graphs}.
\newblock \bibinfo{journal}{\emph{arXiv preprint arXiv:2108.08597}}
  (\bibinfo{year}{2021}).
\newblock


\bibitem[\protect\citeauthoryear{Ciucanu and Olteanu}{Ciucanu and
  Olteanu}{2015}]%
        {ciucanu2015worst}
\bibfield{author}{\bibinfo{person}{Radu Ciucanu} {and} \bibinfo{person}{Dan
  Olteanu}.} \bibinfo{year}{2015}\natexlab{}.
\newblock \bibinfo{booktitle}{\emph{Worst-case optimal join at a time}}.
\newblock \bibinfo{type}{{T}echnical {R}eport}. \bibinfo{institution}{Technical
  report, Oxford}.
\newblock


\bibitem[\protect\citeauthoryear{Corby and Faron-Zucker}{Corby and
  Faron-Zucker}{2007}]%
        {corby2007implementation}
\bibfield{author}{\bibinfo{person}{Olivier Corby} {and}
  \bibinfo{person}{Catherine Faron-Zucker}.} \bibinfo{year}{2007}\natexlab{}.
\newblock \showarticletitle{Implementation of SPARQL query language based on
  graph homomorphism}. In \bibinfo{booktitle}{\emph{International Conference on
  Conceptual Structures}}. Springer, \bibinfo{pages}{472--475}.
\newblock


\bibitem[\protect\citeauthoryear{Dalvi and Suciu}{Dalvi and Suciu}{2007}]%
        {dalvi2007efficient}
\bibfield{author}{\bibinfo{person}{Nilesh Dalvi} {and} \bibinfo{person}{Dan
  Suciu}.} \bibinfo{year}{2007}\natexlab{}.
\newblock \showarticletitle{Efficient query evaluation on probabilistic
  databases}.
\newblock \bibinfo{journal}{\emph{The VLDB Journal}} \bibinfo{volume}{16},
  \bibinfo{number}{4} (\bibinfo{year}{2007}), \bibinfo{pages}{523--544}.
\newblock


\bibitem[\protect\citeauthoryear{Deep, Hu, and Koutris}{Deep
  et~al\mbox{.}}{2021}]%
        {deep2021enumeration}
\bibfield{author}{\bibinfo{person}{Shaleen Deep}, \bibinfo{person}{Xiao Hu},
  {and} \bibinfo{person}{Paraschos Koutris}.} \bibinfo{year}{2021}\natexlab{}.
\newblock \showarticletitle{Enumeration Algorithms for Conjunctive Queries with
  Projection}. In \bibinfo{booktitle}{\emph{24th International Conference on
  Database Theory (ICDT 2021)}}. Schloss Dagstuhl-Leibniz-Zentrum f{\"u}r
  Informatik.
\newblock


\bibitem[\protect\citeauthoryear{Deep, Hu, and Koutris}{Deep
  et~al\mbox{.}}{2022}]%
        {fullversion}
\bibfield{author}{\bibinfo{person}{Shaleen Deep}, \bibinfo{person}{Xiao Hu},
  {and} \bibinfo{person}{Paraschos Koutris}.} \bibinfo{year}{2022}\natexlab{}.
\newblock \showarticletitle{Ranked Enumeration of Join Queries with
  Projections}.
\newblock \bibinfo{journal}{\emph{arXiv preprint arXiv:2201.05566}}
  (\bibinfo{year}{2022}).
\newblock


\bibitem[\protect\citeauthoryear{Deep and Koutris}{Deep and Koutris}{2021}]%
        {deep2019ranked}
\bibfield{author}{\bibinfo{person}{Shaleen Deep} {and}
  \bibinfo{person}{Paraschos Koutris}.} \bibinfo{year}{2021}\natexlab{}.
\newblock \showarticletitle{Ranked Enumeration of Conjunctive Query Results}.
  In \bibinfo{booktitle}{\emph{24th International Conference on Database
  Theory}}.
\newblock


\bibitem[\protect\citeauthoryear{Duplyakin, Ricci, Maricq, Wong, Duerig, Eide,
  Stoller, Hibler, Johnson, Webb, et~al\mbox{.}}{Duplyakin
  et~al\mbox{.}}{2019}]%
        {duplyakin2019design}
\bibfield{author}{\bibinfo{person}{Dmitry Duplyakin}, \bibinfo{person}{Robert
  Ricci}, \bibinfo{person}{Aleksander Maricq}, \bibinfo{person}{Gary Wong},
  \bibinfo{person}{Jonathon Duerig}, \bibinfo{person}{Eric Eide},
  \bibinfo{person}{Leigh Stoller}, \bibinfo{person}{Mike Hibler},
  \bibinfo{person}{David Johnson}, \bibinfo{person}{Kirk Webb},
  {et~al\mbox{.}}} \bibinfo{year}{2019}\natexlab{}.
\newblock \showarticletitle{The design and operation of CloudLab}. In
  \bibinfo{booktitle}{\emph{2019 $\{$USENIX$\}$ Annual Technical Conference
  ($\{$USENIX$\}$$\{$ATC$\}$ 19)}}. \bibinfo{pages}{1--14}.
\newblock


\bibitem[\protect\citeauthoryear{Elmacioglu and Lee}{Elmacioglu and
  Lee}{2005}]%
        {elmacioglu2005six}
\bibfield{author}{\bibinfo{person}{Ergin Elmacioglu} {and}
  \bibinfo{person}{Dongwon Lee}.} \bibinfo{year}{2005}\natexlab{}.
\newblock \showarticletitle{On six degrees of separation in DBLP-DB and more}.
\newblock \bibinfo{journal}{\emph{ACM SIGMOD Record}} \bibinfo{volume}{34},
  \bibinfo{number}{2} (\bibinfo{year}{2005}), \bibinfo{pages}{33--40}.
\newblock


\bibitem[\protect\citeauthoryear{Erling, Averbuch, Larriba-Pey, Chafi,
  Gubichev, Prat, Pham, and Boncz}{Erling et~al\mbox{.}}{2015}]%
        {erling2015ldbc}
\bibfield{author}{\bibinfo{person}{Orri Erling}, \bibinfo{person}{Alex
  Averbuch}, \bibinfo{person}{Josep Larriba-Pey}, \bibinfo{person}{Hassan
  Chafi}, \bibinfo{person}{Andrey Gubichev}, \bibinfo{person}{Arnau Prat},
  \bibinfo{person}{Minh-Duc Pham}, {and} \bibinfo{person}{Peter Boncz}.}
  \bibinfo{year}{2015}\natexlab{}.
\newblock \showarticletitle{The LDBC social network benchmark: Interactive
  workload}. In \bibinfo{booktitle}{\emph{Proceedings of the 2015 ACM SIGMOD
  International Conference on Management of Data}}. \bibinfo{pages}{619--630}.
\newblock


\bibitem[\protect\citeauthoryear{Fagin}{Fagin}{2002}]%
        {fagin2002combining}
\bibfield{author}{\bibinfo{person}{Ronald Fagin}.}
  \bibinfo{year}{2002}\natexlab{}.
\newblock \showarticletitle{Combining fuzzy information: an overview}.
\newblock \bibinfo{journal}{\emph{ACM SIGMOD Record}} \bibinfo{volume}{31},
  \bibinfo{number}{2} (\bibinfo{year}{2002}), \bibinfo{pages}{109--118}.
\newblock


\bibitem[\protect\citeauthoryear{Fagin, Lotem, and Naor}{Fagin
  et~al\mbox{.}}{2003}]%
        {fagin2003optimal}
\bibfield{author}{\bibinfo{person}{Ronald Fagin}, \bibinfo{person}{Amnon
  Lotem}, {and} \bibinfo{person}{Moni Naor}.} \bibinfo{year}{2003}\natexlab{}.
\newblock \showarticletitle{Optimal aggregation algorithms for middleware}.
\newblock \bibinfo{journal}{\emph{Journal of computer and system sciences}}
  \bibinfo{volume}{66}, \bibinfo{number}{4} (\bibinfo{year}{2003}),
  \bibinfo{pages}{614--656}.
\newblock


\bibitem[\protect\citeauthoryear{Feng, Liu, Lin, and Quek}{Feng
  et~al\mbox{.}}{2019}]%
        {feng2019attention}
\bibfield{author}{\bibinfo{person}{Chenyuan Feng}, \bibinfo{person}{Zuozhu
  Liu}, \bibinfo{person}{Shaowei Lin}, {and} \bibinfo{person}{Tony~QS Quek}.}
  \bibinfo{year}{2019}\natexlab{}.
\newblock \showarticletitle{Attention-based graph convolutional network for
  recommendation system}. In \bibinfo{booktitle}{\emph{ICASSP 2019-2019 IEEE
  International Conference on Acoustics, Speech and Signal Processing
  (ICASSP)}}. IEEE, \bibinfo{pages}{7560--7564}.
\newblock


\bibitem[\protect\citeauthoryear{Fredman and Tarjan}{Fredman and
  Tarjan}{1987}]%
        {fredman1987fibonacci}
\bibfield{author}{\bibinfo{person}{Michael~L Fredman} {and}
  \bibinfo{person}{Robert~Endre Tarjan}.} \bibinfo{year}{1987}\natexlab{}.
\newblock \showarticletitle{Fibonacci heaps and their uses in improved network
  optimization algorithms}.
\newblock \bibinfo{journal}{\emph{Journal of the ACM (JACM)}}
  \bibinfo{volume}{34}, \bibinfo{number}{3} (\bibinfo{year}{1987}),
  \bibinfo{pages}{596--615}.
\newblock


\bibitem[\protect\citeauthoryear{Green, Huang, Loo, Zhou, et~al\mbox{.}}{Green
  et~al\mbox{.}}{2013}]%
        {green2013datalog}
\bibfield{author}{\bibinfo{person}{Todd~J Green}, \bibinfo{person}{Shan~Shan
  Huang}, \bibinfo{person}{Boon~Thau Loo}, \bibinfo{person}{Wenchao Zhou},
  {et~al\mbox{.}}} \bibinfo{year}{2013}\natexlab{}.
\newblock \bibinfo{booktitle}{\emph{Datalog and recursive query processing}}.
\newblock \bibinfo{publisher}{Now Publishers}.
\newblock


\bibitem[\protect\citeauthoryear{Harris and Shadbolt}{Harris and
  Shadbolt}{2005}]%
        {harris2005sparql}
\bibfield{author}{\bibinfo{person}{Stephen Harris} {and} \bibinfo{person}{Nigel
  Shadbolt}.} \bibinfo{year}{2005}\natexlab{}.
\newblock \showarticletitle{SPARQL query processing with conventional
  relational database systems}. In \bibinfo{booktitle}{\emph{International
  Conference on Web Information Systems Engineering}}. Springer,
  \bibinfo{pages}{235--244}.
\newblock


\bibitem[\protect\citeauthoryear{Hopcroft, Ullman, and Aho}{Hopcroft
  et~al\mbox{.}}{1975}]%
        {hopcroft1975design}
\bibfield{author}{\bibinfo{person}{John~E Hopcroft}, \bibinfo{person}{Jeffrey~D
  Ullman}, {and} \bibinfo{person}{AV Aho}.} \bibinfo{year}{1975}\natexlab{}.
\newblock \bibinfo{title}{The design and analysis of computer algorithms}.
\newblock
\newblock


\bibitem[\protect\citeauthoryear{Ilyas, Beskales, and Soliman}{Ilyas
  et~al\mbox{.}}{2008}]%
        {ilyas2008survey}
\bibfield{author}{\bibinfo{person}{Ihab~F Ilyas}, \bibinfo{person}{George
  Beskales}, {and} \bibinfo{person}{Mohamed~A Soliman}.}
  \bibinfo{year}{2008}\natexlab{}.
\newblock \showarticletitle{A survey of top-k query processing techniques in
  relational database systems}.
\newblock \bibinfo{journal}{\emph{ACM Computing Surveys (CSUR)}}
  \bibinfo{volume}{40}, \bibinfo{number}{4} (\bibinfo{year}{2008}),
  \bibinfo{pages}{11}.
\newblock


\bibitem[\protect\citeauthoryear{Ilyas, Shah, Aref, Vitter, and
  Elmagarmid}{Ilyas et~al\mbox{.}}{2004}]%
        {ilyas2004rank}
\bibfield{author}{\bibinfo{person}{Ihab~F Ilyas}, \bibinfo{person}{Rahul Shah},
  \bibinfo{person}{Walid~G Aref}, \bibinfo{person}{Jeffrey~Scott Vitter}, {and}
  \bibinfo{person}{Ahmed~K Elmagarmid}.} \bibinfo{year}{2004}\natexlab{}.
\newblock \showarticletitle{Rank-aware query optimization}. In
  \bibinfo{booktitle}{\emph{Proceedings of the 2004 ACM SIGMOD international
  conference on Management of data}}. ACM, \bibinfo{pages}{203--214}.
\newblock


\bibitem[\protect\citeauthoryear{Kara, Nikolic, Olteanu, and Zhang}{Kara
  et~al\mbox{.}}{2020}]%
        {kara19}
\bibfield{author}{\bibinfo{person}{Ahmet Kara}, \bibinfo{person}{Milos
  Nikolic}, \bibinfo{person}{Dan Olteanu}, {and} \bibinfo{person}{Haozhe
  Zhang}.} \bibinfo{year}{2020}\natexlab{}.
\newblock \showarticletitle{Trade-offs in static and dynamic evaluation of
  hierarchical queries}. In \bibinfo{booktitle}{\emph{Proceedings of the 39th
  ACM SIGMOD-SIGACT-SIGAI Symposium on Principles of Database Systems}}.
  \bibinfo{pages}{375--392}.
\newblock


\bibitem[\protect\citeauthoryear{Kargar, Golab, Srivastava, Szlichta, and
  Zihayat}{Kargar et~al\mbox{.}}{2020}]%
        {kargar2020effective}
\bibfield{author}{\bibinfo{person}{Mehdi Kargar}, \bibinfo{person}{Lukasz
  Golab}, \bibinfo{person}{Divesh Srivastava}, \bibinfo{person}{Jaroslaw
  Szlichta}, {and} \bibinfo{person}{Morteza Zihayat}.}
  \bibinfo{year}{2020}\natexlab{}.
\newblock \showarticletitle{Effective Keyword Search over Weighted Graphs}.
\newblock \bibinfo{journal}{\emph{IEEE Transactions on Knowledge and Data
  Engineering}} (\bibinfo{year}{2020}).
\newblock


\bibitem[\protect\citeauthoryear{Kimelfeld and Sagiv}{Kimelfeld and
  Sagiv}{2006}]%
        {kimelfeld2006incrementally}
\bibfield{author}{\bibinfo{person}{Benny Kimelfeld} {and}
  \bibinfo{person}{Yehoshua Sagiv}.} \bibinfo{year}{2006}\natexlab{}.
\newblock \showarticletitle{Incrementally computing ordered answers of acyclic
  conjunctive queries}. In \bibinfo{booktitle}{\emph{International Workshop on
  Next Generation Information Technologies and Systems}}. Springer,
  \bibinfo{pages}{141--152}.
\newblock


\bibitem[\protect\citeauthoryear{Kopelowitz, Pettie, and Porat}{Kopelowitz
  et~al\mbox{.}}{2016}]%
        {kopelowitz2016higher}
\bibfield{author}{\bibinfo{person}{Tsvi Kopelowitz}, \bibinfo{person}{Seth
  Pettie}, {and} \bibinfo{person}{Ely Porat}.} \bibinfo{year}{2016}\natexlab{}.
\newblock \showarticletitle{Higher lower bounds from the 3SUM conjecture}. In
  \bibinfo{booktitle}{\emph{Proceedings of the twenty-seventh annual ACM-SIAM
  symposium on Discrete algorithms}}. SIAM, \bibinfo{pages}{1272--1287}.
\newblock


\bibitem[\protect\citeauthoryear{K{\"u}{\c{c}}{\"u}ktun{\c{c}}, Saule, Kaya,
  and {\c{C}}ataly{\"u}rek}{K{\"u}{\c{c}}{\"u}ktun{\c{c}}
  et~al\mbox{.}}{2012}]%
        {kuccuktuncc2012recommendation}
\bibfield{author}{\bibinfo{person}{Onur K{\"u}{\c{c}}{\"u}ktun{\c{c}}},
  \bibinfo{person}{Erik Saule}, \bibinfo{person}{Kamer Kaya}, {and}
  \bibinfo{person}{{\"U}mit~V {\c{C}}ataly{\"u}rek}.}
  \bibinfo{year}{2012}\natexlab{}.
\newblock \showarticletitle{Recommendation on academic networks using direction
  aware citation analysis}.
\newblock \bibinfo{journal}{\emph{arXiv preprint arXiv:1205.1143}}
  (\bibinfo{year}{2012}).
\newblock


\bibitem[\protect\citeauthoryear{Lawler}{Lawler}{1972}]%
        {lawler1972procedure}
\bibfield{author}{\bibinfo{person}{Eugene~L Lawler}.}
  \bibinfo{year}{1972}\natexlab{}.
\newblock \showarticletitle{A procedure for computing the k best solutions to
  discrete optimization problems and its application to the shortest path
  problem}.
\newblock \bibinfo{journal}{\emph{Management science}} \bibinfo{volume}{18},
  \bibinfo{number}{7} (\bibinfo{year}{1972}), \bibinfo{pages}{401--405}.
\newblock


\bibitem[\protect\citeauthoryear{Leeka, Bedathur, Bera, and Atre}{Leeka
  et~al\mbox{.}}{2016}]%
        {leeka2016quark}
\bibfield{author}{\bibinfo{person}{Jyoti Leeka}, \bibinfo{person}{Srikanta
  Bedathur}, \bibinfo{person}{Debajyoti Bera}, {and} \bibinfo{person}{Medha
  Atre}.} \bibinfo{year}{2016}\natexlab{}.
\newblock \showarticletitle{Quark-X: An efficient top-k processing framework
  for RDF quad stores}. In \bibinfo{booktitle}{\emph{Proceedings of the 25th
  ACM International on Conference on Information and Knowledge Management}}.
  \bibinfo{pages}{831--840}.
\newblock


\bibitem[\protect\citeauthoryear{Leskovec and Krevl}{Leskovec and
  Krevl}{2014}]%
        {snapnets}
\bibfield{author}{\bibinfo{person}{Jure Leskovec} {and} \bibinfo{person}{Andrej
  Krevl}.} \bibinfo{year}{2014}\natexlab{}.
\newblock \bibinfo{title}{{SNAP Datasets}: {Stanford} Large Network Dataset
  Collection}.
\newblock \bibinfo{howpublished}{\url{http://snap.stanford.edu/data}}.
\newblock


\bibitem[\protect\citeauthoryear{Li, Chang, Ilyas, and Song}{Li
  et~al\mbox{.}}{2005a}]%
        {li2005ranksql}
\bibfield{author}{\bibinfo{person}{Chengkai Li}, \bibinfo{person}{Kevin
  Chen-Chuan Chang}, \bibinfo{person}{Ihab~F Ilyas}, {and}
  \bibinfo{person}{Sumin Song}.} \bibinfo{year}{2005}\natexlab{a}.
\newblock \showarticletitle{RankSQL: query algebra and optimization for
  relational top-k queries}. In \bibinfo{booktitle}{\emph{Proceedings of the
  2005 ACM SIGMOD international conference on Management of data}}. ACM,
  \bibinfo{pages}{131--142}.
\newblock


\bibitem[\protect\citeauthoryear{Li, Soliman, Chang, and Ilyas}{Li
  et~al\mbox{.}}{2005b}]%
        {li2005ranksql2}
\bibfield{author}{\bibinfo{person}{Chengkai Li}, \bibinfo{person}{Mohamed~A
  Soliman}, \bibinfo{person}{Kevin Chen-Chuan Chang}, {and}
  \bibinfo{person}{Ihab~F Ilyas}.} \bibinfo{year}{2005}\natexlab{b}.
\newblock \showarticletitle{RankSQL: supporting ranking queries in relational
  database management systems}. In \bibinfo{booktitle}{\emph{Proceedings of the
  31st international conference on Very large data bases}}. VLDB Endowment,
  \bibinfo{pages}{1342--1345}.
\newblock


\bibitem[\protect\citeauthoryear{Li, Fang, and Zhang}{Li et~al\mbox{.}}{2019}]%
        {li2019supervised}
\bibfield{author}{\bibinfo{person}{Xiaoming Li}, \bibinfo{person}{Hui Fang},
  {and} \bibinfo{person}{Jie Zhang}.} \bibinfo{year}{2019}\natexlab{}.
\newblock \showarticletitle{Supervised user ranking in signed social networks}.
  In \bibinfo{booktitle}{\emph{Proceedings of the AAAI Conference on Artificial
  Intelligence}}, Vol.~\bibinfo{volume}{33}. \bibinfo{pages}{184--191}.
\newblock


\bibitem[\protect\citeauthoryear{Lincoln, Williams, and Williams}{Lincoln
  et~al\mbox{.}}{2018}]%
        {lincoln2018tight}
\bibfield{author}{\bibinfo{person}{Andrea Lincoln},
  \bibinfo{person}{Virginia~Vassilevska Williams}, {and} \bibinfo{person}{Ryan
  Williams}.} \bibinfo{year}{2018}\natexlab{}.
\newblock \showarticletitle{Tight hardness for shortest cycles and paths in
  sparse graphs}. In \bibinfo{booktitle}{\emph{Proceedings of the Twenty-Ninth
  Annual ACM-SIAM Symposium on Discrete Algorithms}}. SIAM,
  \bibinfo{pages}{1236--1252}.
\newblock


\bibitem[\protect\citeauthoryear{Manegold, Kersten, and Boncz}{Manegold
  et~al\mbox{.}}{2009}]%
        {manegold2009database}
\bibfield{author}{\bibinfo{person}{Stefan Manegold}, \bibinfo{person}{Martin~L
  Kersten}, {and} \bibinfo{person}{Peter Boncz}.}
  \bibinfo{year}{2009}\natexlab{}.
\newblock \showarticletitle{Database architecture evolution: Mammals flourished
  long before dinosaurs became extinct}.
\newblock \bibinfo{journal}{\emph{Proceedings of the VLDB Endowment}}
  \bibinfo{volume}{2}, \bibinfo{number}{2} (\bibinfo{year}{2009}),
  \bibinfo{pages}{1648--1653}.
\newblock


\bibitem[\protect\citeauthoryear{Marx}{Marx}{2013}]%
        {marx2013tractable}
\bibfield{author}{\bibinfo{person}{D{\'a}niel Marx}.}
  \bibinfo{year}{2013}\natexlab{}.
\newblock \showarticletitle{Tractable hypergraph properties for constraint
  satisfaction and conjunctive queries}.
\newblock \bibinfo{journal}{\emph{Journal of the ACM (JACM)}}
  \bibinfo{volume}{60}, \bibinfo{number}{6} (\bibinfo{year}{2013}),
  \bibinfo{pages}{1--51}.
\newblock


\bibitem[\protect\citeauthoryear{Marx}{Marx}{2021}]%
        {marx2021modern}
\bibfield{author}{\bibinfo{person}{D{\'a}niel Marx}.}
  \bibinfo{year}{2021}\natexlab{}.
\newblock \showarticletitle{Modern Lower Bound Techniques in Database Theory
  and Constraint Satisfaction}. In \bibinfo{booktitle}{\emph{Proceedings of the
  40th ACM SIGMOD-SIGACT-SIGAI Symposium on Principles of Database Systems}}.
  \bibinfo{pages}{19--29}.
\newblock


\bibitem[\protect\citeauthoryear{Myers, Sharma, Gupta, and Lin}{Myers
  et~al\mbox{.}}{2014}]%
        {myers2014information}
\bibfield{author}{\bibinfo{person}{Seth~A Myers}, \bibinfo{person}{Aneesh
  Sharma}, \bibinfo{person}{Pankaj Gupta}, {and} \bibinfo{person}{Jimmy Lin}.}
  \bibinfo{year}{2014}\natexlab{}.
\newblock \showarticletitle{Information network or social network? The
  structure of the Twitter follow graph}. In
  \bibinfo{booktitle}{\emph{Proceedings of the 23rd International Conference on
  World Wide Web}}. \bibinfo{pages}{493--498}.
\newblock


\bibitem[\protect\citeauthoryear{Ngo, Porat, R{\'e}, and Rudra}{Ngo
  et~al\mbox{.}}{2012}]%
        {ngo2012worst}
\bibfield{author}{\bibinfo{person}{Hung~Q Ngo}, \bibinfo{person}{Ely Porat},
  \bibinfo{person}{Christopher R{\'e}}, {and} \bibinfo{person}{Atri Rudra}.}
  \bibinfo{year}{2012}\natexlab{}.
\newblock \showarticletitle{Worst-case optimal join algorithms}. In
  \bibinfo{booktitle}{\emph{Proceedings of the 31st ACM SIGMOD-SIGACT-SIGAI
  symposium on Principles of Database Systems}}. ACM, \bibinfo{pages}{37--48}.
\newblock


\bibitem[\protect\citeauthoryear{Olteanu and Z{\'a}vodn{\`y}}{Olteanu and
  Z{\'a}vodn{\`y}}{2015}]%
        {olteanu2015size}
\bibfield{author}{\bibinfo{person}{Dan Olteanu} {and} \bibinfo{person}{Jakub
  Z{\'a}vodn{\`y}}.} \bibinfo{year}{2015}\natexlab{}.
\newblock \showarticletitle{Size bounds for factorised representations of query
  results}.
\newblock \bibinfo{journal}{\emph{ACM Transactions on Database Systems (TODS)}}
  \bibinfo{volume}{40}, \bibinfo{number}{1} (\bibinfo{year}{2015}),
  \bibinfo{pages}{1--44}.
\newblock


\bibitem[\protect\citeauthoryear{P{\'e}rez, Arenas, and Gutierrez}{P{\'e}rez
  et~al\mbox{.}}{2009}]%
        {perez2009semantics}
\bibfield{author}{\bibinfo{person}{Jorge P{\'e}rez}, \bibinfo{person}{Marcelo
  Arenas}, {and} \bibinfo{person}{Claudio Gutierrez}.}
  \bibinfo{year}{2009}\natexlab{}.
\newblock \showarticletitle{Semantics and complexity of SPARQL}.
\newblock \bibinfo{journal}{\emph{ACM Transactions on Database Systems (TODS)}}
  \bibinfo{volume}{34}, \bibinfo{number}{3} (\bibinfo{year}{2009}),
  \bibinfo{pages}{1--45}.
\newblock


\bibitem[\protect\citeauthoryear{Qi, Candan, and Sapino}{Qi
  et~al\mbox{.}}{2007}]%
        {qi2007sum}
\bibfield{author}{\bibinfo{person}{Yan Qi}, \bibinfo{person}{K~Sel{\c{c}}uk
  Candan}, {and} \bibinfo{person}{Maria~Luisa Sapino}.}
  \bibinfo{year}{2007}\natexlab{}.
\newblock \showarticletitle{Sum-max monotonic ranked joins for evaluating top-k
  twig queries on weighted data graphs}. In
  \bibinfo{booktitle}{\emph{Proceedings of the 33rd international conference on
  Very large data bases}}. VLDB Endowment, \bibinfo{pages}{507--518}.
\newblock


\bibitem[\protect\citeauthoryear{Rahm and Thor}{Rahm and Thor}{2005}]%
        {rahm2005citation}
\bibfield{author}{\bibinfo{person}{Erhard Rahm} {and} \bibinfo{person}{Andreas
  Thor}.} \bibinfo{year}{2005}\natexlab{}.
\newblock \showarticletitle{Citation analysis of database publications}.
\newblock \bibinfo{journal}{\emph{ACM Sigmod Record}} \bibinfo{volume}{34},
  \bibinfo{number}{4} (\bibinfo{year}{2005}), \bibinfo{pages}{48--53}.
\newblock


\bibitem[\protect\citeauthoryear{Re, Dalvi, and Suciu}{Re
  et~al\mbox{.}}{2007}]%
        {re2007efficient}
\bibfield{author}{\bibinfo{person}{Christopher Re}, \bibinfo{person}{Nilesh
  Dalvi}, {and} \bibinfo{person}{Dan Suciu}.} \bibinfo{year}{2007}\natexlab{}.
\newblock \showarticletitle{Efficient top-k query evaluation on probabilistic
  data}. In \bibinfo{booktitle}{\emph{Data Engineering, 2007. ICDE 2007. IEEE
  23rd International Conference on}}. IEEE, \bibinfo{pages}{886--895}.
\newblock


\bibitem[\protect\citeauthoryear{Segoufin}{Segoufin}{2015}]%
        {Segoufin15}
\bibfield{author}{\bibinfo{person}{Luc Segoufin}.}
  \bibinfo{year}{2015}\natexlab{}.
\newblock \showarticletitle{Constant Delay Enumeration for Conjunctive
  Queries}.
\newblock \bibinfo{journal}{\emph{{SIGMOD} Record}} \bibinfo{volume}{44},
  \bibinfo{number}{1} (\bibinfo{year}{2015}), \bibinfo{pages}{10--17}.
\newblock
\urldef\tempurl%
\url{https://doi.org/10.1145/2783888.2783894}
\showDOI{\tempurl}


\bibitem[\protect\citeauthoryear{Sequeda and Miranker}{Sequeda and
  Miranker}{2013}]%
        {sequeda2013ultrawrap}
\bibfield{author}{\bibinfo{person}{Juan~F Sequeda} {and}
  \bibinfo{person}{Daniel~P Miranker}.} \bibinfo{year}{2013}\natexlab{}.
\newblock \showarticletitle{Ultrawrap: SPARQL execution on relational data}.
\newblock \bibinfo{journal}{\emph{Journal of Web Semantics}}
  \bibinfo{volume}{22} (\bibinfo{year}{2013}), \bibinfo{pages}{19--39}.
\newblock


\bibitem[\protect\citeauthoryear{Sun and Han}{Sun and Han}{2013}]%
        {sun2013mining}
\bibfield{author}{\bibinfo{person}{Yizhou Sun} {and} \bibinfo{person}{Jiawei
  Han}.} \bibinfo{year}{2013}\natexlab{}.
\newblock \showarticletitle{Mining heterogeneous information networks: a
  structural analysis approach}.
\newblock \bibinfo{journal}{\emph{Acm Sigkdd Explorations Newsletter}}
  \bibinfo{volume}{14}, \bibinfo{number}{2} (\bibinfo{year}{2013}),
  \bibinfo{pages}{20--28}.
\newblock


\bibitem[\protect\citeauthoryear{Tsaparas, Palpanas, Kotidis, Koudas, and
  Srivastava}{Tsaparas et~al\mbox{.}}{2003}]%
        {tsaparas2003ranked}
\bibfield{author}{\bibinfo{person}{Panayiotis Tsaparas},
  \bibinfo{person}{Themistoklis Palpanas}, \bibinfo{person}{Yannis Kotidis},
  \bibinfo{person}{Nick Koudas}, {and} \bibinfo{person}{Divesh Srivastava}.}
  \bibinfo{year}{2003}\natexlab{}.
\newblock \showarticletitle{Ranked join indices}. In
  \bibinfo{booktitle}{\emph{Proceedings 19th International Conference on Data
  Engineering (Cat. No. 03CH37405)}}. IEEE, \bibinfo{pages}{277--288}.
\newblock


\bibitem[\protect\citeauthoryear{Tziavelis, Ajwani, Gatterbauer, Riedewald, and
  Yang}{Tziavelis et~al\mbox{.}}{2020a}]%
        {tziavelis2020optimal}
\bibfield{author}{\bibinfo{person}{Nikolaos Tziavelis}, \bibinfo{person}{Deepak
  Ajwani}, \bibinfo{person}{Wolfgang Gatterbauer}, \bibinfo{person}{Mirek
  Riedewald}, {and} \bibinfo{person}{Xiaofeng Yang}.}
  \bibinfo{year}{2020}\natexlab{a}.
\newblock \showarticletitle{Optimal algorithms for ranked enumeration of
  answers to full conjunctive queries}. In
  \bibinfo{booktitle}{\emph{Proceedings of the VLDB Endowment. International
  Conference on Very Large Data Bases}}, Vol.~\bibinfo{volume}{13}. NIH Public
  Access, \bibinfo{pages}{1582}.
\newblock


\bibitem[\protect\citeauthoryear{Tziavelis, Ajwani, Gatterbauer, Riedewald, and
  Yang}{Tziavelis et~al\mbox{.}}{2020b}]%
        {tziavelis2020optimalv3}
\bibfield{author}{\bibinfo{person}{Nikolaos Tziavelis}, \bibinfo{person}{Deepak
  Ajwani}, \bibinfo{person}{Wolfgang Gatterbauer}, \bibinfo{person}{Mirek
  Riedewald}, {and} \bibinfo{person}{Xiaofeng Yang}.}
  \bibinfo{year}{2020}\natexlab{b}.
\newblock \bibinfo{title}{Optimal Algorithms for Ranked Enumeration of Answers
  to Full Conjunctive Queries}.
\newblock
\newblock
\showeprint[arxiv]{1911.05582v3}~[cs.DB]


\bibitem[\protect\citeauthoryear{Tziavelis, Gatterbauer, and
  Riedewald}{Tziavelis et~al\mbox{.}}{2021}]%
        {tziavelis2021beyond}
\bibfield{author}{\bibinfo{person}{Nikolaos Tziavelis},
  \bibinfo{person}{Wolfgang Gatterbauer}, {and} \bibinfo{person}{Mirek
  Riedewald}.} \bibinfo{year}{2021}\natexlab{}.
\newblock \showarticletitle{Beyond Equi-joins: Ranking, Enumeration and
  Factorization}. In \bibinfo{booktitle}{\emph{Proceedings of the VLDB
  Endowment. International Conference on Very Large Data Bases}},
  Vol.~\bibinfo{volume}{14}. \bibinfo{pages}{2599--2612}.
\newblock


\bibitem[\protect\citeauthoryear{Xirogiannopoulos and
  Deshpande}{Xirogiannopoulos and Deshpande}{2017}]%
        {graphgen2017}
\bibfield{author}{\bibinfo{person}{Konstantinos Xirogiannopoulos} {and}
  \bibinfo{person}{Amol Deshpande}.} \bibinfo{year}{2017}\natexlab{}.
\newblock \showarticletitle{Extracting and Analyzing Hidden Graphs from
  Relational Databases}. In \bibinfo{booktitle}{\emph{Proceedings of the 2017
  ACM International Conference on Management of Data}}. ACM,
  \bibinfo{pages}{897--912}.
\newblock


\bibitem[\protect\citeauthoryear{Yang, Ajwani, Gatterbauer, Nicholson,
  Riedewald, and Sala}{Yang et~al\mbox{.}}{2018}]%
        {yang2018any}
\bibfield{author}{\bibinfo{person}{Xiaofeng Yang}, \bibinfo{person}{Deepak
  Ajwani}, \bibinfo{person}{Wolfgang Gatterbauer}, \bibinfo{person}{Patrick~K
  Nicholson}, \bibinfo{person}{Mirek Riedewald}, {and}
  \bibinfo{person}{Alessandra Sala}.} \bibinfo{year}{2018}\natexlab{}.
\newblock \showarticletitle{Any-k: Anytime Top-k Tree Pattern Retrieval in
  Labeled Graphs}. In \bibinfo{booktitle}{\emph{Proceedings of the 2018 World
  Wide Web Conference on World Wide Web}}. International World Wide Web
  Conferences Steering Committee, \bibinfo{pages}{489--498}.
\newblock


\bibitem[\protect\citeauthoryear{Yannakakis}{Yannakakis}{1981}]%
        {yannakakis1981algorithms}
\bibfield{author}{\bibinfo{person}{Mihalis Yannakakis}.}
  \bibinfo{year}{1981}\natexlab{}.
\newblock \showarticletitle{Algorithms for acyclic database schemes}. In
  \bibinfo{booktitle}{\emph{VLDB}}, Vol.~\bibinfo{volume}{81}.
  \bibinfo{pages}{82--94}.
\newblock


\bibitem[\protect\citeauthoryear{Yu, Sun, Norick, Mao, and Han}{Yu
  et~al\mbox{.}}{2012}]%
        {yu2012user}
\bibfield{author}{\bibinfo{person}{Xiao Yu}, \bibinfo{person}{Yizhou Sun},
  \bibinfo{person}{Brandon Norick}, \bibinfo{person}{Tiancheng Mao}, {and}
  \bibinfo{person}{Jiawei Han}.} \bibinfo{year}{2012}\natexlab{}.
\newblock \showarticletitle{User guided entity similarity search using
  meta-path selection in heterogeneous information networks}. In
  \bibinfo{booktitle}{\emph{Proceedings of the 21st ACM international
  conference on Information and knowledge management}}.
  \bibinfo{pages}{2025--2029}.
\newblock


\bibitem[\protect\citeauthoryear{Zou, Li, Gao, and Zhang}{Zou
  et~al\mbox{.}}{2010}]%
        {zou2010finding}
\bibfield{author}{\bibinfo{person}{Zhaonian Zou}, \bibinfo{person}{Jianzhong
  Li}, \bibinfo{person}{Hong Gao}, {and} \bibinfo{person}{Shuo Zhang}.}
  \bibinfo{year}{2010}\natexlab{}.
\newblock \showarticletitle{Finding top-k maximal cliques in an uncertain
  graph}.
\newblock  (\bibinfo{year}{2010}).
\newblock


\end{thebibliography}
	
	\clearpage
	\onecolumn
	\appendix
\section{Other Enumeration Semantics} \label{sec:semantics}

\cite{tziavelis2020optimalv3} discussed a more fine-grained approach to ranked enumeration over joins queries with projections. The goal is to enumerate the ranked output $Q(D)$ where $Q$ contains projections and the ranking function is defined over all the attributes ($\mathbb{A}$). Since the weights are placed on the attributes/relations and the output tuple $t$ does not contain all the attributes, each answer could have multiple weights associated with it.
In this case, one option is to return $t$ with the list of all the weights. Our algorithm for ranked enumeration can be extended to handle this trivially. The second semantics identified is the min-weight-projection semantics. In this case, $t$ is returned with the smallest possible weight.~\cite{tziavelis2020optimalv3} showed how their algorithm for full join can be extended to incorporate min-weight-projection semantics. We note here that the results for projection queries in~\cite{tziavelis2020optimalv3} can only handle acyclic free-connex queries, a restricted subclass of acyclic CQs (see~\autoref{sec:full} for the definition). Further, in our problem formulation, the ranking function depends only on the projection attributes and not all the attributes.

\section{Using Existing Algorithms} \label{sec:existing}

\begin{algorithm}[t]
	\SetCommentSty{textsf}
	\DontPrintSemicolon 
	\SetKwInOut{Input}{Input}
	\SetKwInOut{Output}{Output}
	\SetKwFunction{len}{\textsf{len()}}
	\SetKwFunction{insertt}{\textbf{insert}}
	\SetKwData{pq}{$\mathsf{PQ}$}
	\Input{Join-project query $Q$, ranking function \sqlhighlight{SUM} and database $D$}
	\Output{$Q(D)$ in ranked order}
	\SetKwData{ptr}{\textsf{ptr}}
	
	\SetKwData{return}{\textbf{return}}
	\SetKwData{dedup}{\textsf{dedup}}
	\SetKwData{last}{\textsf{last}}
	\SetKwData{hasNext}{\textsf{hasNext}}
	\SetKwData{getNext}{\textsf{getNext}}
	
	$Q' \leftarrow \text{ full query obtained by dropping the projection operator from } Q$ \;
	Provide $Q'$, ranking function \sqlhighlight{SUM} that assigns weight zero to all values of attributes $\mathbb{A} \setminus \bA$, and $D$ to $\mA$ for pre-processing; \;
	$\last \leftarrow \emptyset$ \;
	\While{$\mA.\hasNext()$}{
		$o \leftarrow \mA.\getNext()$ \label{line:mA} \tcc*{enumerate the answer from $\mA$ }
		\If{$\last \neq o[\bA]$}{
			output $o[\bA]$, $\last \leftarrow o[\bA]$ \;
		}
	}
	\return \tcc*{Enumeration complete}
	\caption{{\sc UseExistingAlgorithm}}
	\label{algo:existing}
\end{algorithm} 

Given a blackbox algorithm $\mA$ for ranked enumeration of full join queries, we can obtain an algorithm for ranked enumeration of join-project queries. \autoref{algo:existing} shows how $\mA$ can be used for enumeration of join-project queries. The key idea here is that $\mA$ can assign a weight of zero to all values of non-projection attributes. This will guarantee that tuples enumerated by $\mA$ will be sorted only over the sum of attribute values of $\bA$. The while loop contains a variable that stores the last tuple output $\bA$. This helps in ensuring that no duplicates are output for the join-project query. Intuitively, \autoref{algo:existing} is performing a ranked group by over the attributes $\bA$ and outputs an answer when the grouping value changes. We now show that there exist queries where the dealy for this algorithm could be large. Consider the query 
$$Q = \pi_{X_1} (\Join_ {i \in [\ell] } R_i(X_i, Y))$$.

Suppose that each relation $R_i$ has $N$ attribute value for $X_i$ that is connected one $Y$ attribute value $y^\star$ for each $i \in \{1, \dots, \ell\}$. Then, the smallest answer of $QD)$ will be output $N^{\ell - 1}$ times by $\mA$ on~\autoref{line:mA}. This is because the join of $R_2 \Join \dots \Join R_\ell$ has a size of $N^{\ell - 1}$. This directly implies that the delay guarantee is $\Omega(N^{\ell - 1})$.

\section{Missing Proofs} \label{sec:correctness}

\lemdelay*
\begin{proof}
	In order to prove the delay guarantee, we analyze the worst-case running time that can happen in each iteration. Without loss of generality, we assume that there exists at least one non-anchor projection attribute in all leaf nodes (otherwise, we can simply remove the node from the join tree after the full reducer pass).
	
	Let us first analyze the procedure \textsc{Enum}. The root priority queue (denoted $\textsc{PQ}_r$) can contain at most $|D|$ entries for each output result. Indeed, in the worst-case, the output tuple emitted to the user on~\autoref{user:output}  may join with every tuple in the root node relation. Therefore, in the worst-case, for every tuple outputted to the user, we invoke the \textsc{Topdown} procedure $|D|$ times.
	
	We now argue that \textsc{Topdown} takes at most $O(|D| \log |D|)$ time. There are two key observations to be made. First, \textsc{Topdown} takes $O(1)$ time if $c.\textsf{next}$ is already populated. Intuitively, this means that in a previous iteration, there was a recursive call made to \textsc{Topdown} that populated the $\textsf{next}$ and we are now simply reusing the computation. Thus, for every tuple output to the user, it suffices to count how many times \textsc{Topdown} is invoked in total, such that $\textsf{next}$ is not empty. We say that such invocations of \textsc{Topdown} are \emph{non-trivial}. Our claim is that the number of non-trivial \textsc{Topdown} calls is at most $O(|D|)$. Consider the set of all anchor attribute values for some node $j$ (denoted $\mL_{R_j}$). We fix some $u \in \mL_{R_j}$. In the worst-case, \textsc{Topdown}$(c, j)$ such that $u = \pi_{\key{R_j}}(c.t)$ is invoked by all tuples $t'$ in the parent relation of node $j$ such that $\pi_{\key{R_j}}(t') = u$ but only the first call will be non-trivial since the first call populates the $c.\textsf{next}$. Therefore, it follows that for each $u \in \mL_{R_j}$, there is at most one non-trivial call to \textsc{Topdown}$(c, R_j)$ for a fixed node $R_j$. 	Since the size of the join tree is at most $O(1)$, the total number of non-trivial calls over all nodes in the join tree is $\sum_j \sum_{u \in \mL_{R_j}} 1 = \sum_j |D| = O(|D|)$. Finally, we perform one pop operation and a constant number of inserts into the priority queue in every non-trivial invocation which adds another $\log|D|$ factor in the running time, giving us the claimed delay guarantee of $O(|D| \log |D|)$.
\end{proof}

\lempreprocess*
\begin{proof}
	For a node $R_i \in \mT$, we initialize an empty priority queue for each possible  value in $\key{R_i}$. The for loop on~\autoref{for:1} takes $O(1)$ time as each node has $O(1)$ children in the decomposition and all operations on line 8-12 also take $O(\log |D|)$ time. Overall, since each node has at most $O(|D|)$ tuples, the claim readily follows.
\end{proof}

\lemcorrect*
\begin{proof}
	We will prove our claim by induction on height of the tree. We will show that at each relation $R$, the priority queue $\pq_R[u]$ correctly computes the answers over all projection attributes in the subtree rooted at $R$ (denoted $\bA^\pi_R$) in ranked order.
	
	\smallskip
	\introparagraph{Base Case} Correctness for ranked output of leaf relations is trivial. For all tuples $t$ in the relation at a leaf node with the same anchor attribute value $u$, $\pq_R[u]$ contains all tuples $t$ and the priority queue (whose implementation is assumed to be correct) will pop out $t[\bA^\pi_R]$ in the correct order since the score $\rank(t[\bA^\pi_R])$ is used as the priority queue comparator function and all projection attributes $\bA^\pi_R$ are present in $t$ already. It also follows that $R[\bA^\pi_R]$ will be materialized by chaining the cells popped from the priority queue since the leaf relation already contains all the attributes in $\bA^\pi_R$.
	
	\smallskip
	\introparagraph{Inductive Case}	Consider a relation $R$ with children relations $R_1, \dots R_s$. Let $c$ be the cell that was the input to the {\sc TopDown} procedure call under consideration. Let $u = c.t[\key{R}]$ and $u_i = c.t[\key{R_i}]$. Let $t_\pi = \mathsf{output}(c_{\mathsf{next}})$ be the next tuple and let $c_{\mathsf{next}}$ be the cell that is to be returned to the parent of $R$ by the recursive call to $R$ (i.e. $c_{\mathsf{next}}$ follows $c$ in the ranked output of the subtree rooted at $R$). Our goal is to show that $t_\pi$ is generated correctly and $c_{\mathsf{next}}$ is inserted in $\pq_R[u]$. This will guarantee the correct ordering of the tuples over attribute set $\bA^\pi_R$. From the induction hypothesis, we have that the ranked output, over the attributes $\bA^\pi_{i}$, from each $R_i$ is generated correctly. For the sake of contradiction, suppose that $t'_\pi$ is the next smallest candidate, i.e. $\rank(t'_\pi) < \rank(t_\pi)$.
	
	\smallskip
	Lines~\ref{for:loop}-\ref{for:loopend} in~\autoref{algo:enumerate} generates $s$ possible candidate cells (and is the only way to generate new candidates) that could generate $t_\pi$. Let $\hat{c}_i$ be the cell returned from the recursive call to {\sc TopDown} for relation $R_i$. Then, there are $s$ possible cells that are generated by {\sc TopDown} for relation $R$ whose array of pointers will be one of:
	
	\begin{align*}
		&\hat{c}_1 , c_2 , c_3 , \dots, c_s, \\
		&c_1 , \hat{c}_2 , c_3 , \dots, c_s, \\
		&\dots, \\
		&c_1 , c_2 , c_3 , \dots, \hat{c}_s 
	\end{align*}
	
	Here, $c_i$ is $i^{\text{th}}$ pointer in the pointer array of $c$ and $\rank(\mathsf{output}(\hat{c}_i)) \geq \rank(\mathsf{output}(c_i))$ since the priority queue of the $R_i$ will generate answers in increasing order (from the induction hypothesis). Assume for the sake of contradiction that $c'_{\mathsf{next}} = \langle c.t, [c'_1, c'_2, \dots, c'_s], \bot \rangle$ is the cell such that $t'_\pi = \out(c'_{\mathsf{next}})$ and is not one of the $s$ candidate cells generated above. Since $t'_\pi$ has a smaller score than $t_\pi$, it follows that $c'_{\mathsf{next}}$ has not been inserted in the priority queue yet (otherwise the priority queue will return $c'_{\mathsf{next}}$ correctly). We will show that such a scenario will violate the monotonicity property of the sum ranking function.

	\smallskip
	\textbf{Case 1. $\rank(\mathsf{output}(c_i)) \geq \rank(\out(c'_i))$ for $i \in [s]$.} This scenario implies that $t'_\pi$ has been generated before $t_\pi$, which would violate our assumption that  $t'_\pi$ has not been inserted in the priority queue. Indeed, since $\rank(\out(c_i)) \geq \rank(\out(c'_i))$ and lines~\ref{for:loop}-\ref{for:loopend} in~\autoref{algo:enumerate} generate cells that are monotonically increasing, there exists a sequence of generations that generate $c_{\mathsf{next}}$ from $c'_{\mathsf{next}}$ which would mean that $c'_{\mathsf{next}}$ was present in the priority queue at some point.
	
	\smallskip
	\textbf{Case 2. $\rank(\mathsf{output}(c_i)) < \rank(\out(c'_i))$ for $i \in [s]$.}	This scenario implies that $\sum_i \rank(\out(c_i))  =  \rank(t_\pi) < \sum_i \rank(\out(c'_i))  = \rank(t'_\pi)$ which contradicts our assumption that $t'_\pi$ has a smaller score than $t_\pi$.
	
	\smallskip
	\textbf{Case 3. $\rank(\mathsf{output}(c_i))$ and $\rank(\mathsf{output}(c'_i))$ are incomparable.}	If $c'_{\mathsf{next}}$ has not been inserted in the priority queue yet, there are two possible scenarios. Either $c'_{\mathsf{next}}$ will be generated by some cell $c''_{\mathsf{next}}$ that is already in the priority queue. Clearly, $\rank(\out(c'_{\mathsf{next}})) > \rank(\out(c''_{\mathsf{next}})) > \rank(\out(c_{\mathsf{next}}))$ using the same reasoning as \textbf{Case 1} which contradicts our assumption. The second possibility is that $c'_{\mathsf{next}}$ and $c_{\mathsf{next}}$ are generated in the same loop. But this would again imply that $c'_{\mathsf{next}}$ is in the priority queue, a contradiction to our assumption.
	
	Therefore, it cannot be the case that $\rank(t'_\pi) < \rank(t_\pi)$ which proves that for relation $R$, all tuples generated over the attributes $\bA^\pi_R$ are in the correct order and consequently, the claim holds for all relations in the join tree, including the root.
	
	\smallskip
	In the second part of the proof, we will show that every result enumerated must belong to $Q(D)$. This is a direct consequence of the properties of the join tree. Indeed, if some tuple $t \not \in Q(D)$ is enumerated, that would violate the join condition on some attribute $X \in \mathbf{A}$. Next, we will show that every query result in $Q(D)$ will be enumerated using induction on the join tree. In the base case, for some leaf node $R$ in $\mT$, every query result over attributes $\bA^\pi_R$ will be enumerated from $\topdown$ because the relation is materialized. By the induction hypothesis, for each  $i \in [s]$,
	$$
	Q_i = (\Join_{j \in \text{ nodes in subtree rooted at }i} R_j)[\bA^\pi_i]
	$$ is enumerated completely generated. For each tuple $t$ in $R$, lines~\ref{for:loop}-\ref{for:loopend} guarantees that,
	
	$$
		((\times_{i\in [s]} Q_i ) \Join t) [\bA^\pi_R]
	$$
	must be inserted into $\pq[\pi_{\key{R}}(t)]$, and thus enumerated. In other words, the projection of cartesian product between each subquery rooted at $R_i$ (i.e. $Q_i$ as defined above) and $t$ over $\bA^\pi_R$ is generated.
	This implies that the subquery induced by the subtree rooted at $R$ enumerates all join result over $\bA^\pi_{R}$. In the preprocessing phase, all tuples in $R$ are put into the entries of $\pq[\pi_{\key{R}}(t)]$, thus every query result in $Q(D)$ will necessarily be enumerated.
\end{proof}

\lemlexi*
\begin{proof}
	We will prove that the algorithm enumerates the query result in the correct lexicographic order using induction on the parameter $i$ in~\autoref{algo:lexicographic}.
	
	\smallskip
	\textbf{Base case. $i=1$.} Note that since $\domain(A_1)$ is sorted $t'$, will be initialized with $a \in \domain(A_1)$ in sorted order which guarantees that all output tuples with $a$ as the attribute value for $A_1$ will be enumerated before tuples with $a'$ as the attribute value for $A_1$ if $a \leq a'$, thus respecting the lexicographic order for $A_1$.
	
	\smallskip
	\textbf{Inductive case. $i=k$.} From the induction hypothesis, we get that the algorithm will enumerate correctly for the first $k-1$ attributes including $A_{k-1}$. From the previous iteration, $\mL'$ is set to be the sorted list of attribute values for $A_k$ that join with $R_{k}(A_k, A_{k+1})$. Therefore, it follows that for the recursive call where $i=k$, $t'$ will have a attribute values set in sorted order for $A_k$, ensuring that all answers containing $a = \pi_{A_{k}}(t')$ will be enumerated before answers containing $a'$ as attribute value for $A_k$ for all $a \leq a'$. Thus, the argument holds for $i=m$ as well, concluding the proof.

	Next, we analyze the time complexity. The preprocessing phase takes $O(|D| \log |D|)$ time in total for sorting and building hashing tables for each relation.  Observe that after fixing a value in $\domain(A_i)$, it takes at most $O(|D|)$ time to perform the semi-joins and find the set of candidate values in $\domain(A_{i+1})$. Whenever $i$ reaches $m$, a valid query result will be enumerated. As the query size as well as $m$ is a constant, the delay between two consecutive query results is at most $O(|D|)$,  which improves Lemma~\ref{lem:delay} by a log factor.  
\end{proof}

\lemstar*
\begin{proof}
	 \introparagraph{Time and Space complexity} First, we analyze its time complexity. Computing data statistics for each relation $R_i$ takes $O(|D|)$ time. As the degree threshold is $\delta$, there are at most $O(\frac{|D|}{\delta})$ heavy values for each $A_i$. The size of $\mO^H$ can be bounded by $O(|D| \cdot (\frac{|D|}{\delta})^{m-1})$, as well as the time cost for computing $\mO^H$ by the Yannakakis algorithm. At last, initializing data structures for each $Q_i$ takes linear time in terms of $O(|D|)$. As the number of such queries is $O(m)$, the time cost for line 3-6 is $O(|D| \cdot m)$. Overall, the time complexity of Algorithm~\ref{algo:preprocess:star} is $O(|D| \cdot (\frac{|D|}{\delta})^{m-1})$. For the space usage, storing $\mO^H$ takes $O((\frac{|D|}{\delta})^{m})$ space since its size can be bounded by $O((\frac{|D|}{\delta})^{m})$ using the AGM bound.

	\smallskip
	\introparagraph{Delay guarantee} Note that min-extraction and update of the priority queue takes $O(\log |D|)$ time. The expensive part is the invocation of \textsc{EnumAcyclic} for each $Q_i$.  In the join tree $\mT_i$, each node contains at least one projection attribute. Plugging in Lemma~\ref{lem:delay} and the observations in~\autoref{sec:finegrained}, the delay of enumerating query result from $Q_i$ is at most the degree of values of attribute $A_i$, i.e., $\delta$, since tuples in $R_i$ are light in this case, by construction of $Q_i$. Together, the delay of~\autoref{algo:enumeration:star} is $O(\delta \log|D|)$. By setting $\delta = |D|^{1-\epsilon}$ for arbitrary constant $0<\epsilon <1$, we can achieve the result in Theorem~\ref{thm:ranked:star}.
	
	\smallskip
	\introparagraph{Correctness} Recall that $O^H$ is already materialized and sorted order and since the enumeration of each subquery $Q_i$ is performed using~\textsc{EnumAcyclic} (which enumerated the query result correctly), the output of each $Q_i$ is enumerated in the correct order. It remains to be shown that~\textsc{EnumStar} can also combine the answers from $O^H$ and $Q_i$ correctly. Suppose that $t$ was the last answer output to the user. If $t$ was generated by some $Q_i$ (which can be checked in constant time), then we find the next smallest candidate from $Q_i$ and insert it into \textsf{PQ}. Otherwise, if $t \in O^H$, we find the next tuple in the materialized output and insert it into \textsf{PQ}. Thus, since the smallest candidates for each $Q_i$ and $O^H$ are present in the \textsf{PQ}, the smallest answer that must be output to the user necessarily has to be from the priority queue. Thus, the priority queue performs the $m+1$-way merge correctly. This concludes the proof.
\end{proof}

\lemoptimal*
\begin{proof}
	For a star query $Q^\star_m$ and database $D$, assume there is an algorithm that  supports $O(|D|^{1-\epsilon} \log |D|)$-delay enumeration after $O(|D|^{1 + (m-1)\epsilon - \epsilon'})$ preprocessing time. Then, there is an algorithm evaluating $Q^\star_m(D)$ in  (big-Oh of)
	\[ |D|^{1 + (m-1)\epsilon - \epsilon'} + |D|^{1-\epsilon} \log |D| \cdot | Q^\star_m(D)| \]
	time. Let $\alpha = m \epsilon - \epsilon'$ for some constant $\epsilon' < \epsilon$. Consider an instance $D'$ for $Q^\star_m$ with output size $N^{\alpha}/\log N$. The running time of the algorithm is (big-Oh of)
	\begin{align*}
	&	|D'|^{1 + (m-1)\epsilon - \epsilon'} + |Q^\star_m(D')| \cdot (|D'|^{1-\epsilon} \log |D'|) \\
	= &  |D'|^{1-\frac{\epsilon'}{m}} \cdot |Q^\star_m(D')|^{(1-\frac{1}{m})}
	\end{align*}
	which breaks the optimality of \autoref{lem:twopath:evaluation}.
\end{proof}

\begin{lemma}
	The delay guarantee of~\autoref{lem:delay} is optimal up to polylogarithmic factors subject to the combinatorial BMM conjecture.
\end{lemma}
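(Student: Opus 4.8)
The plan is to prove this via a fine-grained reduction from Boolean matrix multiplication (BMM) to ranked enumeration of a single acyclic join-project query, so that a delay beating the claimed bound would yield a faster combinatorial BMM algorithm. The natural hard query is the two-path $Q = \pi_{A,C}(R(A,B) \Join S(B,C))$ (equivalently the star $Q^\star_2$), which is acyclic and has a genuine projection on the join attribute $B$. Given two $n\times n$ Boolean matrices $M,M'$, I would set $R = \{(i,j) : M_{ij}=1\}$ and $S = \{(j,k) : M'_{jk}=1\}$, so that $|D| = \Theta(n^2)$ and $Q(D) = \{(i,k) : (MM')_{ik}=1\}$ is exactly the support of the product. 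Since the \emph{set} of output tuples is independent of the weights, I can attach arbitrary weights and take $\rank$ to be \SUM; enumerating $Q(D)$ in any order recovers the full Boolean product.

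The crux is the accounting: an enumeration scheme with preprocessing time $T_p$ and worst-case delay $\delta$ produces every tuple of $Q(D)$, plus the termination signal, in time $O(T_p + (|Q(D)|+1)\cdot \delta)$, and because the reduction itself is purely combinatorial this is a combinatorial algorithm for BMM. On the worst-case dense instances witnessing the combinatorial BMM conjecture we have $|D| = \Theta(n^2)$ and $|Q(D)| = \Theta(n^2) = \Theta(|D|)$, while any combinatorial algorithm requires $n^{3-o(1)} = |D|^{3/2 - o(1)}$ time. Plugging in the linear preprocessing of \autoref{thm:general}, $T_p = O(|D|)$, which is asymptotically below $|D|^{3/2-o(1)}$, forces $|Q(D)|\cdot\delta = \Omega(|D|^{3/2-o(1)})$ and hence $\delta = \Omega(|D|^{1/2-o(1)})$. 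This already establishes the qualitative half of the statement: with near-linear preprocessing no algorithm can achieve polylogarithmic delay, so the delay of \autoref{lem:delay} is necessarily polynomial in $|D|$ rather than removable down to polylog.

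The main obstacle is closing the gap between the $|D|^{1/2}$ bound that dense BMM yields and the $|D|$ appearing in \autoref{lem:delay}: on a dense instance the cubic cost amortizes over $\Theta(|D|)$ output tuples, so combinatorial BMM by itself cannot certify a per-answer cost above $|D|^{1/2}$. To reach the full exponent I would make the output deliberately sparse, charging the same total work to far fewer answers, and invoke hardness through the conjectured optimality of the output-sensitive algorithm for $Q^\star_m$ (\autoref{lem:twopath:evaluation}) rather than dense BMM. This is precisely the mechanism already exploited in \autoref{lemma:optimal}, where an instance with output size $|D|^{\alpha}/\log|D|$ converts an $O(|D|^{1-\epsilon})$-delay scheme into an algorithm that beats \autoref{lem:twopath:evaluation}; the reduction structure (run the enumerator to completion, read off the product, compare total time against the conjectured lower bound) is identical. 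Thus the delicate technical point is calibrating the output density of the hard instance so that the delay lower bound sharpens from $\Omega(|D|^{1/2-o(1)})$ to $\Omega(|D|^{1-o(1)})$ while keeping every step of the reduction combinatorial, after which the $O(|D|\log|D|)$ guarantee is optimal up to the logarithmic factor.
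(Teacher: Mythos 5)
The paper states this lemma without any accompanying proof, so there is nothing to compare against; judged on its own terms, your attempt has a genuine gap, and it sits exactly where you place the ``delicate technical point.'' Your first half is correct: encoding Boolean matrices as $R(A,B)$ and $S(B,C)$, an enumerator run to completion computes the product in time $O(T_p + (|Q(D)|+1)\delta)$, and on dense instances ($|D| = \Theta(n^2)$, $|Q(D)| \le n^2$) the combinatorial BMM conjecture forces $\delta = \Omega(|D|^{1/2-o(1)})$ under near-linear preprocessing. But the lemma asserts $\delta = \Omega(|D|^{1-o(1)})$, i.e.\ optimality of the $O(|D|\log|D|)$ bound of \autoref{lem:delay}, and your route to the stronger exponent silently changes the hypothesis: invoking the ``conjectured optimality'' of \autoref{lem:twopath:evaluation} is not the combinatorial BMM conjecture --- it is the separate conjecture the paper already uses, and under it the claim you want is literally \autoref{lemma:optimal} with $m=2$, not a new result. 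The BMM conjecture, as stated, concerns dense instances and says nothing directly about instances whose output is sparse, which are precisely the instances you need; so ``calibrating the output density'' is not a calibration but the entire missing argument.

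The missing idea that makes the BMM route work is a self-reduction transferring dense hardness to sparse-output instances: tile the $n \times n$ output matrix into $n^2/t$ rectangles of dimensions $\sqrt{t} \times \sqrt{t}$; each rectangle induces a two-path instance with input size $N = O(\sqrt{t}\,n)$ (the corresponding $\sqrt{t}$ rows of $M$ and $\sqrt{t}$ columns of $M'$) and output size at most $t$. If an enumerator had preprocessing $O(N\log N)$ and delay $O(N^{1-\epsilon})$, running it to completion on every rectangle would multiply the matrices combinatorially in time
$O\bigl( (n^2/t)\,\sqrt{t}\,n\log n + (n^2/t)\, t\,(\sqrt{t}\,n)^{1-\epsilon}\log n \bigr)$,
which for $t = n^{\epsilon}$ is $O(n^{3-\epsilon/2}\log n)$, contradicting the combinatorial BMM conjecture. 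This closes the gap you left open while keeping BMM as the hypothesis, with two caveats worth recording: the argument only yields optimality up to $|D|^{o(1)}$ factors rather than genuinely polylogarithmic ones (unavoidable, since combinatorial BMM itself admits polylog-factor speedups of four-Russians type, so no consequence of it can be tight at polylog granularity), and it shows in passing that BMM actually implies a weak form of the optimality of \autoref{lem:twopath:evaluation}, so the two conjectures the paper keeps separate are not independent.
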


\section{More fine grained results} \label{sec:finegrained}

In this section, we show tighter delay guarantees for~\autoref{thm:general} under certain settings. Consider a query where each relation contains at least one projection attribute (for instance, $Q^\star_{m}$). Let's fix a relation $R$ and let the projection attributes in $R$ be $\mathbf{B}$. Denote $\Delta_R = \max_{u \in \domain(\mathbf{B})} |\sigma_{\mathbf{B} = u} R|$ and $\Delta = \max_{\text{non-leaf relations }R} \Delta_R$. Then, the delay guarantee obtained by~\autoref{thm:general} is at most $O(\Delta \log|D|)$. A trivial upper bound for $\Delta = |D|$ since the size of all relations is $|D|$. However, if there are constraints known about the degree of the tuples in the relations, we can improve our results. We give two concrete scenarios below.

\smallskip
\introparagraph{Bounded Degree Database} For each relation, if every attribute value is present in at most $\delta$ tuples, it implies that the degree is at most $\delta$. In this case, it can be shown that $\Delta = \delta$ and we get improved delay guarantees since the duplication is bounded when each relation has projection attributes.

\smallskip
\introparagraph{Bounded Degree By Preprocessing} Consider the star query $Q^\star_{m}$. Recall that in~\autoref{algo:preprocess:star}, we preprocess to reduce the degree of all $a_i$ from $|D|$ to $\delta = |D|^{1-\epsilon}$. Thus, the discussion above applies and we can achieve $O(|D|^{1-\epsilon} \log|D|)$ delay.

\medskip
\noindent Improved delay guarantee is obtained in~\autoref{algo:enumerate} only when the while loop can terminate in $o(|D|)$ time. For any anchor attribute value $u$ for some relation $R$ in the join-tree, this can only happen when the tuples generated over $\bA_R$ are not duplicated more than a certain number of times. Since every relation contains a projection attribute and has bounded degree $\Delta$, it means that $\pq_R[u]$ can have at most $\Delta$ duplicates which gives the better bound.

\section{Recovering results for full queries} \label{sec:full}

Next, we outline how \autoref{algo:enumerate} achieves delay $O(\log |D|)$ after $O(|D|)$ preprocessing for full acyclic queries. The key observation is that when the query is full, the while loop always terminates after a constant number of operations. This is because no two answer tuples over $\bA^\pi_R$ are the same, i.e, the attribute values for any two answer tuples cannot be identical. Further, any answer tuple at $R$ is inserted at most a constant number of times. This guarantees that \textsf{is\_equal} procedure will return false after popping an answer constant number of times and the while loop terminates. Thus, in the worst-case, \textsc{TopDown} visits each relation of the join tree at most once and each iteration takes only $O(\log |D|)$ time due to the priority queue operations.

\smallskip
\introparagraph{Free-connex queries} A join query $Q$ is said to be free-connex if $Q$ is acyclic and the hypergraph containing edges of $Q$ and a new edge containing only the projection variables is also acyclic. Free-connex is an interesting class of queries which allows constant delay enumeration for a variety of problems~\cite{bagan2007acyclic, carmeli2019enumeration}. If a join-project query is free-connex, then our actually algorithm achieves $O(\log|D|)$ delay enumeration (instead of the general $O(|D|\log|D|)$ bound) after linear time preprocessing. It can be shown that for free-connex queries, all projection attributes are connected and must be at the top of the tree and all non-projection attributes are connected and at the bottom of the tree. This observation allows us to remove all relations that do not contain any projection attributes or relation where only anchor attributes are projection attributes, transforming the free-connex query into a full join query.

\section{Submodular width lower bound} \label{sec:lowerbound}

Abboud, Backurs and, Vassilevska Williams present the Combinatorial $k-$Clique Hypothesis~\cite{abboud2018if}.

\begin{definition}
	Let $C$ be the smallest constant such that a combinatorial algorithm running in $O(n^{Ck/3})$ time exists for detecting a $k-$clique in a
	$n$ node $O(n^2)$ edge graph, for all sufficiently large constants $k$. The Combinatorial $k-$Clique Hypothesis states that $C = 3$.
\end{definition}

Next, we recall a cycle detection hypothesis from~\cite{lincoln2018tight} that depends on the combinatorial $k-$clique hypothesis.

\begin{theorem}[Combinatorial Sparse k-Cycle Lower Bound]
	Detecting a directed odd $k-$cycle in a graph with $n$ nodes and $m = n^{(k+1)/(k-1)}$ in time $O(m^{(1-\epsilon)2k/(k+1)})$ for any constant $\epsilon > 0$ violates the Combinatorial $k-$Clique Hypothesis.
\end{theorem}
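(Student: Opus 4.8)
The plan is to give a \emph{combinatorial} reduction from detecting a $k$-clique to detecting a directed $k$-cycle in a sparse graph, so that a cycle algorithm faster than the claimed bound yields a clique algorithm beating the Combinatorial $k$-Clique Hypothesis. Throughout I fix $k$ odd and set $s = (k-1)/2$. First I would reduce, by the standard colored/$k$-partite trick, the problem of finding a $k$-clique in an $N$-vertex graph $H$ to finding a \emph{colorful} clique in a $k$-partite graph with parts $V_0, \dots, V_{k-1}$, each a copy of $V(H)$ and with the edges of $H$ copied between every pair of parts; this costs only a constant ($k!$) factor and forces the $k$ chosen vertices to be distinct and assigned to distinct cyclic positions $0,1,\dots,k-1$.

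The core construction builds a directed, $k$-layered graph $G$ whose $k$-cycles encode colorful cliques. A node of $G$ is a pair $(i, (u_0, \dots, u_{s-1}))$ --- a \emph{window} of $s$ consecutive clique positions starting at position $i$, where $u_t \in V_{i+t \bmod k}$. I put a directed edge from $(i, (u_0, \dots, u_{s-1}))$ to the shifted window $(i+1, (u_1, \dots, u_{s-1}, u_s))$ with $u_s \in V_{i+s \bmod k}$ exactly when the newly entering vertex $u_s$ is adjacent in $H$ to every vertex of the old window, i.e.\ to $u_0, \dots, u_{s-1}$. Each transition thus verifies exactly $s = (k-1)/2$ clique edges. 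Counting the size: there are $n = \Theta(N^s) = \Theta(N^{(k-1)/2})$ nodes and, since each node has at most $N$ outgoing edges (one per choice of $u_s$), $m = \Theta(N^{s+1}) = \Theta(N^{(k+1)/2})$ edges, which is exactly $m = n^{(k+1)/(k-1)}$ as required by the statement.

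For correctness I would argue both directions. A directed $k$-cycle in the layered $G$ must pass through each layer once and return to its start; reading off the windows it visits pins down vertices $v_0, \dots, v_{k-1}$ (one per position), and closing the cycle forces these to be globally consistent across the overlapping windows. The key counting claim is that every one of the $\binom{k}{2}$ position-pairs is certified exactly once: the vertex entering at a transition is compared against the $s$ vertices at cyclic distances $1, \dots, s$ behind it, and because $k$ is odd every pair of positions is at cyclic distance at most $\lfloor k/2 \rfloor = s$; hence all $k \cdot s = \binom{k}{2}$ edges are verified, and none twice. Conversely a colorful clique yields a $k$-cycle by reading off its windows. Thus $G$ has a directed $k$-cycle iff $H$ has a $k$-clique, and $G$ is built in $O(m) = O(N^{(k+1)/2})$ combinatorial time, negligible next to the clique bound.

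Finally I plug in the assumed running time. If a combinatorial algorithm detects the directed odd $k$-cycle in $O(m^{(1-\epsilon)\, 2k/(k+1)})$ time, then using $m = \Theta(N^{(k+1)/2})$ we get $m^{2k/(k+1)} = N^{k}$, so the whole pipeline detects a $k$-clique in $O(N^{(1-\epsilon)k}) = O(N^{k-\epsilon k})$ combinatorial time; matching $Ck/3 = (1-\epsilon)k$ gives a constant $C \le 3(1-\epsilon) < 3$, contradicting the hypothesis that $C = 3$. The main obstacle I anticipate is not the arithmetic but the combinatorial bookkeeping of the middle paragraphs: one must choose the window size $s = (k-1)/2$ and the per-transition edge checks so that the $\binom{k}{2}$ clique edges are each certified exactly once, and verify that closing the length-$k$ cycle forces the wrap-around windows to agree. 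This is precisely where oddness of $k$ is essential --- for even $k$ the antipodal pairs at distance $k/2$ would escape a window of size $(k-1)/2$, forcing a larger window and altering the exponents.
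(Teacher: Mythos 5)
Your proof is correct, but note that the paper itself never proves this statement: it is imported verbatim from Lincoln, Vassilevska Williams and Williams~\cite{lincoln2018tight} and used as a black box to justify the $\subw$ lower bound. Your blind reconstruction is essentially the reduction used in that reference: a $k$-layered directed graph whose nodes are windows of $s=(k-1)/2$ consecutive clique positions, whose transitions certify adjacency of the entering vertex to the $s$ vertices behind it, and whose density works out to $n=\Theta(N^{(k-1)/2})$, $m=\Theta(N^{(k+1)/2})=\Theta\bigl(n^{(k+1)/(k-1)}\bigr)$, so that $m^{2k/(k+1)}=N^k$ and an $O\bigl(m^{(1-\epsilon)2k/(k+1)}\bigr)$ cycle detector yields a combinatorial $O(N^{(1-\epsilon)k})$ clique detector, i.e.\ $C\le 3(1-\epsilon)<3$. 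The two points you flag as delicate both check out: since edges only go from layer $i$ to layer $i+1 \bmod k$, every directed cycle has length divisible by $k$, so a length-$k$ cycle visits each layer exactly once and its closure forces the overlapping windows to agree on a single assignment $v_0,\dots,v_{k-1}$; and because $k$ is odd, every unordered pair of positions has a unique cyclic-distance representative $d\in\{1,\dots,s\}$ and is certified exactly once (when its later endpoint enters), giving all $ks=\binom{k}{2}$ adjacencies --- this is exactly where evenness would break the window size and the exponent $(k+1)/(k-1)$.

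Two cosmetic remarks. The $k$-partite blow-up needs no $k!$ overhead: copying $E(H)$ between all pairs of parts already makes colorful cliques and $k$-cliques of $H$ coincide, since adjacency in $H$ forces distinctness. And your construction may produce fewer than $n^{(k+1)/(k-1)}$ edges on sparse instances of $H$; since the hypothesized algorithm is only assumed at that density, one should pad with dummy nodes and edges (disjoint from the layered gadget, so no spurious $k$-cycles), which is standard and harmless. Neither point affects correctness.
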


Consider an odd $k-$ cycle CQ over binary relations of the form $Q^\circlearrowright(x_1, x_2) = R_1(x_1, x_2) \Join R_2(x_2, x_3) \Join \dots \Join R_k(x_k, x_1)$. Then, we can show the following straightforward result.

\begin{lemma}
	For some odd $k$ cycle query $Q^\circlearrowright$ with $k > 3$, an algorithm that enumerates the query result with delay $O(|D|^{2k/k+1} - \epsilon)$ after $O(|D|^{2k/(k+1)} -\epsilon)$ preprocessing violates the Combinatorial Sparse $k-$Cycle Lower Bound.
\end{lemma}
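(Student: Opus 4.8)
The plan is to reduce directed odd $k$-cycle detection to deciding whether $Q^\circlearrowright(D)$ is empty, and to observe that any enumeration algorithm decides emptiness within its preprocessing time plus a single delay. The problem formulation in~\autoref{sec:problem-parameter} requires the algorithm to notify completion within one delay, so an empty result is detected in $O(|D|^{2k/(k+1)-\epsilon})$ total time. Thus an enumeration algorithm with the stated bounds would yield a combinatorial algorithm deciding the existence of an odd $k$-cycle in time $O(|D|^{2k/(k+1)-\epsilon})$, and it remains to set up the instance so that $|D| = \Theta(m)$ and the query is non-empty exactly when $G$ contains the desired cycle.

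Given a directed graph $G = (V,E)$ with $n$ nodes and $m = n^{(k+1)/(k-1)}$ edges, I would build $D$ by populating each relation $R_i$ with (a colored/layered copy of) the edge set $E$, so that the natural join $R_1(x_1,x_2) \Join \cdots \Join R_k(x_k,x_1)$ ranges over the closed walks of length $k$ in $G$, and projecting onto $(x_1,x_2)$ leaves a non-empty relation iff such a walk exists. Since there are $k = O(1)$ relations each of size $O(m)$, we have $|D| = \Theta(m)$. The one subtlety is that a non-empty join corresponds, a priori, to a closed walk rather than a simple $k$-cycle; I would close this gap with the standard color-coding trick: assign each vertex one of $k$ colors and keep in $R_i$ only the edges whose endpoints receive colors $i$ and $i+1 \bmod k$, forcing the $x_i$ to be pairwise distinct and hence the walk to be a genuine cycle. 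For fixed $k$ this multiplies the number of instances by a constant $2^{O(k)}$ (derandomizable via a perfect hash family, preserving combinatoriality), and each instance still has $|D| = \Theta(m)$.

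Finally I would match the exponents. Running the assumed enumeration algorithm on each colored instance decides odd $k$-cycle detection in time $O(|D|^{2k/(k+1)-\epsilon}) = O(m^{2k/(k+1)-\epsilon})$. Writing $m^{2k/(k+1)-\epsilon} = m^{(1-\epsilon')\cdot 2k/(k+1)}$ and solving gives $\epsilon' = \epsilon(k+1)/(2k)$, a positive constant whenever $\epsilon > 0$. Hence the procedure detects a directed odd $k$-cycle in $O(m^{(1-\epsilon')\,2k/(k+1)})$ time, contradicting the Combinatorial Sparse $k$-Cycle Lower Bound (and, through it, the Combinatorial $k$-Clique Hypothesis). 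I expect the main obstacle to be the correctness of the reduction rather than the arithmetic: one must argue carefully that the coloring step makes $Q^\circlearrowright(D)$ non-empty exactly when $G$ has a simple odd $k$-cycle, and that the $O(1)$-factor blow-up from derandomization leaves the sparse edge count $m = n^{(k+1)/(k-1)}$ on which the lower bound is calibrated unchanged up to constants.
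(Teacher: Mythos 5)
Your proposal is correct and takes essentially the same route as the paper: the paper's own one-line proof rests on exactly your key observation that preprocessing plus a single delay suffices to decide emptiness of $Q^\circlearrowright(D)$ and hence odd $k$-cycle detection, while you additionally spell out the graph-to-database encoding and the color-coding step needed to distinguish simple $k$-cycles from closed $k$-walks, which the paper leaves implicit. One trivial quibble: a $k$-perfect hash family on $n$ vertices has size $2^{O(k)}\log n$ rather than $2^{O(k)}$, so you run $O(\log n)$ instances instead of $O(1)$, but this subpolynomial factor is absorbed by the $\epsilon$ slack and your conclusion stands.
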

\begin{proof}
	If the algorithm has delay guarantee $\delta = O(|D|^{2k/k+1} - \epsilon)$, then in $\delta$ time, we can decide whether there exists a cycle or not since the preprocessing time is also $O(|D|^{2k/k+1} - \epsilon)$, a contradiction.
\end{proof}

Note that the submodular width for the odd $k-$cycle query is $\subw = 2 - 2/(k+1) = 2k/(k+1)$. Thus, there exist queries for which the exponential dependence of \subw\ in~\autoref{thm:ucq} cannot be avoided assuming the Combinatorial $k-$Clique Hypothesis.

\section{Deferred Experiments} \label{sec:moreexp}

In this section, we present the experimental results for settings that were not covered in the main paper due to lack of space.

\subsection{Small Scale Experiments}

\begin{figure*}
	
	\begin{lstlisting}[language=SQL,
	deletekeywords={IDENTITY},
	deletekeywords={[2]INT},
	morekeywords={clustered},
	mathescape=true,
	xleftmargin=0pt,
	framexleftmargin=0pt,
	frame=tb,
	framerule=0pt ]
	
	$\mathsf{IMDB}_{2\mathsf{hop}}$ = SELECT DISTINCT $\mathsf{P}_1.\mathsf{name}, \mathsf{P}_2.\mathsf{name}$ FROM $\mathsf{Person}$ AS $\mathsf{P_1}$, $\mathsf{Person}$ AS $\mathsf{P_1}$, $\mathsf{PersonMovie}$ AS $\mathsf{PM}_1$, $\mathsf{PersonMovie}$ as $\mathsf{PM}_2$ WHERE $\mathsf{PM_1.pid} = \mathsf{PM_2.pid}$ AND $\mathsf{PM_1.aid} = \mathsf{P_1.aid}$ AND $\mathsf{PM_2.aid} = \mathsf{P_2.aid}$ AND $\mathsf{P_1.role}$ = 'ACTOR' AND $\mathsf{P_2.role}$ = 'ACTOR' ORDER BY $\mathsf{A}_1.\mathsf{weight} + \mathsf{A}_2.\mathsf{weight}$ LIMIT k;
	
	$\mathsf{IMDB}_{3\mathsf{hop}}$ = SELECT DISTINCT $\mathsf{P}.\mathsf{name}, \mathsf{M}.\mathsf{name}$ FROM $\mathsf{Person}$ AS $\mathsf{P}$, $\mathsf{Person}$ AS $\mathsf{P'}$, $\mathsf{Movie}$ AS $\mathsf{M}$, $\mathsf{PersonMovie}$ AS $\mathsf{PM}_1$, $\mathsf{PersonMovie}$ as $\mathsf{PM_2}$, $\mathsf{PersonMovie}$ as $\mathsf{PM_3}$ WHERE $\mathsf{PM_1.mid} = \mathsf{PM_2.mid}$ AND $\mathsf{PM_2.pid} = \mathsf{PM_3.pid}$ AND $\mathsf{PM_2.pid} = \mathsf{P'.pid}$ AND  $\mathsf{PM_1.pid} = \mathsf{P.pid}$ AND $\mathsf{PM_3.mid} = \mathsf{M.mid}$ AND $\mathsf{P.role}$ = 'ACTOR' AND $\mathsf{P'.role}$ = 'ACTOR' ORDER BY $\mathsf{P}.\mathsf{weight} + \mathsf{M}.\mathsf{weight}$ LIMIT k;
	
	$\mathsf{IMDB}_{4\mathsf{hop}}$ = SELECT DISTINCT $\mathsf{P}_1.\mathsf{name}, \mathsf{P}_2.\mathsf{name}$ FROM $\mathsf{Person}$ AS $\mathsf{P_1}$, $\mathsf{Person}$ AS $\mathsf{P_2}$, $\mathsf{Person}$ AS $\mathsf{P_3}$, $\mathsf{PersonMovie}$ AS $\mathsf{PM}_1$, $\mathsf{PersonMovie}$ as $\mathsf{PM_2}$, $\mathsf{PersonMovie}$ as $\mathsf{PM_3}$, $\mathsf{PersonMovie}$ as $\mathsf{PM_4}$ WHERE $\mathsf{PM_1.mid} = \mathsf{PM_2.mid}$ AND $\mathsf{PM_2.pid} = \mathsf{PM_3.pid}$ AND $\mathsf{PM_3.mid} = \mathsf{PM_4.mid}$ AND $\mathsf{PM_1.pid} = \mathsf{P_2.pid}$ AND $\mathsf{P_3.role}$ = 'ACTOR' AND $\mathsf{PM_2.pid} = \mathsf{P_3.pid}$ AND $\mathsf{P_1.role}$ = 'ACTOR' AND $\mathsf{PM_4.pid} = \mathsf{P_2.aid}$ AND $\mathsf{P_2.role}$ = 'ACTOR' ORDER BY $\mathsf{P}_1.\mathsf{weight} + \mathsf{P}_2.\mathsf{weight}$ LIMIT k;
	
	$\mathsf{IMDB}_{3\mathsf{star}}$ = SELECT DISTINCT $\mathsf{P}_1.\mathsf{name}, \mathsf{P}_2.\mathsf{name}, \mathsf{P}_3.\mathsf{name}$ FROM FROM $\mathsf{Person}$ AS $\mathsf{P_1}$, $\mathsf{Person}$ AS $\mathsf{P_2}$, $\mathsf{Person}$ AS $\mathsf{P_3}$, $\mathsf{PersonMovie}$ AS $\mathsf{AP}_1$, $\mathsf{PersonMovie}$ as $\mathsf{AP_2}$, $\mathsf{PersonMovie}$ as $\mathsf{AP_3}$  WHERE $\mathsf{PM_1.mid} = \mathsf{PM_2.mid} = \mathsf{PM_3.mid}$ AND $\mathsf{PM_1.pid} = \mathsf{P_1.pid}$ AND $\mathsf{PM_2.pid} = \mathsf{P_2.pid}$ AND $\mathsf{PM_3.pid} = \mathsf{P_3.pid}$ AND $\mathsf{P_1.role}$ = 'ACTOR' AND $\mathsf{P_2.role}$ = 'ACTOR' AND $\mathsf{P_3.role}$ = 'ACTOR' ORDER BY $\mathsf{P}_1.\mathsf{weight} + \mathsf{P}_2.\mathsf{weight} + \mathsf{P}_3.\mathsf{weight}$ LIMIT k;
	\end{lstlisting}
	\vspace*{-1.5em}
	\caption{Network analysis queries for IMDB.} \label{fig:imdbqueries}
\end{figure*}

\introparagraph{Lexicographic ordering} Figures~\ref{fig:imdb:2path:lex},~\ref{fig:imdb:2path:lex},~\ref{fig:imdb:4path:lex},~\ref{fig:imdb:3star:lex} show the running time for different values of $k$ in the limit clause for lexicographic function for \textsf{IMDB} dataset. Our conclusions remain the same as they were for \textsf{DBLP} dataset.

\introparagraph{Choice of different root of the join tree} For the \textsf{DBLP} dataset, we test the choice of different root nodes for $\mathsf{DBLP}_\mathsf{3hop}$ and $\mathsf{DBLP}_\mathsf{4hop}$ queries since these queries have a large choice of root nodes to pick from. We observed that for any choice of root node in the join tree, the difference in total execution time for $k$ ranging from $10$ to $10^4$ was less than $3\%$. This demonstrates that as long as the join tree or GHD is of the minimal width, the actual orientation of the GHD has no meaningful impact on the execution times.

\begin{figure*}[!htp]
	\begin{subfigure}{0.24\linewidth}
		\includegraphics[scale=0.17]{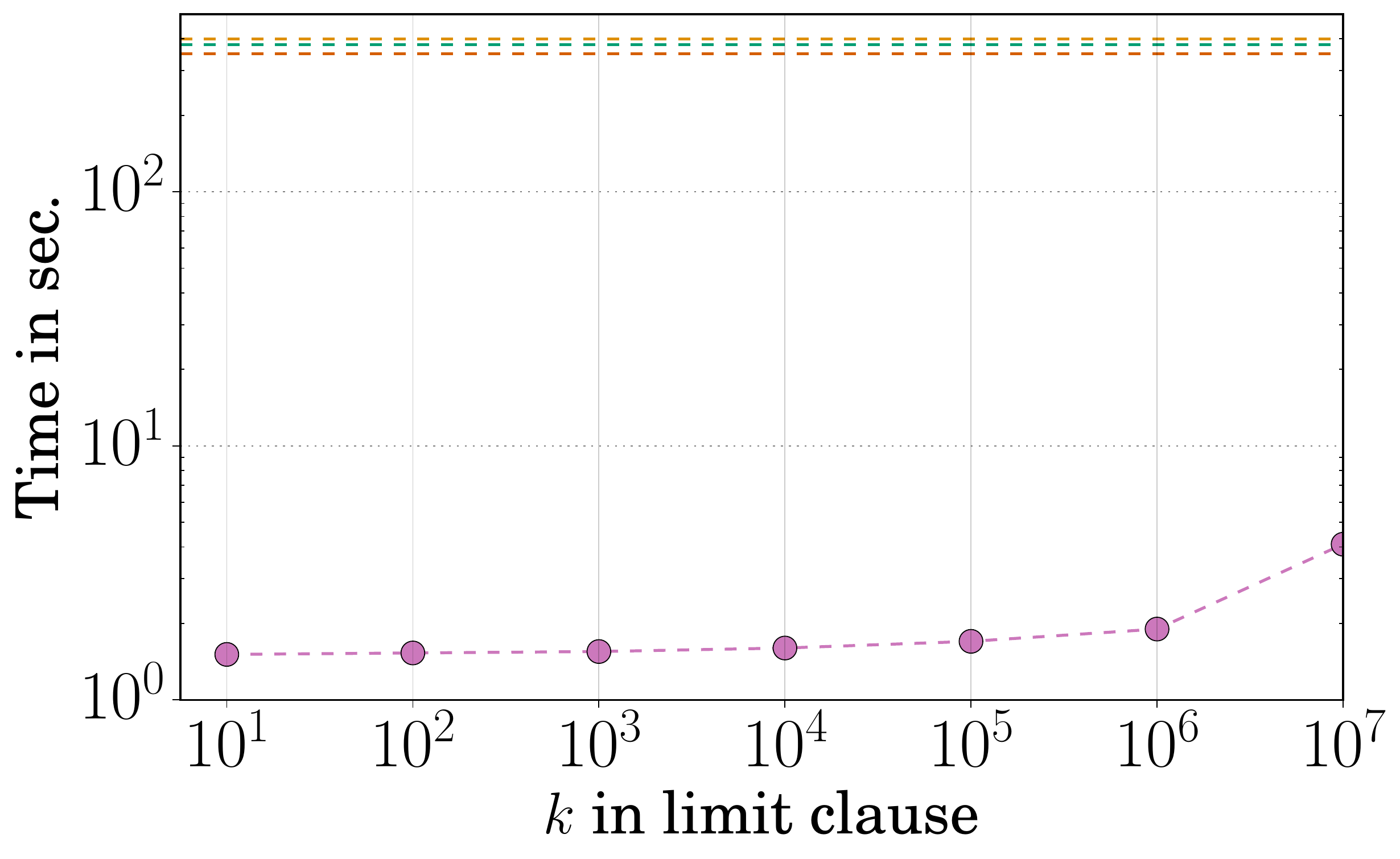}
		\caption{$\mathsf{IMDB}_{2\mathsf{hop}}$}  \label{fig:imdb:2path:lex}
	\end{subfigure}
	\begin{subfigure}{0.24\linewidth}
		\includegraphics[scale=0.17]{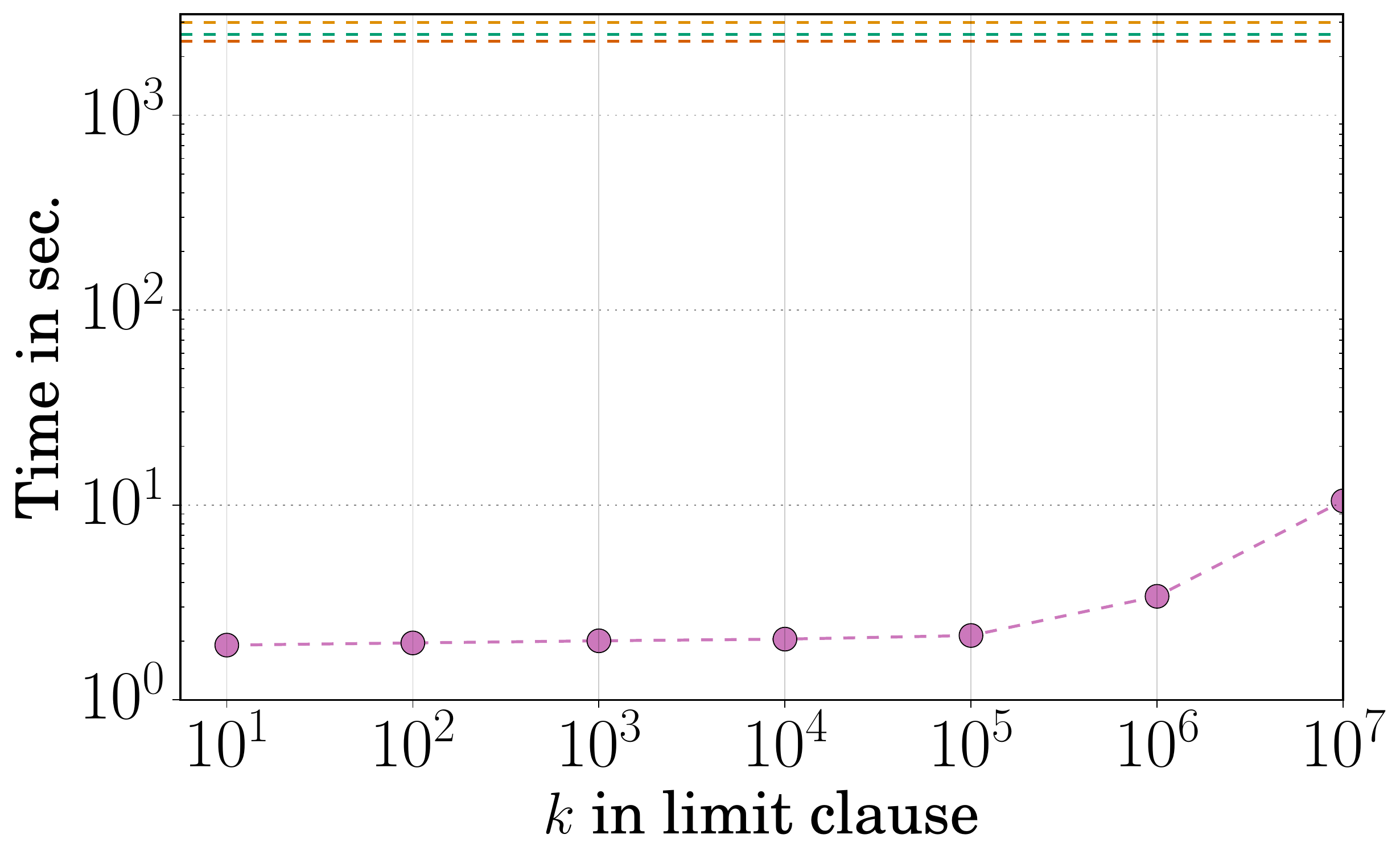}
		\caption{$\mathsf{IMDB}_{3\mathsf{hop}}$}  \label{fig:imdb:3path:lex}
	\end{subfigure}
	\begin{subfigure}{0.24\linewidth}
		\includegraphics[scale=0.17]{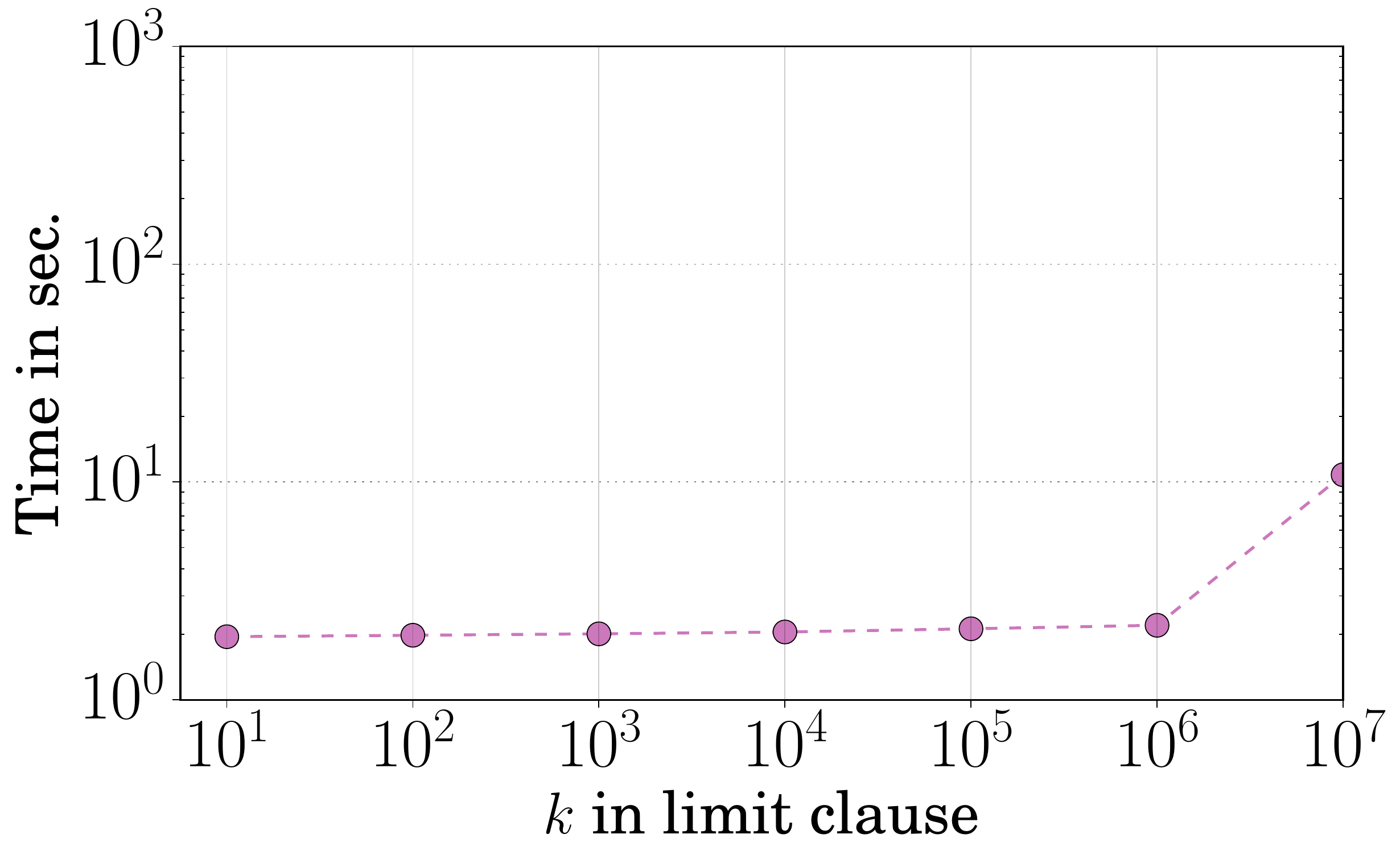}
		\caption{$\mathsf{IMDB}_{4\mathsf{hop}}$}  \label{fig:imdb:4path:lex}
	\end{subfigure}
	\begin{subfigure}{0.24\linewidth}
		\includegraphics[scale=0.17]{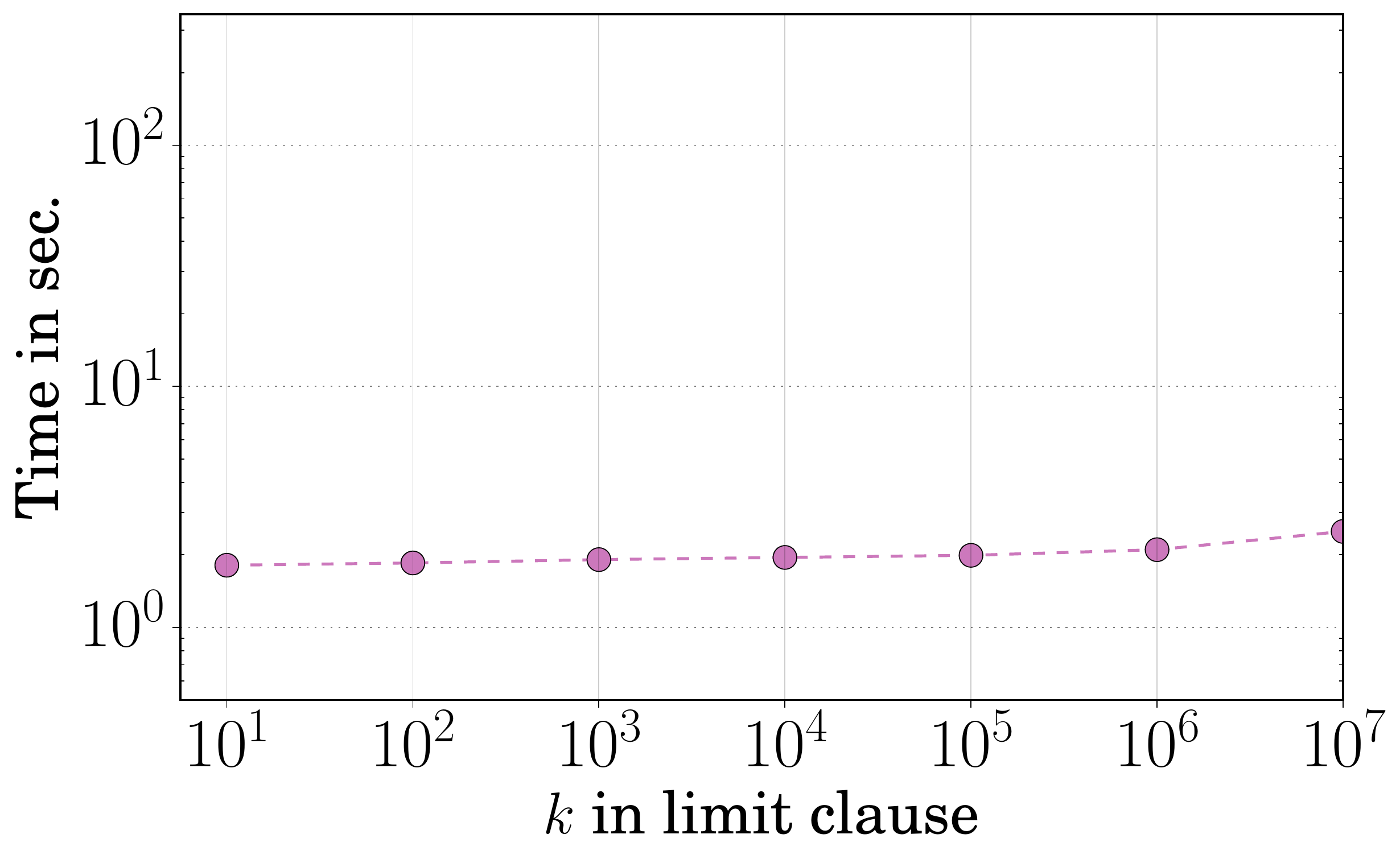}
		\caption{$\mathsf{IMDB}_{3\mathsf{star}}$}  \label{fig:imdb:3star:lex}
	\end{subfigure}

\begin{subfigure}{0.24\linewidth}
	\includegraphics[scale=0.17]{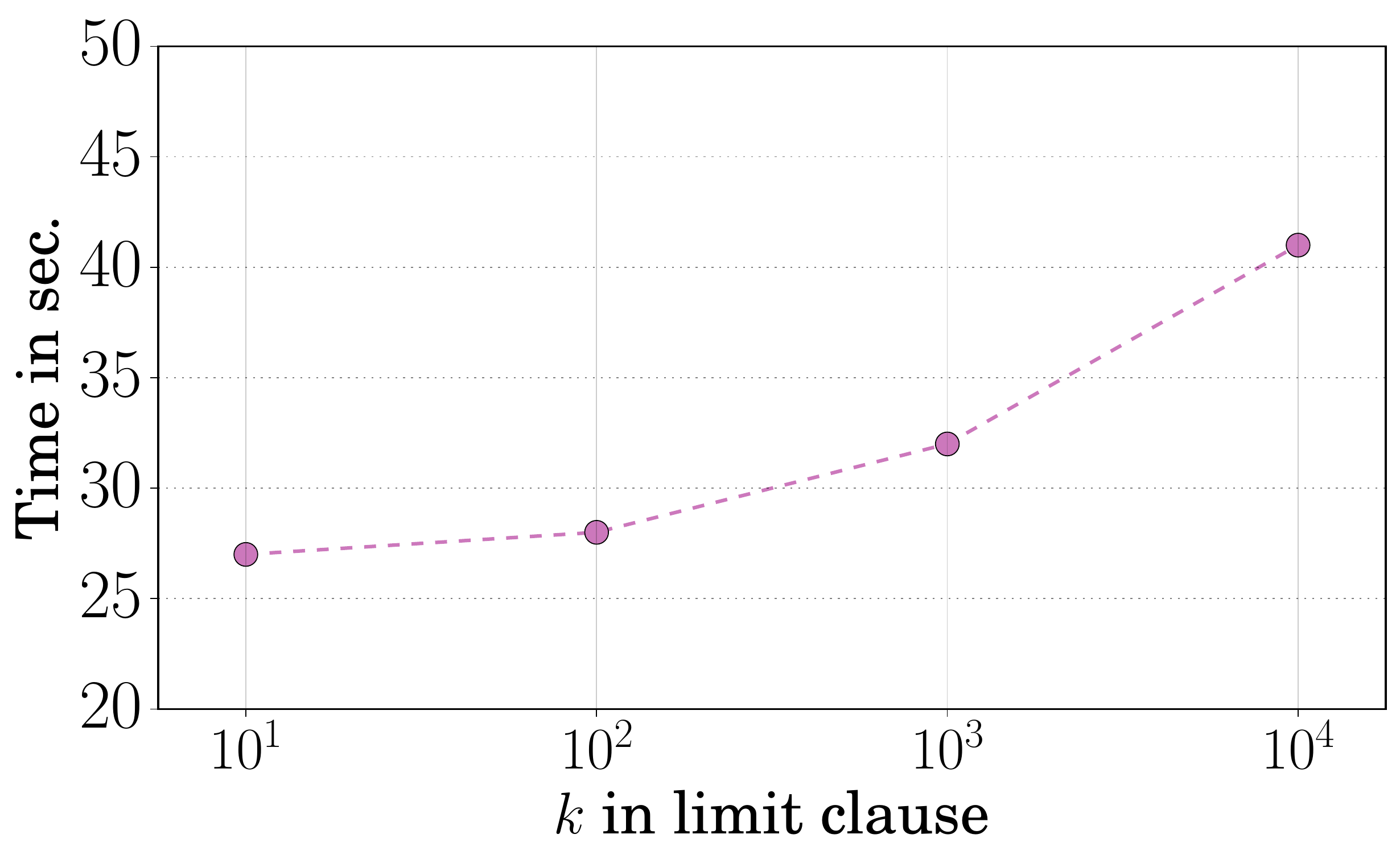}
	\caption{$\mathsf{Friendster}_{2\mathsf{hop}}$}  \label{fig:friendster:2path:lex}
\end{subfigure}
\begin{subfigure}{0.24\linewidth}
	\includegraphics[scale=0.17]{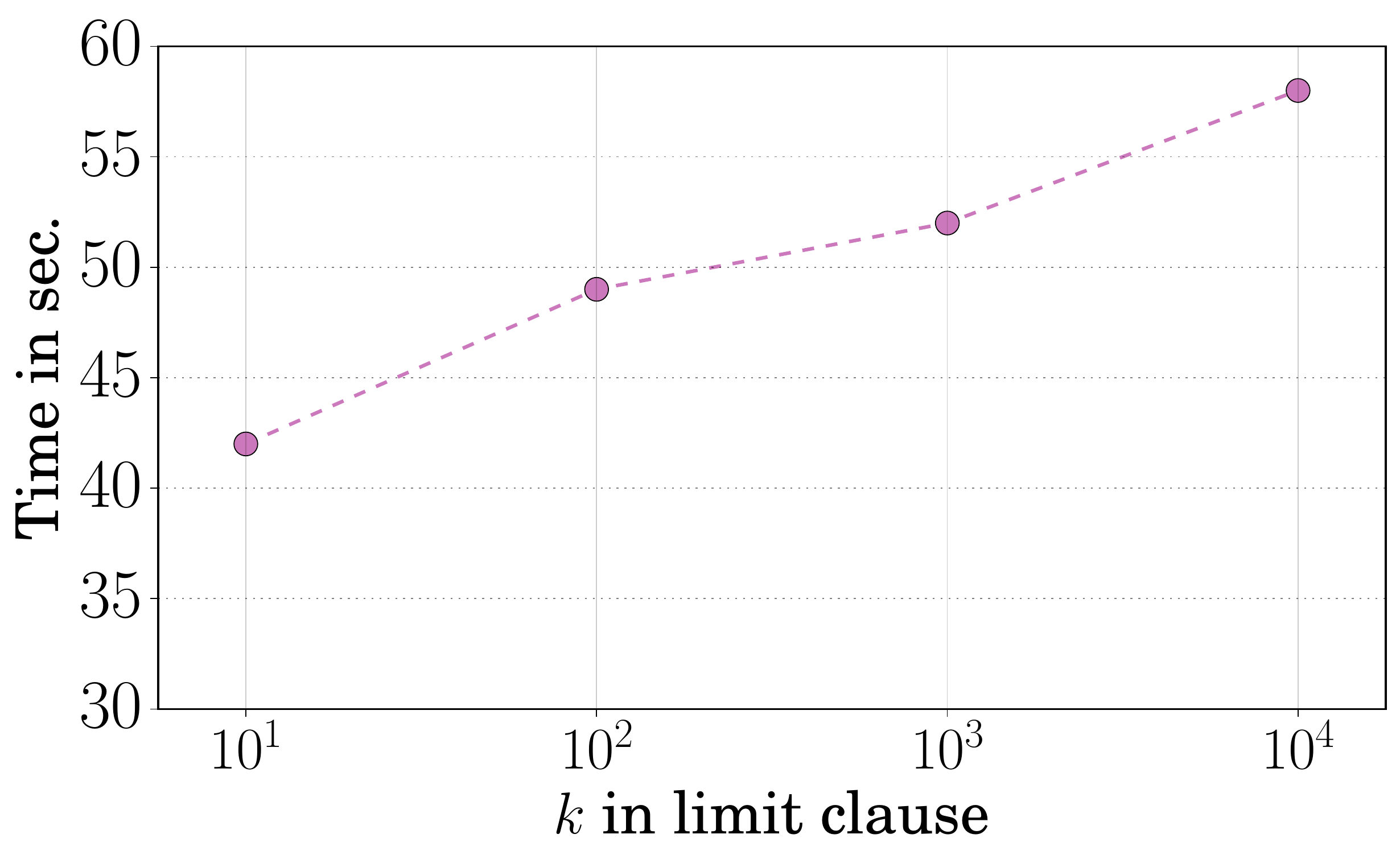}
	\caption{$\mathsf{Friendster}_{3\mathsf{hop}}$}  \label{fig:friendster:3path:lex}
\end{subfigure}
\begin{subfigure}{0.24\linewidth}
	\includegraphics[scale=0.17]{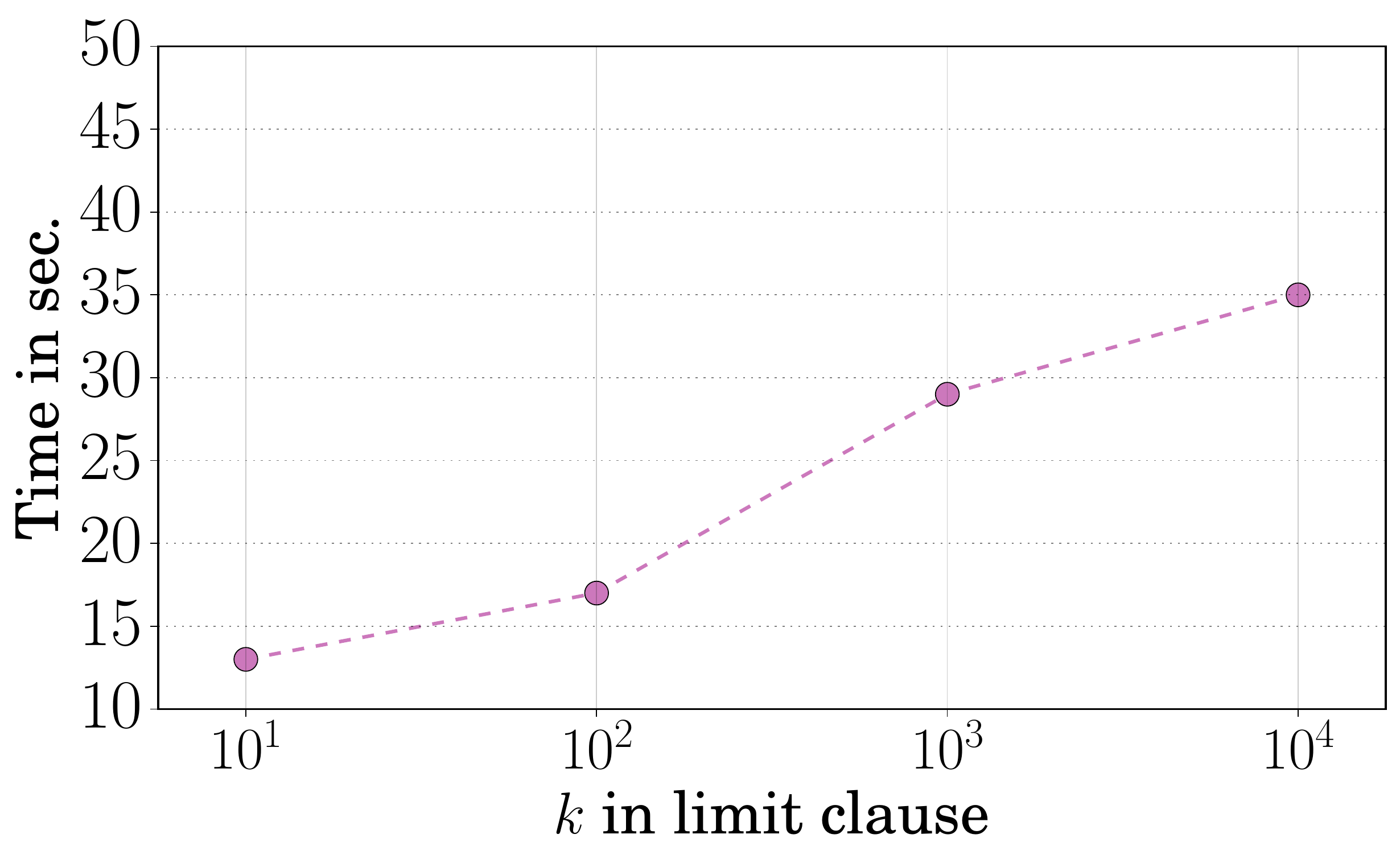}
	\caption{$\mathsf{Memetracker}_{2\mathsf{hop}}$}  \label{fig:memetracker:2path:lex}
\end{subfigure}
\begin{subfigure}{0.24\linewidth}
	\includegraphics[scale=0.17]{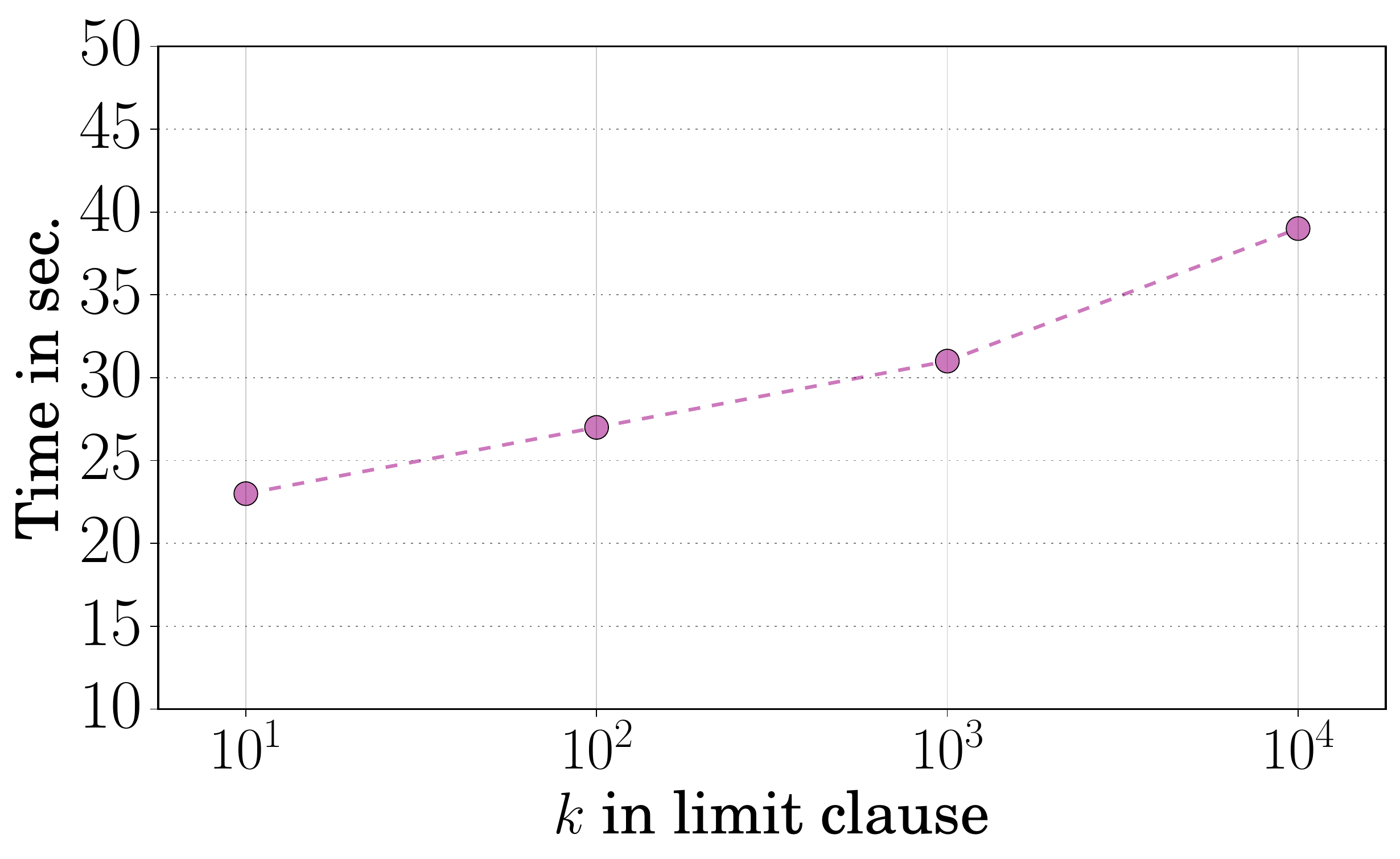}
	\caption{$\mathsf{Memetracker}_{3\mathsf{hop}}$}  \label{fig:memetracker:3path:lex}
\end{subfigure}	
	\caption{Comparing our linear delay algorithm with state-of-the-art engines for lexicographic function.} \label{fig:imdb:notradeoff:lex}
\end{figure*}

\begin{figure*}
	\begin{subfigure}{0.6\linewidth}
		\includegraphics[scale=0.5]{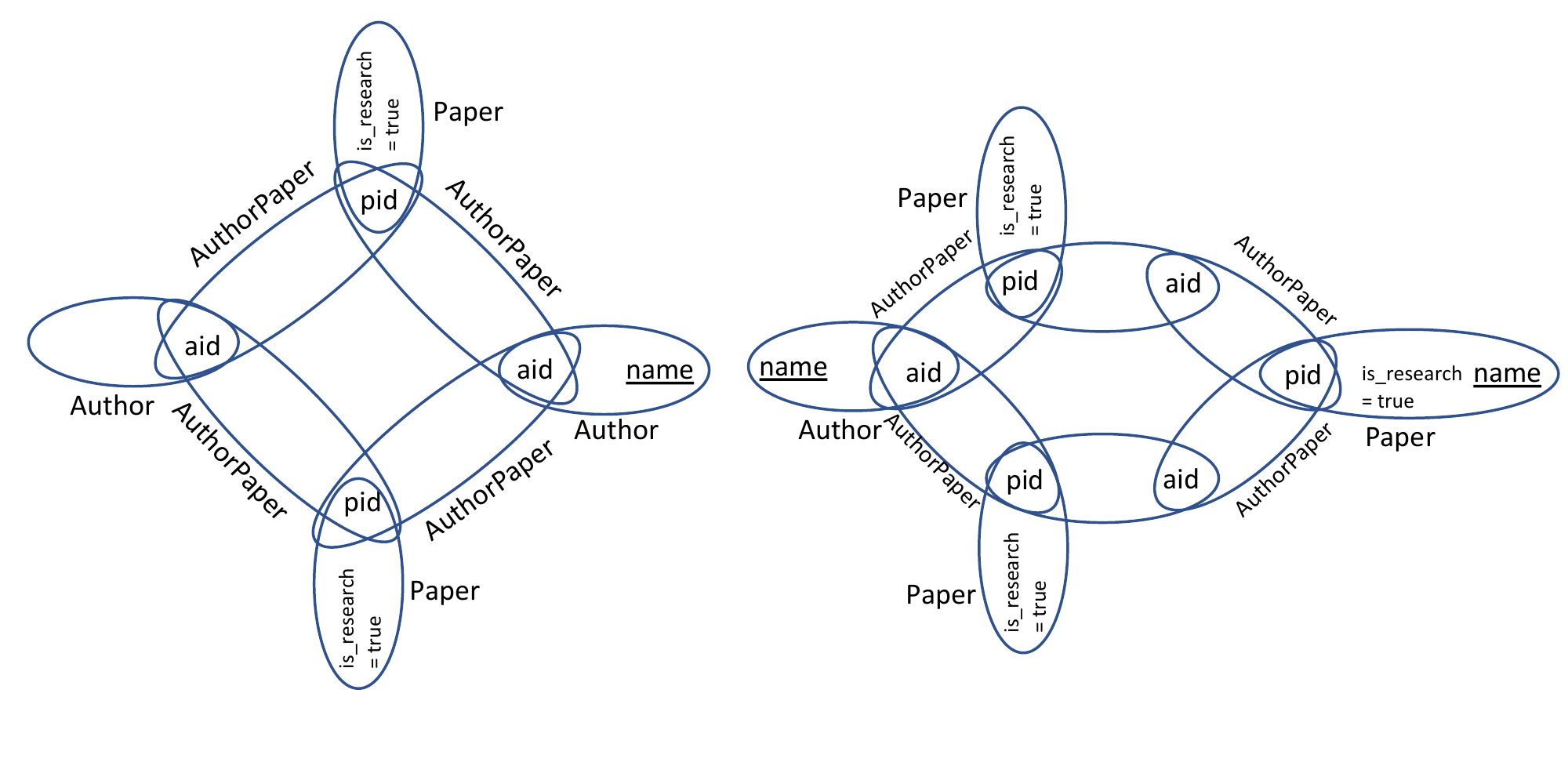}
		\caption{Four cycle query for \textsf{DBLP} (left) and Six cycle query for \textsf{DBLP} (right)}
	\end{subfigure}
	\begin{subfigure}{0.3\linewidth}
		\includegraphics[scale=0.5]{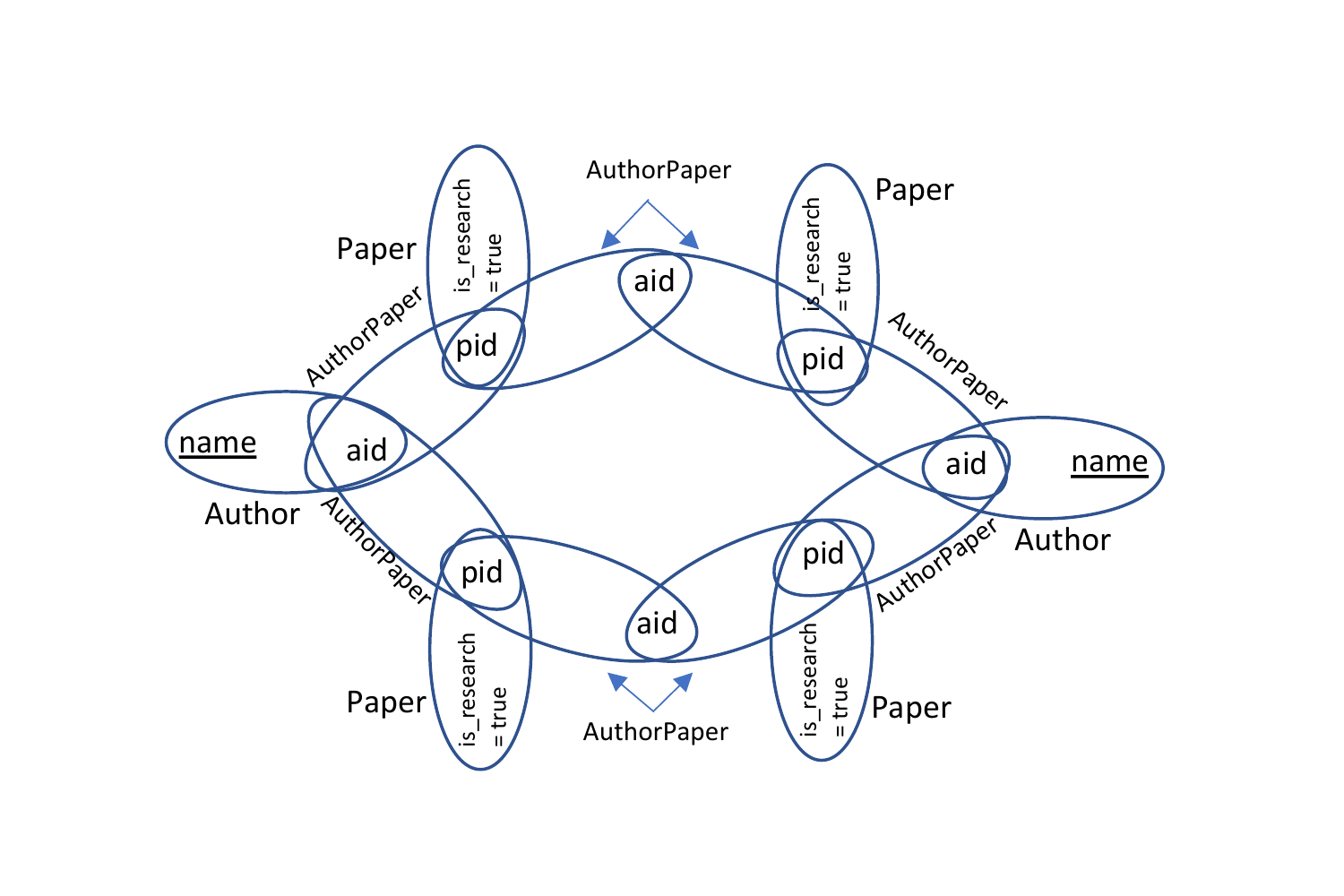}
		\caption{Eight cycle query for \textsf{DBLP}}
	\end{subfigure}	
\caption{Hypergraphs for cyclic queries} \label{fig:cyclic}
\end{figure*}

\introparagraph{Measuring the delay} Our final experiment in this subsection evaluates how many priority queue operations are required for each answer tuple. Recall that for each answer, our algorithm performs $O(|D|)$ priority queue operations in the worst-case (\autoref{thm:general}). Thus, the number of priority queue operations is a proxy for the delay between any two answers. \autoref{delay} plots the fraction of answers that require a certain number of PQ operations for $\mathsf{DBLP}_{2\mathsf{hop}}$. Nearly $70\%$ of answers require only one PQ push and pop operation and $99\%$ of the output requires at most $22$ push and pop operations. However, there are some outliers as well. The largest number of push and pops over all answers is $306$. This is not particularly surprising given the fact that DBLP is a sparse dataset but dense matrix datasets can easily have a large duplication in practice. For the IMDB dataset, we found that $95\%$ of answers require one priority queue operation and $99\%$ of answers needed at most three operations.

\begin{figure}
	\begin{subfigure}{0.4\linewidth}
		\includegraphics[scale=0.2]{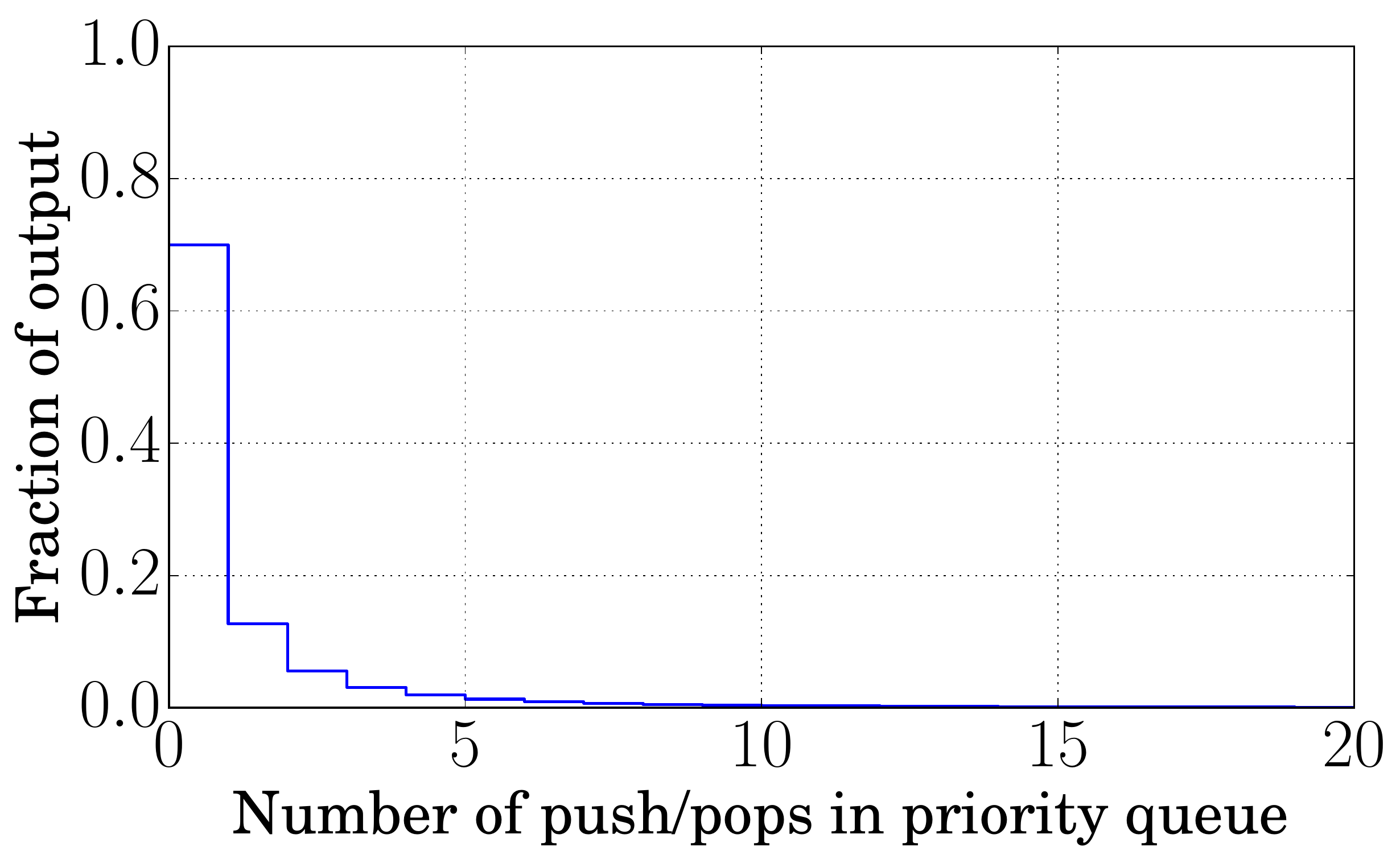}
		\caption{$\mathsf{DBLP}_{2\mathsf{hop}}$\pq\ operations per answer}  \label{delay}
	\end{subfigure}
	\begin{subfigure}{0.5\linewidth}
			\scalebox{1.0}{
			\begin{tabular}{lcccc}\toprule
				& $\mathsf{k}=10$  & $\mathsf{k}=10^2$ & $\mathsf{k}=10^3$ & $\mathsf{k}=10^4$ \\\midrule
				\textbf{four cycle}    & 25.2s & 26.8s & 29.6s & 37.4s   \\
				\textbf{six cycle}    & 323.8s & 325.7s & 339.2s & 360.4s   \\
				\textbf{eight cycle}    & 595.6s & 605.3s & 609.4s & 620.8s   \\		
				\textbf{bowtie}    & 598.2s & 604.7s & 610.89s & 625.8   \\\bottomrule
		\end{tabular}}
		\caption{Cyclic query performance on \textsf{IMDB} dataset for different values of $k$ in \sqlhighlight{LIMIT} clause. Ranking function is sum.} \label{table:cyclic:imdb}
	\end{subfigure}
\caption{(left) \textsf{DBLP} empirical delay; (right) cyclic query performance on \textsf{IMDB} }
\end{figure}

\subsection{Large Scale Experiments}

\introparagraph{Lexicographic ordering} \autoref{fig:friendster:2path:lex},~\ref{fig:friendster:3path:lex},~\ref{fig:memetracker:2path:lex},~\ref{fig:memetracker:3path:lex} show the execution time of the ranked two and three hop queries on Friendster and Memetracker datasets. Similar to the small-scale datasets, lexicographic function is faster to evaluate than sum function but all baselines engines were unable to complete execution in the cutoff time.

\subsection{Cyclic Query Description}
In this subsection, we describe the cyclic queries used in the experimental evaluation.~\autoref{fig:cyclic} shows the hypergraphs for the cyclic queries on \textsf{DBLP} dataset. The queries are defined in a very similar way for all other real-world datasets as well.

\textbf{Four cycle.} Find all co-author pairs who have been on at least two research papers together ranked by sum of weights/lexicographic ordering.  

\textbf{Six cycle.} For every author, find all research paper such that the there exist at least two co-authors and the research paper contains both the co-authors. Rank all the output author, research paper pairs  by sum of weights/lexicographic ordering.  

\textbf{Eight cycle.} Find all co-author of co-author pairs who have been on at least two different research papers together ranked by sum of weights/lexicographic ordering.  

\textbf{Bowtie.} Consider the materialized view $V(a_1, a_2)$ for the eight cycle query. The bowtie query enumerates $Q = \pi_{a_1,a_3} (V(a_1, a_2) \Join V(a_2, a_3))$ ranked by sum of weights/lexicographic ordering.

\end{document}